\DeclareMathOperator*{\argmin}{arg\,min}
\newtheorem{theorem}{Theorem}
\newtheorem{lemma}[theorem]{Lemma}
\newtheorem{proposition}[theorem]{Proposition}
\newtheorem{assumption}{Assumption}
\newtheorem{remark}{Remark}
\newtheorem{example}{Example}
\renewcommand{\vec}[1]{\overrightarrow{\mkern-4mu #1 }} 
\renewcommand{\vv}[1]{\vec{#1}}
\newcommand{\gen}[1][]{s_{#1}}
\newcommand{\genLim}[1][]{\overline{\mkern-1mu s}_{#1}}
\newcommand{\dem}[1][]{t_{#1}}
\newcommand{\flow}[1][]{f_{#1}}
\newcommand{\flowTot}{\bm{f}}
\newcommand{\flowLim}[1]{\overline{\,\mkern-5mu f}_{#1}}
\newcommand{\flowLimVec}{\bm{\overline{\,\mkern-4mu f}}}
\newcommand{\PR}[1]{\mathbb{P}\left(#1\right)}
\newcommand{\R}[0]{\mathbb{R}}
\newcommand{\N}[0]{\mathbb{N}}
\newcommand{\EE}[0]{\mathcal{E}}
\newcommand{\V}[0]{\mathcal{V}}
\newcommand{\G}[0]{\mathcal{G}}
\newcommand{\K}[0]{\mathcal{K}}
\newcommand{\nv}[0]{|\V|}
\newcommand{\nee}[0]{|\EE|}
\newcommand{\nk}[0]{|\K|}
\newcommand{\demM}[0]{\bm{T}}
\newcommand{\genM}[0]{\bm{S}}
\newcommand{\flowM}[0]{\bm{F}}
\newcommand{\incM}[0]{\bm{B}}
\newcommand{\e}[1]{\textbf{e}_{#1}}
\newcommand{\paretoI}[1]{\vv{X}_{\mkern-7mu#1}}
\newcommand{\pareto}[0]{\vv{X}\mkern-2mu}
\newcommand{\flowCost}[0]{c_f}
\newcommand{\flowMOpt}[1]{\bm{F}^*(#1)}
\newcommand{\genMLim}[0]{\bm{\overline{\mkern-2muS}}}
\newcommand{\OO}[1]{\mathcal{O}\mkern-2mu\left(#1\right)}
\newcommand{\exc}[1]{\phi_{#1}}
\newcommand{\constCostFunction}[1]{w_{#1}}
\newcommand{\boundZ}{m_Z}
\newcommand{\detZ}{z}
\newcommand{\failFactor}{\psi}
\newcommand{\cascadeSeqRand}{C}
\newcommand{\cascadeSet}{\mathcal{\cascadeSeqRand}}
\newcommand{\cascadeSeqDet}{c}
\newcommand{\cascadeCost}{Z}
\title{Heavy tails in dynamic flow networks\\
\large universal explanation of their emergence

}
\author{
  Agnieszka Janicka$^1$, Fiona Sloothaak$^{1}$, Maria Vlasiou$^{2}$, and Bert Zwart$^{1,3}$ \\
  $^1$\textit{Eindhoven University of Technology, 5612 AZ Eindhoven, the Netherlands}\\
  $^2$\textit{University of Twente, 7522 NB Enschede, the Netherlands}\\
  $^3$\textit{Centrum Wiskunde and Informatica, 1098 XG Amsterdam, the Netherlands}
}
\begin{document}
\maketitle
\begin{abstract}%
Overload-induced cascading failures can cause extreme disruptions in a wide range of networked systems, such as power grids, transportation networks, or financial systems. Empirical studies across domains report that the size of such disruptions often follows a Pareto- or heavy-tailed distribution. While many models reproduce this scaling behavior, they are either tailored to specific domains or based on simplified mechanisms that overlook key aspects of overload cascading behavior. Hence, a general understanding of the mechanisms driving scale-free behavior in these settings remains incomplete. In this paper, we develop a universal and analytically tractable model of overload cascading failures on flow networks, offering a new perspective on how Pareto-tailed disruptions emerge across networks. Our framework shows, under mild assumptions, that heavy-tailed disruptions can arise naturally from Pareto-tailed external inputs, and it establishes a transformation law linking the input and output tail exponents. We further identify broad conditions under which the resulting cascade cost exhibits a heavy-tailed distribution and show that the mechanism is robust across several domains, including power transmission, traffic networks, and processing systems. Our results provide a unified explanation for the emergence of scale-free failures in overload-driven systems and connect previously disparate, application-specific models under a unified framework.
\end{abstract}%

\keywords{Cascading failures, Heavy-tailed distributions, Flow networks, Pareto tail, Scale-free behavior} 

\section{Introduction}

Network models are often used to analyze the behavior of complex systems such as energy grids, financial markets, or transportation networks. Subsequent component failures, called \textit{cascading failures}, are prevalent in such systems and widely studied \citep{Guo2017,Huang_Vodenska_Havlin_Stanley_2013,Daqing_2014, Watts2002, Crucitti2004,Mahdi2017}. \deleted{Cascading failures also occur in multilayered systems, e.g., railway and highway networks, where failures in one system cause increased load or congestion in the other and may lead to their failure.} \replaced{Such f}{F}ailures are typically classified into two distinct groups: \textit{structural} and \textit{overload} failures. Structural failures spread locally when the failure of one component destabilizes its neighborhood, causing subsequent failures of the weakened components. These type of failures are typically seen in epidemics or wildfires \citep{manzano2014epidemic, Pastor2001, Manzano2013, VC2011, Turcotte_Malamud_Guzzetti_Reichenbach_2002}. Overload failures, which are the subject of this work, need not be local and occur when the capacity of a component is exceeded; that is, an overload occurs, causing redistribution of flow and consequent overloads \citep{CNCFreview}. Such cascades \added{commonly }emerge in man-made networks such as power systems, traffic networks, financial markets, communication networks, and more \citep{Ren2018, Elliott_Golub_Jackson_2012, IoTReview2021}. In this work, we develop a model of overload cascades to study how local component failures may lead to critical system-wide disruptions.

\deleted{Cascading failures can \replaced{lead to}{cause} critical system-wide disruptions. }In many complex systems, there is empirical evidence suggesting that the size of disruptions exhibits a Pareto (a.k.a.\ power-law) or heavy tail. A random variable $X$ is Pareto-tailed with tail parameter $\alpha>0$ if
\begin{equation}
\label{def:pareto-tailedInformal} \PR{X>x} \approx Kx^{-\alpha}\quad\text{as}\quad x\rightarrow \infty,
\end{equation}
for some $K>0$.
Such scaling behavior has been observed in several domains. In power transmission systems, cascading failures lead to Pareto-tailed  blackouts \citep{chen2001analysis}. In traffic networks, failures represent road congestion and cause widespread traffic jams. As seen in the literature, various performance measures of traffic jams, such as one-dimensional average traffic jam length \citep{Janicka_Sloothaak_Vlasiou_Zwart_2025} or multidimensional spatio-temporal performance measures \citep{Zhang_Zeng_Li_Huang_Stanley_Havlin_2019}, have a Pareto-tail distribution. Other examples of heavy-tailed disruptions include reinsurance losses \citep{ ibragimov2015heavy}, systemic risk in finance \citep{Liu_Yang_2021,Thurner_Farmer_Geanakoplos_2010}, delay times of flights, \citep{Fleurquin_Ramasco_Eguiluz_2013}, or area burned in wildfires \citep{Moritz_Morais_Summerell_Carlson_Doyle_2005}. Under these scaling laws, the occurrence of extreme disruptions is non-negligible; therefore, understanding the emergence of these phenomena is imperative.

Many models have been developed to explain the emergence of Pareto-tailed disruptions in complex systems. Some aim to provide a universal explanation, while others are application-specific. The latter models seek to capture the particular dynamics of the system in question, such as the physical laws of electricity distribution in power grids or human driving behavior in traffic networks \citep{Dobson_Carreras_Lynch_Newman_2007, Gang_Jia_Herniter_2022, Beggs_Plenz_2003, PhysRevLett.68.1244, Bak_Paczuski_Shubik_1997, PhysRevLett.69.1629, LI20211, Blanchet_Shi_2012}. In contrast, universal models aim to uncover global mechanisms without fine-tuning to specific dynamics of the system. Prominent examples include the critical branching model \citep{Branching2002}, the sandpile model \citep{SoC1987,Dhar1999}, percolation models \citep{Barabasi2000,LI20211,Watts2002}, and load-capacity models \citep{Motter2002}.

While the literature is rich, there is no consensus on whether existing universal models fully explain the observed scale-free behavior or if other unexplored mechanisms are involved. Existing models provide important insight, but have limitations --- especially in capturing the dynamics of overload failures. Specifically, the critical branching model is mathematically elegant and well understood, but assumes simple dynamics and is limited to the scaling exponent $\alpha = \frac{1}{2}$ in \eqref{def:pareto-tailedInformal}. The sandpile model shows how a system can organize itself to a critical state where cascade sizes follow a power law. While this approach inspired models in 
many fields \citep{PhysRevLett.69.1629, Gang_Jia_Herniter_2022, Beggs_Plenz_2003, PhysRevLett.68.1244, Bak_Paczuski_Shubik_1997}, it cannot capture the non-local overload dynamics characteristic of many real-world systems. Percolation models explain the emergence of power-law cascade sizes from the underlying graph topology, using a simple percolation rule. 
 This allows for in-depth analytical understanding. Again, cascades occur locally, making the model an excellent tool for studying structural cascading failures, such as the spread of a disease or wildfires \citep{LI20211}, but not suitable for a non-local overload process that requires a notion of capacity. Load-capacity models overcome this limitation by allowing overloaded components to fail and redistribute their load. Their scale-free cascade properties arise from the assumed scale-free graph topology. However, because these models depend on predetermined network structures and a simple load-redistribution rule (e.g., shortest paths), they may be unrealistic for some real-world applications. Thus, while Pareto-tailed disruptions are well studied in the context of structural failures and scale-free topologies, a universal understanding of overload-driven disruptions remains incomplete.

Recently, an alternative explanation of Pareto-tailed disruptions has been proposed. \cite{Nesti_2020} developed an application-specific model of power transmission networks, where the Pareto-tailed distribution of blackouts is inherited from the Pareto-tailed distribution of power demand, driven by city sizes. This simple yet novel explanation matches the empirical findings on city sizes and blackouts \citep{Nesti_2020}. Inspired by this model, a second application-specific model has been developed to explain Pareto-tailed traffic jams, where the notion of demand --- traffic intensity --- also follows an empirical power law \citep{Janicka_Sloothaak_Vlasiou_Zwart_2025}. Notably, while the specific dynamics of the two models differ significantly, both exhibit the same emergence phenomenon, suggesting the presence of a more general underlying principle. However, because each model is tailored to a particular domain, they cannot, on their own, establish the generality of this explanation. In this paper, we offer a universal model for overload cascading failures based on the same underlying principle, explaining how Pareto-tailed disruptions emerge from Pareto-tailed external inputs.

Our approach balances abstraction with application-level realism, making our model both broadly applicable and analytically tractable. We capture essential properties of flow and allocation of resources in networks using the principle of least action \citep{landau1960mechanics, Singh2022}, which states that systems tend to be organized in a manner minimizing total energy dissipation \citep{hess14, Doyle_Snell_1984, Murray1926}. This approach can model versatile flow patterns like Kirchhoff's laws \cite[ch.~8]{wood2013power} in power networks or Wardrop equilibria \citep{Janicka_Sloothaak_Vlasiou_Zwart_2025} in traffic, and holds for any network topology. This overcomes the limitations of other universal models, such as sandpile, percolation, or load-capacity models. Crucially, we leverage the persistence of the Pareto-tailed input behavior, which implies that detailed microdynamics of the model do not substantially affect the Pareto-tailed nature of large disruptions. This approach enables us to derive rigorous distributional results that apply across diverse types of networks. Consequently, we strengthen the emergence hypothesis of \cite{Nesti_2020, Janicka_Sloothaak_Vlasiou_Zwart_2025}, elevating it from an application-specific explanation to a universal principle for systems with cascading overloads.

Our main contribution is to explain, via a universal and analytically tractable model, how heavy-tailed disruptions can emerge in a wide range of networked systems. Specifically, we:
\begin{enumerate}
    \item develop a universal cascading framework for overload flow networks, applicable across domains;
    \item prove that the disruptions inherit a Pareto tail from the input distribution, and identify the transformation law governing the new tail exponent;
    \item establish widely applicable sufficient conditions for the universal scale-free behavior;
    \item demonstrate the versatility of the framework through applications to power\deleted{ transmission}, traffic, and processing networks --- highlighting a possible existence of a unifying mechanism behind extreme disruptions.
\end{enumerate}
These contributions have important implications for mitigating extreme cascades. Since the heavy-tailed behavior is inherited from the input distribution, reducing the tail heaviness is difficult without modifying that distribution itself. Network upgrades --- such as capacity increases or dynamic routing policies --- can reduce the multiplicative constant $K$ in \eqref{def:pareto-tailedInformal}, but generally do not affect the tail exponent $\alpha$, so the potential for extreme events persists. Nevertheless, decreasing this constant can still yield meaningful improvements in large networks. Furthermore, our results can be used to rank network components by their likelihood of triggering large cascades, which can guide the placement of early detection and targeted intervention mechanisms in the most critical parts of the system.

In the remainder of this section, we discuss our contributions in more detail. Our model assumes a fixed graph, where commodities flow through the edges from source vertices to sink vertices. \textit{Sink requirements}, that is, the demand for a commodity at each sink, are functions of \textit{vertex weights} that follow a Pareto-tailed distribution. \textit{Source production}, derived as some function of sink requirements, satisfies three constraints: \textit{production-requirement balance}, \textit{production capacity}, and \textit{flow capacity}.
The flow is distributed according to some parametrized mechanism that ensures commodity balance at each vertex. A cascade of failures is a discrete stochastic process on this flow network. We assume a probabilistic failure mechanism, where overloaded edges can fail, resulting in a capacity decrease of the edge. The disruptive impact of the cascade on the system is quantified using the \textit{cascade cost} function, which is determined after the cascade terminates. A detailed description of this framework is provided in Section \ref{modelDescription}.

To make our framework analytically tractable, we impose mild continuity and scalability assumptions on the cascade cost and on the cascade probability (see Assumption \ref{assumption}). Our first key result, stated in Proposition~\ref{sbj}, identifies the most likely cause of a large cascade cost. In particular, it shows that large cascade costs typically occur when one vertex has a significantly larger weight than the others. Our second key result, Theorem~\ref{mainThm}, builds on this notion and shows that, under a certain condition on the vertex weight vector, the tail distribution of the cascade cost inherits the scale-free tail from the vertex weight distribution. However, the corresponding tail exponent in \eqref{def:pareto-tailedInformal} changes from $\alpha$ to $\alpha/\delta$, where $\delta$ is a parameter that depends on the choice of the cascade cost function. This aligns with the results in \citep{Nesti_2020, janicka2024scalefree} for power networks and in \citep{Janicka_Sloothaak_Vlasiou_Zwart_2025} for highway traffic networks, both of which show such a scaling behavior for particular cost functions with $\delta = 1$. However, the results in this paper are more general: they hold for a class of parameterized cascade mechanisms and apply to a wider range of network types and cost functions.

While our results are broadly applicable, Assumption~\ref{assumption} can be nontrivial to verify, as it requires understanding how the cascade cost depends on the initial conditions. To address this, in Section~\ref{sec:satisfiability}, we show that this assumption holds for several broad classes of parameterized modeling choices for source production, flow distribution, and cascade cost functions. Importantly, the proposed modeling choices are both physically interpretable and consistent with mechanisms commonly used in flow network models. The corresponding results are stated in Propositions~\ref{propClassFunc1} and~\ref{propClassFunc2}, with detailed proofs provided in Appendices~\ref{proofs} and \ref{subsec:proofsPropositions}. These proofs are technical in nature and analyze the behavior of optimal solutions to a family of parametric convex optimization problems arising at each stage of the cascade. Scalability is established through transformations of the optimization problems, while continuity is proved via a contradiction argument based on compactness and a carefully constructed sequence of feasible solutions. The complete argument proceeds inductively and characterizes how the cascade dynamics respond to changes in the input distribution. This proof strategy may also be employed to verify Assumption~\ref{assumption} for other classes of modeling choices beyond those considered in this paper.


Lastly, to illustrate the universality of our modeling framework, Section~\ref{Applications} presents three examples: a power transmission network, a highway traffic network, and a processing network. For each case, we detail how the system can be modeled within our framework, emphasizing the specific choices of source production, flow distribution, and cascade cost functions that best capture the network's behavior. We then verify that the sufficient conditions from Section~\ref{sec:satisfiability} are met
. These examples not only confirm the versatility of our framework but also demonstrate that heavy-tailed cascade costs can emerge robustly across domains.

\section{Multi-commodity cascade model of flow network} \label{modelDescription}
In this section, we introduce a universal overload cascading failure model for flow networks. We first present notational conventions used throughout the paper. Next, we define the flow network architecture, where we introduce the static structure of the network. Finally, we explain the temporal dynamics of the system, detailing how the network operates at each time step. The dynamics are divided into three stages: a planning stage, where flow capacities are set based on anticipated demand; an operational stage, representing the typical network behavior; and an emergency stage, where cascading failures occur due to overloads. 
\subsection{Notational conventions} \label{notation}
Throughout this paper, we adopt the following notation and conventions.
Let $\N$ denote the set of natural numbers excluding 0, $\R$ the set of real numbers, and $\R_+=\{x\in \R:x\geq 0\}$ the set of non-negative reals. For any positive integer $n$, we denote by $[n]$ the set $\{1,2,\dots,n\}$. Other sets are typically denoted using calligraphic capital letters, i.e., $\mathcal{A}$. The set size is denoted by $|\mathcal{A}|$, and the power set of $\mathcal{A}$ by $2^{\mathcal{A}}$. 

Vectors in $\R^n$ are denoted by bold lowercase letters, such as $\bm{x} = (x_1,\dots, x_n)$, where $x_i$ represents the $i$-th component of $\bm{x}$. The diagonal matrix with the vector $\bm{x}$ on the main diagonal is denoted by $\text{diag}(\bm{x})$. Matrices are represented by bold uppercase letters, such as $\bm{A} = [a_{i,j}]\in \R^{n\times m}$ with $i\in [n]$ and $j\in [m]$. 
The transpose of a vector or matrix $\bm{A}$ is written as $\bm{A}^T$. The identity matrix of size $n$ is denoted by $\bm{I}_n$, while $\e{n}$ denotes the all-ones column vector of size $n$ and $\e{{i,n}}$ denotes the $i$-th unit vector of size $n$, that is, the column vector that has a 1 in the $i$-th position and 0 otherwise. 

We say that a function $h:$ $\R^n \rightarrow \R^m$ is \textit{$\delta$-scale-invariant} with respect to (w.r.t.) $\bm{x}$ if for any $\bm{x}\in \R^n$,
\begin{equation}
h(\omega \bm{x}) = \omega^\delta h(\bm{x}) \quad \text{for all}\quad \omega>0.        \label{eq:deltaScaleInv}
    \end{equation}
In this paper, the terms \textit{1-scale-invariant} and \textit{scale-invariant} are used interchangeably. Informally, a function is scale-invariant, if a change in scale only affects its output proportionally, without changing the fundamental nature of the function itself. 
Furthermore, we say that a function $h:$ $\mathcal{D}\subseteq \R^n \rightarrow \R^m$  is \textit{sequentially right continuous (SRC)} if for every $\bm{x}^*\in \mathcal{D}$ and  and any sequence $\{\bm{x}_i\}_{i \in \mathbb{N}}$ satisfying  $\bm{x}_i\geq \bm{x}^*$ for all $i\in \N$ and $\bm{x}_i\rightarrow \bm{x}^*$ as $i\rightarrow \infty$, 
\begin{equation}\label{eq:SRC}\lim_{i\rightarrow \infty} h(\bm{x}_i) = h(\bm{x}^*).\end{equation}
Note that \textit{sequential right continuity} is a weaker form of \textit{continuity}, meaning that every continuous function is SRC. Furthermore, for two functions $g(x)$ and $h(x)$ we say that $g(x) = \OO{h(x)}$ as $x\rightarrow \infty$ if and only if $\exists~N$, $x^*>0$ such that $|g(x)|\leq N|h(x)|$ for all $x\geq x^*$.

Finally, to distinguish deterministic evaluations of a random variable from the random variable itself, we adopt the following notation: for any random variable \(A\) that is \textit{fully determined} by random variables \(B_1\) and \(B_2\), we denote its deterministic outcome, given $B_1 = b_1$ and $B_2 = b_2$, by $a(b_1,b_2)$. In particular, using slightly informal notation,
\begin{equation}
    a(b_1, b_2) := A \mid (B_1 = b_1, B_2 = b_2).
    \label{conditionalRV}
\end{equation}
\subsection{Flow network architecture} \label{basicComp}
In this section, we define the flow network architecture; that is, we introduce the static structure of the network and formally describe its components. The flow network, over which some commodities are transported, is represented by a directed connected graph $\G = (\V, \EE)$. Here, $\V$ and $\EE$ represent the set of vertices and edges of $\G$, respectively. The graph is described by its \textit{incidence matrix} $\incM\in \{0,1,-1\}^{\nv\times\nee}$, given by
\[b_{v,e} := \begin{cases} 1, & \text{if edge } e\in \EE \text{ enters vertex } v\in \V,\\
-1, & \text{if edge } e\in \EE \text{ exits vertex } v\in \V,\\
0, &\text{otherwise.}
\end{cases}\]

The set of all commodities is denoted by $\K$. Each vertex $v\in \V$ has a stochastic \textit{weight} $\paretoI{v}$ and the vector of all weights is denoted as $\pareto\in \R_+^{\nv}$.  
We assume that all $\paretoI{v}$'s are mutually independent and follow a Pareto-tailed distribution with scale parameter $\alpha>0$ and constant $K>0$, i.e.,
\begin{equation}
\PR{\paretoI{v}>x} \sim Kx^{-\alpha}, \quad x\rightarrow \infty,\quad \text{ for all } v\in \V,     \label{def:pareto-tailed}
\end{equation}
where we imply that $f(x)\sim g(x)$ as $x\rightarrow \infty \iff \lim_{x\rightarrow \infty} f(x)/g(x) = 1$. This choice is motivated by the fact that in many physical networks, the vertices correspond to quantities whose distribution often has a Pareto tail, such as city sizes, data packets, or insurance claims \citep{Berry1961, crovella1998heavy, nair_wierman_zwart_2022}. 

Each vertex $v\in \V$ both \textit{requires} and \textit{produces} a certain amount of commodity $k\in \K$, given by $\dem[v,k]$ and $\gen[v,k]$, respectively. Moreover, the production $\gen[v,k]$ cannot exceed the production capacity $\genLim[v,k]$. Inspired by the terminology in optimization theory, we refer to $\demM = [\dem[v,k]]_{v\in\V, k\in \K}$ and $\genM = [\gen[v,k]]_{v\in \V, k\in \K}$ as the \textit{sink requirement} and the \textit{source production matrices}, respectively, while $\genMLim = [\genLim[v,k]]_{v\in\V,k\in\K}$ is called the \textit{production capacity matrix}. \replaced{Importantly,}{It is important to note that} the matrices $\demM$, $\genM$, and $\genMLim$ are not exogenously given but all depend on the vertex weight vector $\pareto$. The exact dynamics\deleted{,} as \deleted{explicit or implicit }functions of $\pareto$\deleted{,} are stated in Section~\ref{subsec:temporalDynamics}.  

At each vertex \( v \in \V \) and for each commodity \( k \in \K \), the netput requirement is defined as
\(
u_{v,k} := \dem[v,k] - \gen[v,k],
\)
 which yields the \textit{netput matrix} $\bm{U}:= \demM - \genM$.
 If \( u_{v,k} > 0 \), vertex \( v \) has a \emph{shortage} of commodity \( k \); if \( u_{v,k} < 0 \), it has a \emph{surplus}. These imbalances give rise to a network flow: commodities are transported from surplus vertices to those with shortages. The \emph{flow matrix} \( \flowM \in \mathbb{R}^{\nee \times \nk} \) specifies the amount of each commodity flowing along each edge. The \emph{total flow vector} \( \flowTot := \flowM \e{_{|\K|}} \in \mathbb{R}^{\nee} \) captures the aggregate flow on each edge. Each edge has a capacity, represented by the \emph{edge capacity vector} \( \flowLimVec \in \mathbb{R}_+^{\nee} \). If the flow on an edge exceeds its capacity, the edge may fail, as discussed in Section~\ref{emergencyPhase}.



Having defined the main components of the flow network, we are now ready to describe the temporal dynamics, that is, how the components behave at each time step $t\in \N$.

\subsection{Temporal dynamics} \label{subsec:temporalDynamics}
In this section, we describe the dynamics of the model at a given time step $t\in \N$. The model consists of three phases: planning, operational, and emergency. The discrete time $t$ represents the current phase of the process. The \textit{planning} phase occurs at $t = 0$ and mimics the network design process. The \textit{operational} phase occurs at $t = 1$ and reflects how the underlying network typically operates. At the end of this phase, a random disruption triggers a cascade, marking the start of the \textit{emergency} phase, which occurs at $t\geq 2$ and models the behavior of the network during the cascade. Note that at each time step, the model components defined in Section~\ref{basicComp}, such as $\demM$, $\genM$, or $\flowM$, take different values; hence, we use the superscript notation $(t)$ to mark the current time step. Here, we emphasize that all these components can be viewed as \textit{deterministic} mappings of the exogenously given \textit{probabilistic} vertex weight vector $\pareto$ and the \textit{probabilistic} cascade sequence $\cascadeSeqRand$, defined in Section~\ref{emergencyPhase}. Specifically, for $t=0$ and $t=1$, the behavior of the model depends solely on $\pareto$, while for $t\geq 2$, it also depends on the cascade sequence $\cascadeSeqRand$, which governs the progression of the emergency phase.

We assume that the flow matrix $\flowM^{(t)}\in\R^{\nee\times \nk}$ is given by a matrix function $\flowM^*$ of the netput matrix $\bm{U}^{(t)}\in\R^{\nv\times \nk}_+$ and the edge capacity vector $\flowLimVec^{(t)}\in\R^{\nee\times \nk}$. Specifically,
\begin{subequations} \begin{gather}\tag{F}\flowM^*: \R^{\nv\times \nk}_+\times \R^{\nee} \rightarrow \R^{\nee\times \nk}  \label{flow} \\
\text{s.t.:}\quad \incM \flowM^* = \bm{U}, \tag{F1} \label{flowBalance}
\end{gather}
\end{subequations}
where Constraint \eqref{flowBalance} asserts that each vertex $v$ receives precisely the amount of commodity $k$ it requires.

Furthermore, the production capacity is considered constant with respect to time $t$ and non-pathological, meaning that, for each commodity, the system always has sufficient capacity to meet the demand at time $0$. Specifically, the production capacity matrix $\genMLim$ is some function $\genMLim(\demM^{(0)}): \R^{\nv\times \nk}_+ \rightarrow \R^{\nv\times \nk}_+$ satisfying
\begin{equation}\label{ineqGenLim}\e{_{\nv}}^T \genMLim(\demM^{(0)}) \geq \e{_{\nv}}^T\demM^{(0)}.\end{equation}
\subsubsection{Planning and operational phase}
The dynamics in the planning and operational phases are similar, which is why we describe them simultaneously.  We assume that the sink matrix, representing the requirement for commodities, is the same in both the planning and operational phases, i.e.,  $\demM^{(0)} = \demM^{(1)}$. Furthermore, we assume that the requirement for commodities is proportional to the vertex sizes. Specifically, $\dem[v,k]^{(0)} = q_{v,k}\paretoI{v}$ for some $q_{v,k} \geq 0$, representing the fraction of the vertex size $v$ that requires commodity $k$. To express this compactly, we collect the coefficients $q_{v,k}$ into the matrix $\bm{Q}:= [q_{v,k}]_{v\in \V, k\in \K}\in \R_+^{\nv\times\nk}$, which encodes the relative requirement for each commodity at every vertex. Then, in matrix notation, we have that $\demM^{(0)}(\pareto) =
    \text{diag}(\pareto)\cdot \bm{Q}$, and in particular,
\begin{equation}\label{eq:defT0}\demM^{(0)}(\pareto) = \begin{pmatrix}
    q_{1,1}\paretoI{1} & \cdots & q_{1,\nk}\paretoI{1}\\
    q_{2,1}\paretoI{2} & \cdots & q_{2,\nk}\paretoI{2}\\
    \vdots &  & \vdots \\
    q_{\nv,1}\paretoI{\nv}& \cdots & q_{\nv, \nk}\paretoI{\nv} 
\end{pmatrix}.\end{equation}

 The source matrices $\genM^{(0)}$ and $\genM^{(1)}$, while not necessarily equal, are both given by some function $\genM^*$ of the sink matrix $\demM$, the production capacity matrix $\genMLim$, the edge capacity vector $\flowLimVec$, and the flow capacity slack parameter $\lambda>0$. In particular,
\begin{subequations}
\begin{gather}
\genM^*: \R^{\nv\times \nk}_+\times \R^{\nv\times \nk}_+  \times \R_+^{\nee} \times \R_+/\{0\}\rightarrow \R^{\nv\times \nk}_+  \tag{S}\label{production}\\
\text{ s.t.:}\quad\e{_{\nv}}^T \genM^*= \e{_{\nv}}^T \demM,\tag{S1}\label{prodBalance}\\
\qquad~~\genM^*\leq \genMLim, \tag{S2} \label{productionCap}     \\
\qquad~~|\flowM(\demM - \genM^*, \flowLimVec)\e{_{|\K|}}|\leq \lambda \flowLimVec.\tag{S3} \label{flowCap}
\end{gather}
\end{subequations}
Constraint \eqref{prodBalance} ensures that for each commodity $k\in \K$, the total source production is equal to the total sink requirements. 
Constraint \eqref{productionCap} states that the source production of each commodity at each vertex does not exceed \replaced{its}{the corresponding} production limit. Lastly, Constraint \eqref{flowCap} guarantees that the total flow on edges does not exceed a fraction $\lambda$ of the corresponding edge capacities. Note that if $\lambda$ is not sufficiently large, Constraints \eqref{productionCap} and  \eqref{flowCap} can be mutually exclusive. Hence, we assume that $\lambda$ is chosen such that $\genM$ exists for all values of $\pareto$. 

The source production matrix $\genM^{(0)}$ is determined as a proxy for $\genM^{(1)}$ in order to choose the operational edge capacities in the network in a manner that anticipates the edge flows in the operational phase. In the literature, such a proxy is typically given as the source production, subject to no edge capacity constraints \citep{Nesti_2020, Motter2002}. Therefore, we set $\flowLimVec^{(0)} = \infty$, $\lambda^{(0)} = 1$, and \begin{align}\begin{split}
\genM^{(0)}:&= \genM^*(\demM^{(0)}, \genMLim(\demM^{(0)}),\flowLimVec^{(0)}, \lambda^{(0)}) \\&= \genM^*(\demM^{(0)}, \genMLim(\demM^{(0)}),\infty, 1),    \label{eq:defPlanningProd}\end{split}
\end{align} as this choice implies that Constraint \eqref{flowCap} is satisfied for all matrices $\genM$, making the constraint irrelevant.

It remains to determine the operational edge flow capacity vector $\flowLimVec^{(1)}$ and the source production matrix $\genM^{(1)}$. Following the approach in \citep{Janicka_Sloothaak_Vlasiou_Zwart_2025}, the capacities (i) should satisfy some minimum capacity requirement $\overline{\mkern-3mu f}_{\min}$ and (ii) need to be large enough to sustain the flow that the graph is expected to experience in the operational phase. To satify these criteria, for every $e\in \EE$, we set \begin{equation}
    \label{flowCapacity}
    \flowLim{e}^{(1)} = \max\{\tau f_e^{(0)}, \overline{\,\mkern-4mu f}_{\min}\},
\end{equation}
where $\flowTot^{(0)}$ is the total planning flow vector, i.e., $\flowTot^{(0)} = \flowM^{(0)}\e{_{|\K|}}$. Further, $\tau\geq 1$ is the planning slack parameter and $\flowLim{\min} := \varepsilon_{\min}\sum_{v\in \V} \paretoI{v}$ for some $\varepsilon_{\min}>0$. The first criterion is met by setting the minimum capacity $\flowLim{\min}$ to scale linearly with the total vertex weights. Moreover, edge capacities scale proportionally with the planning flows, which is a standard approach in the literature \citep{Motter2002} and satisfies the second criterion. 

Finally, the operational source production matrix $\genM^{(1)}$ is given by
\(\genM^{(1)}:= \genM^*(\demM^{(1)}, \genMLim(\demM^{(1)}), \flowLimVec^{(1)}, \lambda^{(1)}).\) Note that the parameter $\lambda^{(1)}$ has a versatile modeling role. If $\lambda^{(1)}\in (0,1)$, it acts as a safety tuning parameter, ensuring that the edges are utilized at most up to a fraction $\lambda^{(1)}$ of their capacity. On the other hand, if $\lambda^{(1)}$ is chosen sufficiently large, then Constraint \eqref{flowCap} is always satisfied. This choice is suitable for applications where the edge capacity constraint is not taken into account when choosing the source production locations (see, for example, the highway traffic application detailed in Section \ref{trafficApp}).  

 The network operates in stability following the flows given by $\flowM^{(1)} = \flowM^*(\bm{U}^{(1)}, \flowLimVec^{(1)})$, until some disruption occurs, initiating a cascade of edge failures, i.e., the emergency phase.



\subsubsection{Emergency phase} \label{emergencyPhase}
In the emergency phase (${t} > 1$), the cascade failure process takes place. Each failure causes either a partial or a full edge capacity decrease. The former can be interpreted as congestion, causing a flow cost increase, while the latter is a complete component failure, rendering it unusable. This may result in flow redistribution and consequent failures, due to overloaded edges, i.e., with flow exceeding the capacity limit. This allows the failure process to propagate in the graph in the form of a cascade. At each time step, the following actions are taken:
\begin{enumerate}
    \item Update the capacity $\flowLimVec$ of edges that failed in the previous step. 
    \item If the graph is disconnected, restore the balance of source production and sink requirement (Constraint~\ref{prodBalance}) in each component. We say that a pair of vertices lies in the same component if there exists a path between them, consisting of edges with positive flow capacity. As a result of this rebalancing, any excess production or unsatisfied requirement within a component is discarded.
    \item Redistribute flows by computing the flow matrix $\flowM^*$ in the current setting, using Equation~\eqref{flow}. 
    \item Determine which \textit{overloaded} edges fail next. We say that an edge is overloaded if its relative flow exceedance, defined in \eqref{exceedance}, is greater than 1. A failure can be \textit{partial} (congestion) or \textit{complete} (edge removal). Conditional on the exceedance levels, failing edges are selected according to the per-edge probability rules $p_{e,c}$ (partial) and $p_{e,r}$ (complete), as specified in the detailed description of Step~4. If no edges fail, the cascade terminates; otherwise, return to Step~1.
\end{enumerate} 

Next, we formally introduce the exact cascade dynamics in the emergency phase. At the end of time step $t = 1$, a cascade is initiated by a failure of one or more edges according to some probability mass function $p^{(1)}$: $\{0,\mathrm{P},\mathrm{R}\}^{\nee} \rightarrow [0,1]$. Each vector $\bm{x}^{(1)} \in \{0,\mathrm{P},\mathrm{R}\}^{\nee}$ specifies a failure configuration: $x_i^{(1)}=0$ means edge $i$ does not fail at time $t=1$; $x_i^{(1)}=\mathrm{P}$ means edge $i$ undergoes a partial failure; and $x_i^{(1)}=\mathrm{R}$ means edge $i$ is removed from the graph (complete failure). Thus, $p^{(1)}(\bm{x}^{(1)})$ is the probability of the configuration $\bm{x}^{(1)}$: precisely the edges with $x_i^{(1)}=\mathrm{P}$ suffer a partial failure, those with $x_i^{(1)}=\mathrm{R}$ are removed, and those with $x_i^{(1)}=0$ do not fail at time $t=1$. To ensure that at least one edge fails, we require that $p^{(1)}(0, \dots, 0) = 0$. This constitutes the first stage of the cascade. 

Let $\cascadeSeqRand^{(t)}=P^{(t)}\cup R^{(t)}$ denote the set of failed edges at the end of time step $t$, where $P^{(t)}$ is the set of edges that suffer a partial failure and $R^{(t)}$ is the set of edges that are removed.
 This implies that $\cascadeSeqRand^{(1)} = P^{(1)} \cup R^{(1)}$ is the set of initial edge failures under the probability law $p^{(1)}$. For $t\geq 2$, the following cascade steps are taken:

\textbf{Step 1:}\quad  For all $e\in R^{(t-1)}$, we remove edge $e$ from the graph, which is equivalent to setting the capacity of edge $e$ to 0. For all $e\in P^{(t-1)}$, the edge flow capacity is updated; that is, the current capacity $\flowLim{e}^{(t)}$ is lowered using a capacity-decrease factor. Specifically, we introduce a non-increasing, continuous function $\failFactor_e^{(t)}:\R_+\to[l_e,1]$ with $l_e\in(0,1)$, and define the relative exceedance
\begin{equation}
\label{exceedance}
\exc{e}^{(t)} := \frac{\lvert \flow[e]^{(t)}\rvert}{\flowLim{e}^{(t)}}.
\end{equation}
Using this notation, capacities are updated according to
\begin{equation}
\label{flowCapFormula}
\flowLim{e}^{(t)} :=
\begin{cases}
0, & \text{if } e\in R^{(t-1)},\\[1mm]
\failFactor_{e}^{(t-1)}\!\big(\exc{e}^{(t-1)}\big)\,\cdot \flowLim{e}^{(t-1)}, & \text{if } e\in P^{(t-1)},\\[1mm]
\flowLim{e}^{(t-1)}, & \text{otherwise.}
\end{cases}
\end{equation}
The function $\failFactor_e^{(t-1)}$ ensures \deleted{that }the capacity reduction is bigger for larger exceedance levels, while remaining \deleted{strictly }positive for partial failures (\deleted{since }$\failFactor_e^{(t-1)}\geq l_e>0$); complete removals are handled \deleted{explicitly }through membership in~$R^{(t-1)}$. 

\textbf{Step 2:}\quad If the resulting graph is disconnected, we restore the balance of requirements and production in each connected component for each commodity (see Constraint \eqref{prodBalance}), by proportionally reducing the larger of the two quantities: that is, we lower production when it exceeds the total requirement, or reduce the requirement when it exceeds the available production. In particular, for a connected component $\widetilde{\G} = (\widetilde{\V}, \widetilde{\EE})$ of $\G$, every commodity $k\in \K$, and every $v\in \widetilde{\V}$, the sink and source matrices $\demM^{(t)}$ and $\genM^{(t)}$ are obtained using the following formula:

\begin{equation} \label{demGenCascade}
\left(\dem[v,k]^{(t)},\gen[v,k]^{(t)}\right) =
\begin{cases}
  \left(\dem[v,k]^{(t-1)}, \eta_k^{(t-1)}\left(\widetilde{\V}\right)\cdot\gen[v,k]^{(t-1)}\right)
    & \text{if }\sum_{v\in\widetilde{\V}}\dem[v,k]^{(t-1)}
            < \sum_{v\in\widetilde{\V}}\gen[v,k]^{(t-1)},
  \\
\left(\left(\eta_k^{(t-1)}\left(\widetilde{\V}\right)\right)^{-1}\cdot \dem[v,k]^{(t-1)}, \gen[v,k]^{(t-1)}\right),
    &\text{if }\sum_{v\in\widetilde{\V}}\dem[v,k]^{(t-1)}
            > \sum_{v\in\widetilde{\V}}\gen[v,k]^{(t-1)},
  \\
  \left(\dem[v,k]^{(t-1)}, \gen[v,k]^{(t-1)}\right),
  &\text{otherwise,}
\end{cases}
\end{equation}

where $\eta_k^{(t-1)}\left(\widetilde{\V}\right) = \left(\sum_{v\in \widetilde{\V}}\dem[v,k]^{(t-1)}\right)\Big/\left(\sum_{v\in \widetilde{\V}}\gen[v,k]^{(t-1)}\right).$

\textbf{Step 3:}\quad The capacity decrease and balance restoration lead to flow redistribution. The new flow matrix $\flowM^{(t)}$ is given by the function $\flowM^*$ defined in \eqref{flow}, with inputs $\bm{U}^{(t)} = \demM^{(t)} - \genM^{(t)}$ and $\flowLimVec^{(t)}$. Specifically, $$\flowM^{(t)} = \flowM^*(\bm{U}^{(t)},\flowLimVec^{(t)}).$$ 

\textbf{Step 4:}\quad For each edge $e\in\EE$, let $x=\exc{e}^{(t)}$. If $x\le 1$, then $e$ does not fail at time $t$. If $x>1$, the edge may either become congested (partial failure) or be removed (full failure). We specify a categorical failure rule with probabilities
\begin{align*}
\mathbb{P}\{\text{congestion at }e\}=p_{e,c}(x), \quad
\mathbb{P}\{\text{removal at }e\}=p_{e,r}(x),\quad
\mathbb{P}\{\text{no failure at }e\}=1-p_{e,c}(x)-p_{e,r}(x),  
\end{align*}
where $p_{e,c},p_{e,r}:\R_+\to[0,1]$ are continuous functions, satisfying $p_{e,c}(x)=p_{e,r}(x)=0$ for $x\le 1$, and $p_{e,c}(x)+p_{e,r}(x)\le 1$ for all $x$. Note that congestion or a removal of edge $e$ occurs independently of other edges. The failure outcome for each edge is independent of those for all other edges. 

The cascade terminates once no edge failures (either congestion or removal) occur during a given cascade stage. 
For simplicity, we assume that each edge can fail at most $n_e\in \N$ times. This means that in Step 4, we only consider edges that failed less than $n_e$ times.  This assumption ensures that the cascade always terminates in finite time and implies that the set of all possible cascade sequences, defined as $\cascadeSet := \left\{\cascadeSeqDet = \left(\cascadeSeqDet_P^{(t)}\cup \cascadeSeqDet_R^{(t)}\right)_{t\in \N}~:~ \cascadeSeqDet_P^{(t)}, \cascadeSeqDet_R^{(t)}\subseteq \EE, \cascadeSeqDet_P^{(t)}\cap \cascadeSeqDet_R^{(t)} = \emptyset \right\}$ is finite. 

The effect of the cascading failure process on the behavior of the system is denoted by the cascade cost $\cascadeCost\in \R$, which is a function of the flow network before and after the cascade. \textcolor{black}{Given the structure of our model, this means that $\cascadeCost$ is fully determined by two random variables: the vertex weight vector $\pareto$ and the cascade sequence $\cascadeSeqRand$. Thus, following~\eqref{conditionalRV}, $\cascadeCost$ given $\pareto = \bm{x}$ and $\cascadeSeqRand = \cascadeSeqDet$ is denoted by $\detZ(\bm{x}, \cascadeSeqDet)$.  }

\begin{table}[p!]
    \centering
    \renewcommand{\arraystretch}{1.16}
    \resizebox{\textwidth}{!}{
    \begin{tabular}{|c|l|}
    \hline
   \rowcolor{gray!15}
   \textbf{Variable} & \textbf{Description}\\
\hline\hline
    \multicolumn{2}{|c|}{\textbf{Network Structure}} \\ \hline
         $\G = (\V,\EE)$ & A connected directed graph representing the underlying network.\\
    \hline
         $\V$ & The set of vertices of $\G$.\\
    \hline
         $\EE$ & The set of edges of $\G$. \\

    \hline
         $\incM$ & The incidence matrix of graph $\G$. \\
    \hline
         $\paretoI{v}$ & The size of vertex $v\in \V$.\\
    \hline
         $\pareto = (\paretoI{v})_{v\in\V}$ & Vector of vertex sizes. \\
    \hline\hline
    \multicolumn{2}{|c|}{\textbf{Requirements and Production of Commodities}} \\     \hline
         $\K$ & Set of commodities that are transported in the network. \\    \hline
         $\dem[v,k]\in \R_+$ & Amount commodity $k\in \K$ required at the sink vertex $v\in \V$.\\
    \hline
         $\gen[v,k]\in \R_+$ & Amount of commodity $k\in \K$ produced at the source vertex $v\in \V$.\\
             \hline
        $q_{v,k}$ & Fraction of vertex $v\in\V$ that requires commodity $k\in \K$, with $q_{v,k}\geq 0$. \\
    \hline
         $\demM := [\dem[v,k]]_{v\in \V,k\in\K}$ & Sink matrix of size $\nv\times \nk$.\\
    \hline
         $\genM := [\gen[v,k]]_{v\in \V,k\in\K}$ & Source matrix of size $\nv\times \nk$.\\
                 \hline $\bm{Q} = [q_{v,k}]_{v\in\V,k\in\K}$ & Matrix encoding the relative requirement for each commodity at every vertex (see also \eqref{eq:defT0}). \\
    \hline
         $\genMLim := [\genLim[v,k]]_{v\in \V,k\in\K}$ & Production capacity matrix of size $\nv\times \nk$ with entries $\genLim[v,k]\geq 0$. \\   
    \hline $\bm{U} = [u_{v,k}]_{v\in\V, k\in \K}$ & Netput matrix $\bm{U} = \demM - \genM$, with netput requirement entries $u_{v,k} := \dem[v,k] - \gen[v,k]$. \\

             \hline\hline
    \multicolumn{2}{|c|}{\textbf{Flows and Flow Capacities}} \\
    \hline
         $\flow[e,k]$ & Flow of commodity $k\in \K$ on edge $e\in \EE$ with $\flow[e,k]\in \R$.\\
    \hline
         $\flowM := [\flow[e,k]]_{e\in\EE,k\in \K}$ & Flow matrix of size $\nee\times \nk$.\\

    \hline
         $\flow[e] := (\flowM\e{_{|\K|}})_e$ & Total flow on edge $e\in \EE$, i.e., the sum of flows of each commodity on edge $e$. \\
    \hline
         $\flowTot:= (\flow[e])_{e\in\EE}$ & Vector of total edge flows.\\
    \hline
         $\flowLim{e}$ & The flow capacity of edge $e\in \EE$.\\
    \hline
         $\flowLimVec :=(\flowLim{e})_{e\in\EE}$ & Vector of edge flow capacities. \\
             \hline
        $\tau$ & Planning slack parameter used to determine operational edge flow capacity (see \eqref{flowCapacity}); $\tau\geq 1$. \\
        \hline 
        $\flowLim{\min}$ & Minimal operational edge flow capacity (see \eqref{flowCapacity}).\\
             \hline\hline
    \multicolumn{2}{|c|}{\textbf{Flow and Production Mechanisms}} \\
        \hline
        $\flowMOpt{\bm{U}, \flowLimVec}$ & Function describing the flow distribution mechanism (see Problem \eqref{flow} and \eqref{optFlow}).\\
    \hline
         $\flowCost(\flowM, \flowLimVec)$ & Flow cost function of Problem \eqref{optFlow}, $\flowCost: \R^{\nee\times\nk}\times \R^{\nee}_+\rightarrow \R_+$. \\
             \hline
        $\genM^*(\demM,\genMLim, \flowLimVec, \lambda)$ & Function describing the production allocation mechanism (see Problem \eqref{production} and \eqref{optProd}).\\
    \hline
        $\lambda$ & Flow capacity slack parameter of \eqref{flowCap}; $\lambda > 0$.\\
                     \hline\hline
    \multicolumn{2}{|c|}{\textbf{Emergency Phase Variables and Functions}} \\
    \hline 
        $p^{(1)}$ & Initial failure probability function; $p^{(1)}: \{0,\mathrm{P}, \mathrm{R}\}^{\nee}\rightarrow [0,1]$.\\
    \hline 
    $\failFactor_e$ & Function that yields the capacity decrease of a failed edge $e\in \EE$.\\
    \hline
        $\exc{e}$ & Relative flow exceedance on edge $e\in \EE$; see \eqref{exceedance}.\\
    \hline
        $\eta_k$ & Production-requirement rebalancing factor for commodity $k$ (see \eqref{demGenCascade}).\\
    \hline 
            $p_{e,c}$ & Function describing the \textit{congestion} probability of edge $e\in \EE$; $p_{e,c}: \R_+\rightarrow [0,1].$\\
    \hline 
                $p_{e,r}$ & Function describing the \textit{removal} probability of edge $e\in \EE$; $p_{e,r}: \R_+\rightarrow [0,1].$\\
    \hline 
        $n_e$ & Maximum number of failures of edge $e\in \EE$. \\
    \hline 
    $\cascadeSet$ & Set of all possible cascade sequences.\\
    \hline 
$\cascadeSeqRand$ & Random variable describing the cascade sequence.\\
    \hline 

        $\cascadeCost$ & Cascade cost, given by either \textit{generalized cascade size} function \eqref{generalizedFailureCost} or \textit{cascade flow cost} function \eqref{deltaFlowCost}.\\
        \hline 
        $\detZ(\bm{x}, \cascadeSeqDet)$ & Deterministic cascade cost given $\pareto = \bm{x}$ and $\cascadeSeqRand = \cascadeSeqDet$.\\
                  \hline\hline
    \multicolumn{2}{|c|}{\textbf{Other notation}} \\
    \hline

         $\e{i}$ & All-ones vector of size $i\in \N$.\\        \hline
         $\e{i,n}$ & $i$-th unit vector of size $n$.\\
\hline

    \end{tabular}}
    \caption{Summary of notation. Note that some quantities may differ, depending on the current phase of the model. If that is the case, we use the superscript notation $(t)$, $t\in \N\cup \{0\}$, to specify the relevant phase. Notational conventions are provided in Section~\ref{notation}.}
    \label{tab:my_label}
\end{table}

\section{Emergence of Pareto-tailed cascade cost} \label{mainRes}
The objective of this paper is to find a universal explanation for the emergence of Pareto-tailed cascade costs $\cascadeCost$ in flow networks with overload cascades. Section~\ref{modelDescription} introduced a general model that has broad applicability across real-world flow networks. In this section, we state and derive asymptotic results on the cascade cost for this model. The results characterize the probability that the cascade cost $\cascadeCost$ exceeds $y$, as $y$ approaches infinity, and identify the most likely conditions that lead to high cascade costs. 

To ensure the analytical tractability of the cascade cost \( \cascadeCost \), two additional assumptions are required---one on the structure of \( \cascadeCost \), and one on the probability of observing a particular cascade \( \cascadeSeqDet\in \cascadeSet\). While these are technical assumptions that guarantee appropriate continuity and scalability, they are not restrictive. In Section~\ref{sec:satisfiability}, we demonstrate that many standard choices for the flow distribution mechanism \( \flowM^* \), source production \( \genM^* \), source production capacity \( \genMLim \), and cascade cost \( \cascadeCost \) satisfy the required conditions. 




\begin{assumption}
    The flow distribution mechanism $\flowM^*$, source production $\genM^*$, source production capacity $\genMLim$, and cascade cost $\cascadeCost$ are chosen such that: \begin{enumerate}
        \item[\hypertarget{assumptionPart1}{1.}] For some $\boundZ>0$ and $\delta>0$, the random variable $\cascadeCost$ satisfies $\cascadeCost\leq \boundZ\cdot \sum_{v\in \V} \left(\paretoI{v}\right)^{\delta}$. \textcolor{black}{Moreover, given any realization $\pareto = \bm{x}$ and a particular cascade $\cascadeSeqRand = \cascadeSeqDet\in \cascadeSet$, the conditional cascade cost $\detZ(\bm{x}, \cascadeSeqDet)$ is $\delta$-scale-invariant and SRC w.r.t.\ the vertex weight vector $\bm{x}$}. 
        \item[\hypertarget{assumptionPart2}{2.}] The probability that a particular cascade $\cascadeSeqRand = \cascadeSeqDet$ occurs, given the vertex weight vector $\pareto$, is SRC w.r.t.\ $\pareto$ and equal for all scales $\omega$, i.e., $\forall~\omega>0$,
        \(\PR{\cascadeSeqRand = \cascadeSeqDet~|~\omega \pareto} = \PR{\cascadeSeqRand = \cascadeSeqDet~|~ \pareto}.\)
    \end{enumerate} \label{assumption}
\end{assumption}

Recall that the notions of $\delta$\textit{-scale invariance} and \textit{SRC} are defined in Section~\ref{notation}. Although Assumption~\ref{assumption} is satisfied by many functions, verifying this assumption is not always straightforward. To address this, in Section~\ref{sec:satisfiability}, we discuss various commonly used choices for functions and derive results on the satisfiability of Assumption~\ref{assumption} for these choices. This facilitates a straightforward application of our model and the asymptotic results in a wide range of settings, which we demonstrate with real-world examples in Section~\ref{Applications}.

In what follows, we state the main results of this paper, showing how Pareto-tailed cascade costs emerge in general flow networks as described in Section~\ref{modelDescription}. First, in Proposition \ref{sbj}, we identify which realizations of the vertex weight distribution are the most likely to cause a high cascade cost. 
In particular, we show that, with high probability, large cascade costs occur when one vertex has a significantly larger weight compared to all other vertices. Because a large disproportion between vertex weights is required, we refer to this result as the \textit{catastrophe principle}, as is commonly used in extreme value theory \citep{nair_wierman_zwart_2022}. Afterwards, we state the main result on the asymptotic behavior of the cascade cost $\cascadeCost$.
\begin{proposition}[Catastrophe principle]
Suppose that Assumption \ref{assumption} holds for some $\delta>0$. Fix $\varepsilon>0$, and let $\paretoI{\max} $ be the largest weight amoung all vertices, i.e., $\paretoI{\max} = \max_{1\leq i\leq \nv}\{\paretoI{i}\}$. Then,
\[\PR{\cascadeCost>y, \sum_{i = 1}^{\nv} \left(\paretoI{i}\right)^\delta > (1+\varepsilon)\left(\paretoI{\max}\right)^\delta} = \OO{y^{-2\alpha/\delta}} \quad \text{as }y\rightarrow \infty, \]
 where $\alpha>0$ is the tail parameter of $\pareto$ as given in \eqref{def:pareto-tailed}.\label{sbj}
\end{proposition}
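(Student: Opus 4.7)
The plan is to reduce the event $\{\cascadeCost > y\}$ to a purely deterministic condition on the vertex weight vector $\pareto$ using the upper bound in part~\hyperlink{assumptionPart1}{1} of Assumption~\ref{assumption}, and then combine it with the catastrophe condition to force at least two vertices to have large $\delta$-th powers. Since each vertex weight has a Pareto tail with exponent $\alpha$, each $(\paretoI{v})^\delta$ has tail exponent $\alpha/\delta$, so independence of the $\paretoI{v}$'s would yield the squared rate $y^{-2\alpha/\delta}$. I do not expect a truly hard step here; the main work is making the ``two vertices must be simultaneously large'' argument precise.

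First, I would note that the bound $\cascadeCost \leq \boundZ\sum_{v\in\V}(\paretoI{v})^\delta$ from Assumption~\ref{assumption} immediately gives the inclusion
\[
\{\cascadeCost > y\} \;\subseteq\; \Bigl\{\sum_{v\in\V}(\paretoI{v})^\delta > y/\boundZ\Bigr\}.
\]
Intersecting with the catastrophe event $\{\sum_{v}(\paretoI{v})^\delta > (1+\varepsilon)(\paretoI{\max})^\delta\}$ and writing $\sum_v(\paretoI{v})^\delta = (\paretoI{\max})^\delta + \sum_{v\neq \max}(\paretoI{v})^\delta$, one rearranges to get $\sum_{v\neq \max}(\paretoI{v})^\delta \geq \tfrac{\varepsilon}{1+\varepsilon}\sum_{v}(\paretoI{v})^\delta$. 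Thus on the joint event,
\[
\sum_{v\neq \max}(\paretoI{v})^\delta \;>\; \frac{\varepsilon}{(1+\varepsilon)\boundZ}\, y \;=:\; C_1 y.
\]

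Second, I would use the pigeonhole principle: since there are at most $\nv-1$ summands in $\sum_{v\neq \max}(\paretoI{v})^\delta$, at least one index $v^\star\neq \arg\max$ satisfies $(\paretoI{v^\star})^\delta > C_1 y/(\nv-1)$. Since $\paretoI{\max}\geq \paretoI{v^\star}$, we also have $(\paretoI{\max})^\delta > C_1 y/(\nv-1)$. Hence there exist two distinct indices $i,j$ with both $(\paretoI{i})^\delta$ and $(\paretoI{j})^\delta$ exceeding $C_1 y/(\nv-1)$. By a union bound over the finitely many ordered pairs $(i,j)$ with $i\neq j$,
\[
\PR{\cascadeCost > y,\ \sum_{v}(\paretoI{v})^\delta > (1+\varepsilon)(\paretoI{\max})^\delta}
\;\leq\; \sum_{i\neq j}\PR{\paretoI{i} > y_0,\ \paretoI{j} > y_0},
\]
where $y_0 := (C_1 y/(\nv-1))^{1/\delta}$.

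Finally, independence of the $\paretoI{v}$'s together with the Pareto tail assumption~\eqref{def:pareto-tailed} gives
\[
\PR{\paretoI{i} > y_0,\ \paretoI{j} > y_0} \;=\; \PR{\paretoI{i} > y_0}\,\PR{\paretoI{j} > y_0} \;\sim\; K^{2} y_0^{-2\alpha} \;=\; \OO{y^{-2\alpha/\delta}}
\]
as $y\to\infty$. Since there are only $\nv(\nv-1)$ such pairs, the total sum is still $\OO{y^{-2\alpha/\delta}}$, which is exactly the claimed bound. The only part deserving mild care is keeping track of the constants $\boundZ$, $\varepsilon$, and $\nv$ so that the rearrangement producing $C_1$ is clean; nothing else requires the SRC or scale-invariance portions of Assumption~\ref{assumption} — those will be needed only for the matching lower bound in Theorem~\ref{mainThm}.
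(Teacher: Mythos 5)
Your proof is correct and follows essentially the same route as the paper's: both arguments use the bound $\cascadeCost\leq \boundZ\sum_{v}(\paretoI{v})^\delta$ together with the catastrophe condition to force \emph{two} distinct vertex weights to each exceed a constant multiple of $y^{1/\delta}$, and then conclude via a union bound over pairs and independence of the Pareto tails. The only cosmetic difference is that the paper extracts the two large weights through the order statistics $\paretoI{\max}$ and $\paretoI{\max-1}$ directly, whereas you use a pigeonhole argument on the residual sum $\sum_{v\neq\max}(\paretoI{v})^\delta$; the constants differ but the mechanism and the resulting $\OO{y^{-2\alpha/\delta}}$ rate are identical.
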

In the proof of this result, we rely on the fact that the cascade cost $\cascadeCost$ is always bounded by $\boundZ\sum_{v\in \V} \left(\paretoI{v}\right) ^\delta$ for some $\boundZ>0$, and we use well-known bounds on the order statistics of Pareto-tailed distributions. The proof is provided in Appendix~\ref{EC:propProof}.

 
\newpage\begin{theorem}[Tail of the cascade cost]\label{mainThm} Suppose that Assumption \ref{assumption} holds for some $\delta>0$. If \textcolor{black}{ there exists a cascade $\cascadeSeqDet\in \cascadeSet$} such that \textcolor{black}{for a vertex weight vector $\bm{x} = \e{i,\nv} = (0,\dots,0,1,0, \dots, 0)$ for some $i\in \V$, the cascade $\cascadeSeqDet$:}
\begin{enumerate}
    \item occurs with positive probability, i.e., $\textcolor{black}{\PR{\cascadeSeqRand = \cascadeSeqDet \mid \pareto = \bm{x}}}>0$, and
    \item has a positive cost, \textcolor{black}{i.e., $\detZ(\bm{x}, \cascadeSeqDet)>0$,}
\end{enumerate}
then the cascade cost $\cascadeCost$ has a scale-free tail with parameter ${\alpha}/{\delta}$. Specifically,
\begin{equation}
    \label{thmCascadeSizeEq} \PR{\cascadeCost>y}\sim l_\cascadeCost\cdot y^{-\alpha/{\delta}}, \quad y\rightarrow \infty,
\end{equation}
with 
\begin{equation}
    \label{constantCS} l_\cascadeCost = \sum_{i = 1}^{\nv} \sum_{\cascadeSeqDet\in \cascadeSet}K \cdot  \PR{\cascadeSeqRand = \cascadeSeqDet~|~\pareto = \e{i,{\nv}}}\cdot \detZ(\e{{i,{\nv}}},\cascadeSeqDet)^{\alpha/\delta},
\end{equation}
where $K$ and $\alpha$ are given in Equation~\eqref{def:pareto-tailed}. Otherwise, 
\(\PR{\cascadeCost>y} = \OO{y^{-2\alpha/\delta}},  \text{ as }y\rightarrow \infty.\)
\end{theorem}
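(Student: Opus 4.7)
The strategy is to decompose $\PR{\cascadeCost > y}$ according to whether a single vertex weight dominates, and then use the scale-invariance hypotheses of Assumption~\ref{assumption} to reduce the conditional cascade cost to its value at single-vertex directions $\e{i,\nv}$. Concretely, fix $\varepsilon > 0$ and let $A_\varepsilon := \{\sum_{v\in\V}(\paretoI{v})^\delta \leq (1+\varepsilon)(\paretoI{\max})^\delta\}$. Then
\begin{equation*}
    \PR{\cascadeCost > y} = \PR{\cascadeCost > y,\, A_\varepsilon} + \PR{\cascadeCost > y,\, A_\varepsilon^c},
\end{equation*}
and Proposition~\ref{sbj} immediately bounds the second summand by $\OO{y^{-2\alpha/\delta}}$, which is negligible compared to the target rate $y^{-\alpha/\delta}$. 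In the degenerate case where no pair $(i,\cascadeSeqDet)$ satisfies both $\detZ(\e{i,\nv}, \cascadeSeqDet) > 0$ and $\PR{\cascadeSeqRand = \cascadeSeqDet \mid \pareto = \e{i,\nv}} > 0$, one argues that the on-$A_\varepsilon$ contribution also vanishes at the rate $\OO{y^{-2\alpha/\delta}}$ by combining SRC (which forces either $\detZ \to 0$ or the cascade probability $\to 0$ for each such pair) with the catastrophe bound on the remainder.

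For the main assertion, further partition $A_\varepsilon$ by the identity of the dominant vertex $\{\paretoI{i} = \paretoI{\max}\}$ (ties have probability zero under \eqref{def:pareto-tailed}) and by the cascade realization $\cascadeSeqRand = \cascadeSeqDet \in \cascadeSet$. On $A_\varepsilon \cap \{\paretoI{i} = \paretoI{\max}\}$, the normalized vector $\pareto/\paretoI{i}$ lies in $\{1\} \times [0,\varepsilon^{1/\delta}]^{\nv - 1}$ (after placing the $1$ in coordinate $i$), so it dominates $\e{i,\nv}$ componentwise and approaches it from above as $\varepsilon \downarrow 0$. Setting $\omega = 1/\paretoI{i}$, the $\delta$-scale invariance of $\detZ(\cdot, \cascadeSeqDet)$ and the scale invariance of the cascade law give
\begin{equation*}
    \detZ(\pareto, \cascadeSeqDet) = (\paretoI{i})^\delta \, \detZ(\pareto/\paretoI{i},\, \cascadeSeqDet), \qquad \PR{\cascadeSeqRand = \cascadeSeqDet \mid \pareto} = \PR{\cascadeSeqRand = \cascadeSeqDet \mid \pareto/\paretoI{i}}.
\end{equation*}
SRC of both quantities then lets me replace $\pareto/\paretoI{i}$ with $\e{i,\nv}$ up to an $o(1)$ error as $\varepsilon \downarrow 0$, uniformly over the finite set $\cascadeSet$. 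Combined with the Pareto tail $\PR{\paretoI{i} > x} \sim K x^{-\alpha}$, which yields $\PR{(\paretoI{i})^\delta\, c > y} \sim K\, c^{\alpha/\delta}\, y^{-\alpha/\delta}$ for every constant $c > 0$, summing over $i \in \V$ and $\cascadeSeqDet \in \cascadeSet$ reproduces exactly the constant $l_\cascadeCost$ in \eqref{constantCS}.

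The main obstacle is the rigorous interchange of the limits $\varepsilon \downarrow 0$ and $y \to \infty$. Two points require particular care: first, conditioning on $\{\paretoI{i} = \paretoI{\max}\}$ should be replaced by the cleaner event $\{\paretoI{i} > y^{1/\delta}\}$ up to a lower-order correction, which is justified because conditional on $\paretoI{i}$ being very large, the probability that another weight reaches the same order decays by an additional Pareto factor (giving a term of order $y^{-2\alpha/\delta}$, absorbed into the Proposition~\ref{sbj} remainder); second, the SRC replacement error multiplied by $\PR{\paretoI{\max} > y^{1/\delta}} = \OO{y^{-\alpha/\delta}}$ must be genuinely $o(y^{-\alpha/\delta})$, which follows by choosing $\varepsilon$ small first and then letting $y \to \infty$, using the uniform convergence supplied by SRC on the shrinking neighborhood described above. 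The two ingredients---catastrophe principle for the multi-peak regime, and SRC plus scale invariance for the one-peak regime---fit together cleanly, but the bookkeeping of error terms, and in particular verifying the $\OO{y^{-2\alpha/\delta}}$ rate in the degenerate case, is the technical crux.
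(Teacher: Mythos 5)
Your proposal follows essentially the same route as the paper's proof: split off the multi-peak regime via Proposition~\ref{sbj}, partition by the dominant vertex and the cascade realization, rescale by $\omega = 1/\paretoI{i}$ using $\delta$-scale invariance and the scale invariance of the cascade law, invoke SRC to pass to $\e{i,\nv}$, and conclude with the Pareto tail and the iterated limits ($y\to\infty$ for fixed $\varepsilon$, then $\varepsilon\downarrow 0$), exactly as in the paper's Steps A--F. The only cosmetic difference is that the paper restricts the integration range via the deterministic bound $\cascadeCost \leq \boundZ\sum_v \paretoI{v}^\delta$ (forcing $\paretoI{i}^\delta \geq \varepsilon y/\boundZ$) rather than your substitute event $\{\paretoI{i} > y^{1/\delta}\}$, which does not change the argument.
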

The theorem quantifies the tail probability of the cascade cost in terms of the cascade costs for scenarios where a single vertex has weight 1 and all other vertices have weight 0. This arises as a consequence of the catastrophe principle, stated in Proposition~\ref{sbj}. The $\delta$-scale invariance and the SRC assumptions also play an important role, ensuring that the limit behaves well and does not depend on~$y$. 

The proof of the theorem is provided in Appendix~\ref{EC:thmProof} and consists of the following steps:
\begin{enumerate}
    \item[\hypertarget{stepA}{\textbf{A)}}] Decompose the probability in \eqref{thmCascadeSizeEq} by considering all possible cascade trajectories $\cascadeSeqDet\in \cascadeSet$ and all possible vertices with the largest weights. 
    \item[\hypertarget{stepB}{\textbf{B)}}] For a fixed $\varepsilon>0$, split the probability into two cases, by distinguishing whether or not $\sum_{i = 1}^{\nv} \paretoI{i}^\delta\leq (1+\varepsilon)\max_{1\leq i\leq \nv}\{\paretoI{i}^\delta\}$, and apply Proposition \ref{sbj}.  
    \item[\hypertarget{stepC}{\textbf{C)}}] Rewrite the probability, by conditioning first on the value of the largest vertex weight, then on the remaining vertex weights, and finally on the occurrence of cascade $\cascadeSeqDet\in\cascadeSet$. 
    \item[\hypertarget{stepD}{\textbf{D)}}] By Assumption~\ref{assumption}, use the scale-invariance and SRC properties of the cascade cost and the conditional cascade probability.
    \item[\hypertarget{stepE}{\textbf{E)}}] Construct upper and lower bounds for the probability of a large cascade cost, as a function of $\varepsilon$ and $y$.
    \item[\hypertarget{stepF}{\textbf{F)}}] Show that these bounds are equal after taking $\lim y\rightarrow \infty$ and $\varepsilon \downarrow 0$ and conclude the final result.
\end{enumerate}

\section{Frameworks for cascading flow models} \label{sec:satisfiability}
Assumption~\ref{assumption} outlines the general conditions necessary for our results to hold. While these conditions provide a mathematically rigorous description, they offer limited intuition about the class of flow networks for which \replaced{this}{the} assumption may hold. In this section, we describe this class in more detail. In particular, we discuss common choices for the flow\deleted{ matrix}, source\deleted{ matrix}, and cost \replaced{matrices}{matrix} in Sections~\ref{subsec:FlowFunctions}--\ref{subsec:CascadeCost}, and in Section~\ref{subsec:classesSatisfy}, we \replaced{show}{demonstrate} that they satisfy Assumption~\ref{assumption}. Section~\ref{Applications} then presents concrete applications illustrating these ideas in practice.
\subsection{Flow matrices and cost functions} \label{subsec:FlowFunctions}
A well-known principle in flow networks asserts that the flow matrix function $\flowM^*$ typically minimizes the cost or the loss of some form of energy in the network, e.g., minimizing power loss in power networks, or minimizing travel time in traffic networks. Following this principle, the minimal-cost flow matrix $\flowM^*$ can be defined as an optimal solution to:
    \begin{equation} \begin{aligned}\flowM^*\left(
    \bm{U}, \flowLimVec
    \right)\in  &\argmin_{\flowM\in \textcolor{black}{\mathcal{D}}}\flowCost(\flowM, \flowLimVec): \\
    \incM &\flowM = \bm{U},\label{optFlow}\end{aligned}\end{equation}
where $\mathcal{D}$ denotes the optimization domain and $c_f$ \deleted{represents }the flow cost function. \textcolor{black}{We choose $\mathcal{D} = \R^{|\EE|\times|\K|}$ when flows against the assigned edge direction are allowed and $\mathcal{D} = \R_+^{|\EE|\times|\K|}$ otherwise. The cost function $c_f$ can depend on the flow matrix,} as well as the flow capacity. The choice of the cost function determines the\deleted{ fundamental} properties of the resulting network flows; thus, in the following, we review several \replaced{common}{commonly used} formulations.

A well-established cost function in network control and optimization theory to model flow in physical systems~\citep{Whittle_2007} is given by:
 \begin{equation}\flowCost\textcolor{black}{(\flowM, \flowLimVec)} := \sum_{e\in \EE}\left( \frac{1}{\beta} a_e b_e (|\flow[e]|/a_e)^\beta\right), \label{flowCost}\end{equation}
    where $\beta\geq 1$ and $a_e,b_e> 0$ for all $e\in \EE$. 
    This cost function is composed of cost contributions from each edge. Depending on the underlying application, these parameters can have different interpretations; however, from a physical perspective, one can associate $b_e$ with the length or weight of edge $e$, and $a_e$ with the cross-section of $e$, making the factor $a_eb_e$ the total volume of the edge.  The cost increase on edge $e$ is primarily driven by the flow density $|\flow[e]|/a_e$ raised to the power of $\beta$, which represents the cost of maintaining this flow density per unit of edge volume. This formulation implicitly asserts that it does not matter which commodities are transported through the edge, only their cumulative magnitude. 

   Common choices for the parameter $\beta$ are $\beta = 1$ and $\beta = 2$. The latter choice penalizes high utilization of edges, reflecting the fact that energy dissipation due to flow resistance in many systems grows superlinearly. Examples of such systems include electrical networks \citep{Montoya2019, Momoh1999}, water distribution systems \citep{Labadie2004}, and communication networks \citep{ahuja1993network}. In the setting\deleted{ of} $\beta = 2$ and $\mathcal{D} = \R^{|\EE|\times|\K|}$, the minimum-cost flow problem is well-understood, and, since $b_e>0$ for all $e\in \EE$, it has a unique total flow solution, namely $\flowTot^*$ is a linear function of $\bm{U}$ \cite[ch. 6]{boyd2004convex}. The choice of $\beta = 1$ yields a linear flow cost function. In this setting\textcolor{black}{, for both choices of domain $\mathcal{D}$}, the problem is \deleted{closely }related to shortest path flow, where the \deleted{length of an }edge\added{ length} is given by $b_e$; however, the latter is typically defined for a single source-sink pair. This problem may have a non-unique optimal solution, which can be problematic in some models. In \textcolor{black}{such a case}, additional conditions might be necessary to ensure the uniqueness of $\flowM^*$.
    
    Another typical cost function models scenarios where the flow cost \replaced{depends on}{is influenced by} the flow capacity and where the edges are one-directional, i.e., \textcolor{black}{$\mathcal{D} = \R_+^{|\EE|\times|\K|}$}, and is given by \citep{us1964traffic, 1}
        \begin{align}\begin{split}
        \label{wardrop} 
        c_f\textcolor{black}{(\flowM, \flowLimVec)}:= \sum_{e\in \EE}\int_0 ^{f_e}\left(d_e + b_e (x/\flowLim{e})^{\beta - 1} \right) dx = \sum_{e\in \EE}\left(d_e \flow[e] +\frac{1}{\beta} b_e \flowLim{e}(\flow[e]/\flowLim{e})^\beta \right), \quad d_e,b_e> 0,\beta> 1.\end{split}
    \end{align} 
     This setting is particularly relevant in traffic models, where Problem~\eqref{optFlow} with cost function \eqref{wardrop} is known as the Wardrop User Equilibrium \added{(UE) }\citep{Wardrop1952, Correa2011}.
    Compared to \eqref{flowCost}, the factor $a_e$ is replaced by $\flowLim{e}$, which leads to a high flow cost for flows that exceed the capacity on some edges. Moreover, this cost function has an additional linear term $d_e f_e$, representing the cost of free-flow travel, while the second term represents the additional travel cost due to congestion. Note that, \textcolor{black}{under  $\mathcal{D}= \R_+^{|\EE|\times|\K|}$, the flow problem remains feasible} because we assume that the graph is connected in a strong sense, implying the existence of a path between each pair of vertices. 

    Another common flow problem in traffic networks is called System Optimum \added{(SO)}. Contrary to the Wardrop \replaced{UE}{User Equilibrium}, where network flows arise from the selfish behavior of the drivers, the \replaced{SO}{System Optimum} flow matrix minimizes the cumulative flow cost in the network. System Optimum is another instance of \eqref{optFlow} with cost function
    \begin{align}\begin{split}
        \label{eq:SO}
        c_f\textcolor{black}{(\flowM, \flowLimVec)}:= \sum_{e\in \EE}f_e\left(d_e + b_e (\flow[e]/\flowLim{e})^{\beta - 1} \right) = \sum_{e\in \EE}\left(d_e \flow[e] +b_e \flowLim{e}(\flow[e]/\flowLim{e})^\beta \right), \quad d_e,b_e> 0,\beta> 1.
    \end{split}\end{align}
 We observe that \eqref{wardrop} and \eqref{eq:SO} differ only by a factor of $\beta$ in the second term, which means that\deleted{, without loss of generality,} System Optimum can be obtained using the cost function in \eqref{wardrop} by absorbing the factor $1/\beta$ into the parameter $b_e$.

As previously mentioned, the uniqueness of solutions to \eqref{optFlow} is an important notion. If $\beta>1$, then the cost functions \eqref{flowCost}--\eqref{eq:SO} are strictly convex functions in the total flow $\flowTot$, yielding a unique total flow vector $\flowTot^*$. However, when several commodities are considered, the optimal flow matrix $\flowM^*$ need not be unique. In our model, we focus on the total flow $\flowTot^*$; hence, the ambiguity of $\flowM^*$ is not problematic. Nevertheless, for completeness, we adopt the convention that $\flowM^*$ is the optimal solution that distributes each commodity uniformly over all possible paths. 
\subsection{Source matrices}\label{subsec:SourceFunctions}
In this section, we consider typical modeling choices for the source matrix.  In the literature, production modeling often follows either a centralized or a decentralized approach. In the centralized case, a single decision-maker determines all production quantities to minimize the total cost of meeting demand. By contrast, in the decentralized setting, each production node acts independently, making local decisions \citep{Saharidis11}. In what follows, we discuss the modeling of these two paradigms. 

In centralized optimization, a common approach is to minimize the total production cost subject to generation and flow constraints, mirroring the structure of flow optimization in Section~\ref{subsec:FlowFunctions}. Following this idea, we define the minimal-cost source matrix $\genM^*$ as 
\begin{subequations} \label{optProd} \begin{gather} \genM^*(\demM, \genMLim, \flowLimVec, \lambda) := \argmin_{\genM \in \R^{\nv}_+}  \sum_{v \in \V} \sum_{k \in \K} \frac{1}{\gamma} c_{v,k} \gen[v,k]^\gamma  \label{optProdCost} \\ \qquad\qquad\qquad\quad\text{s.t.:} \quad \e{_{\nv}}^T \genM =  \e{_{\nv}}^T \demM,\label{optProdBalance} \\ \qquad\qquad\qquad\quad\genM \leq \genMLim, \label{optProdConst}\\ \qquad\qquad\qquad\quad |\flowM^*(\demM - \genM, \flowLimVec)\e{_{|\K|}}| \leq \lambda \flowLimVec, \label{optFlowConst}\end{gather}  \end{subequations} where $c_{v,k}>0$ and $\gamma>1$. Here, $c_{v,k}$ denotes the cost coefficient associated with generating commodity $k$ at node $v$, capturing the cost of production resources and efficiency. The exponent $\gamma$ controls the marginal production cost; values $\gamma>1$ model increasing marginal costs and ensure strict convexity.

Problem \eqref{optProd} is a convex instance of the resource allocation problem \citep{katoh2024resource}, and arises in various applications \citep{PATRIKSSON20081}, including optimal power flow in transmission systems, where $\gamma = 2$ is commonly used \citep{MONTOYA201918} . The convexity \deleted{of the cost function }ensures uniqueness and supports tractable analysis.

The optimal resource allocation problem includes the optimal flow matrix $\flowM^*$ in one of the constraints. This means that Problem \eqref{optProd} is an instance of a bilevel optimization problem, where \eqref{optProd} is the upper level problem and \eqref{optFlow} is the lower level problem. \replaced{Such}{These types of} optimization problems are generally complex~\citep{Dempe_Kalashnikov_Pérez-Valdés_Kalashnykova_2015}, where the existence and uniqueness of optimal solutions for the upper level problem is only guaranteed under strict assumptions on the lower level problem. For our cascade model to be well defined, it is essential that \replaced{a unique}{the} solution of \eqref{optProd} exists\deleted{ and is unique}, allowing us to study $\genM^*$ as a function of the model input. A sufficient condition for uniqueness is \deleted{the }convexity of the feasible region and \deleted{the }strict convexity of the cost function~\citep{boyd2004convex}. The latter is guaranteed by our choice of the cost function; hence, it remains to ensure the former condition. \added{Since }Constraints~\eqref{optProdBalance} and \eqref{optProdConst} are linear in $\genM$\deleted{; therefore}, the only possible source of nonconvexity \replaced{in}{of} the feasible region \replaced{is}{arises from} Constraint~\eqref{optFlowConst}. \replaced{The feasible region is convex if}{It turns out that a sufficient condition on} $|\flowTot^*|$ is quasi-convex\deleted{ity} because \added{the }sublevel sets of \replaced{quasi-convex}{such} functions are convex\deleted{, resulting in a convex feasible region} \citep{boyd2004convex}. The function $|\flowTot^*|$ is quasi-convex if for every $\bm{U_1}$, $\flowLimVec_1$, $\bm{U_2}$, $\flowLimVec_2$, and $\xi\in [0,1]$, 
\begin{equation}|\flowTot^*(\xi\bm{U_1} + (1-\xi)\bm{U_2}, \xi\flowLimVec_1 + (1-\xi)\flowLimVec_2)|\leq \max\{|\flowTot^*(\bm{U_1},\flowLimVec_1 )|,|\flowTot^*(\bm{U_2},\flowLimVec_2 )| \}.\label{eq:quasiFlowTot}\end{equation}
Assuming \eqref{eq:quasiFlowTot} holds, we have that if $\genM_1$, $\genM_2$ both satisfy \eqref{optFlowConst}, then for any $\eta\in [0,1]$,
\begin{equation*}|\flowTot^*(\demM - \eta\genM_1 - (1 - \eta)\genM_2, \flowLimVec)|\leq \max\{|\flowTot^*(\demM - \genM_1,\flowLimVec )|,|\flowTot^*(\demM - \genM_2,\flowLimVec)| \}\leq \lambda \flowLimVec,\end{equation*} showing that the feasible set is \added{indeed }convex.

Unfortunately, the quasi-convexity of $|\flowTot^*|$ does not always hold, as we illustrate for the complete graph on four vertices in Example~\ref{exampleNonQuasi} in Appendix~\ref{appendixExample}. This example shows that the uniqueness of the optimal source production $\genM^*$ cannot be guaranteed in full generality. This is why in some of our satisfiability results (see Proposition~\ref{propClassFunc1} presented in Section~\ref{subsec:classesSatisfy}), it is necessary to limit our results to particular choices of the parameter~$\beta$.

Next, we turn to the case of the decentralized problem, which arises in many applications where production is determined by the uncoordinated actions of multiple agents rather than by a centralized planner. Such decentralized behavior is typical in systems like traffic and communication networks, where the source production represents the total volume of traffic or data packets generated independently by users at different locations, who generally act selfishly. As a result, there is no reason to assume that the corresponding source production satisfies the flow capacity constraint~\eqref{flowCap}. Similarly, individual contributions to the total source production are not restricted by a centralized limit, allowing for natural variability based on user activity and demand. Therefore, in these settings, it is natural to assume that the production and flow capacity constraints are absent or not enforced, that is, $\genMLim=\infty$ and $\lambda = \infty$.

In the literature, decentralized production is modeled in several ways. In some works, production at each source is taken as an exogenous input reflecting, for example, fixed demand or historical usage patterns \citep{Korilis97}. Other models treat production as the outcome of individual agents’ optimization problems, where each agent maximizes a \deleted{local }utility or profit subject to their own constraints \citep{kelly98}. More generally, production levels may arise from dynamic or game-theoretic mechanisms in which agents adjust their output in response to prices, congestion signals, or competition \citep{kelly98, Johari04}.

Here, we consider the first approach. Under the assumptions that $\genMLim=\infty$ and $\lambda = \infty$, the source matrix $\genM^*$ can be modeled directly as a given function of $\demM$ satisfying Constraint~\eqref{optProdBalance}, without solving \eqref{optProd}. A simple and natural choice of $\genM^*$ is a linear mapping of the form:
\begin{equation}\genM^* =\bm{R}\cdot\text{diag}\left(\e{_{\nv}}^T\demM\right), \quad \bm{R}\in \R_+^{\nv\times\nk}, \quad\e{_{\nv}}^T\bm{R} = \e{_{|\K|}}^T.\label{eq:linearSource}\end{equation}
Here, the entry $r_{v,k}$ of $\bm{R}$ can be interpreted as the fraction of the total sink requirement for commodity $k$ that is produced at source $v$.
The assumption on $\bm{R}$ implies that Constraint~\eqref{optProdBalance} holds because
\[\e{_{\nv}}^T\genM^* = \e{_{\nv}}^T\bm{R} \cdot \text{diag}(\e{_{\nv}}^T\demM) = \e{_{|\K|}}^T \cdot \text{diag}(\e{_{\nv}}^T\demM) = \e{_{\nv}}^T\demM.\]

Finally, note that the linear form in~\eqref{eq:linearSource} can \deleted{also }be obtained as a special case of the optimization problem~\eqref{optProd}. If we assume $\genMLim=\infty$, $\lambda = \infty,$ and a quadratic cost function $(\gamma = 2)$, the optimal source production satisfies
\[s^*_{v,k} = \frac{c_{v,k}}{\sum_{w\in \V}c_{w,k}} \sum_{w\in \V} t_{w,k} \quad v\in \V, k\in \K.\]
This is equivalent to \eqref{eq:linearSource} for the choice of $r_{v,k} = c_{v,k}/\sum_{w\in \V} c_{w,k}$. A corresponding optimization approach with a quadratic cost function can also be applied to obtain a linear form for $\genM^*$ when $\genMLim<\infty$. Thus, we conclude that the formulation as a resource allocation problem remains flexible and encompasses both centralized and decentralized settings, with or without production capacity constraints. Nevertheless, in scenarios where source production naturally emerges from uncoordinated individual behavior, a direct specification of $\genM^*$ can be more intuitive and better aligned with the system dynamics. In Section~\ref{subsec:classesSatisfy}, we show that the choices for $\genM^*$ presented here may satisfy Assumption~\ref{assumption} when paired with suitable choices of the flow mechanism $\flowM^*$, the production limit $\genMLim$, and the cascade cost $\cascadeCost$.

\subsection{Cascade cost functions }\label{subsec:CascadeCost}
Cascading phenomena in networks can lead to two distinct types of systemic impact. One such impact is network fragmentation, where a cascade breaks the system into disconnected components, making it impossible to satisfy some sink requirements. In practice, this can manifest as a large-scale blackout in a power transmission network ~\citep{Dobson_Carreras_Lynch_Newman_2007}, a loss of connectivity in a communication network~\citep{Ren2018}, or service unavailability in an interdependent transportation system~\citep{Daniotti_Servedio_Kager_Robben-Baldauf_Thurner_2024}. To quantify the severity of such failures, we introduce a generalized \textit{cascade size} function based on the loss of sink requirements. Specifically, we define
\begin{equation}
\cascadeCost := \sum_{v \in \V} \sum_{k \in \K} \constCostFunction{v,k} \left( \dem[v,k]^{(1)} - \dem[v,k]^{(end)} \right)^\rho,
\label{generalizedFailureCost}
\end{equation}
where \( w_{v,k} > 0 \) are vertex- and commodity-specific cost coefficients, and \( \rho > 0 \) is a fixed exponent controlling the sensitivity to local demand losses. Here, \(\demM^{(end)}\), with components $\dem[v,k]^{(end)}$, denotes the sink requirement matrix at the termination of the cascade.

The cascade cost function in~\eqref{generalizedFailureCost} generalizes the standard linear cascade size \citep{Nesti_2020,Parandehgheibi2014,Hosseinalipour2020}, which corresponds to the choice \( \rho = 1 \) and \( \constCostFunction{v,k} = 1 \) for all \(v, k\). Such a function counts the total amount of requirement lost due to network disconnections. When \( \rho > 1 \), the cost function is convex, imposing a disproportionately higher penalty on large local resource shortages. This captures situations where, following network fragmentation, resources must be procured locally. Local generation, especially in large amounts, is often inefficient or costly due to limited local capacity, higher marginal production costs, or logistical challenges. For instance, in power grids, localized generation incurs higher costs compared to centralized generation, while in supply chains, local sourcing during disruptions often leads to significant inefficiencies. Thus, large local demands that must be met without network support are more heavily penalized to reflect their critical systemic burden.

When \( 0 < \rho < 1 \), the cost function becomes concave, which assigns a relatively greater penalty to small local losses. This structure can model systems where the startup cost of initiating local resource generation is significant, while additional production beyond the initial setup incurs relatively lower marginal costs. For instance, activating backup generation facilities, rerouting supply chains, or establishing local resource pools often requires substantial fixed efforts, while scaling production once established is comparatively more efficient. Thus, concave cost functions naturally capture systems dominated by initial activation costs in response to resource delivery failures due to network fragmentation. The generalized cascade cost thus allows the modeling of both scenarios: systems where large local changes dominate systemic risk, and systems highly sensitive to small initial disruptions. 

In contrast, in cascading models with congestion, the network may remain connected, but performance deteriorates due to reductions in effective edge capacities. In these settings, the source production and sink requirements in the system remain unchanged, which yields $\cascadeCost = 0$, according to~\eqref{generalizedFailureCost}, but the flow cost increases substantially. Hence, to capture the \replaced{congestion impact}{impact of congestion cascades}, it is natural to define the \textit{cascade flow cost} as
\begin{equation}
\cascadeCost := \flowCost(\flowM^{(end)}, \flowLimVec^{(end)}) - \flowCost(\flowM^{(1)}, \flowLimVec^{(1)}),
\label{deltaFlowCost}
\end{equation}
where \(\flowM^{(end)}\) and \(\flowLimVec^{(end)}\) denote the flow matrix and edge capacity vector at the termination of the cascade, respectively, and \(\flowCost\) is the flow cost function defined in~\eqref{flowCost}. The cascade flow cost function reflects the additional cost incurred by the network due to capacity degradations, even in the absence of disconnections. Such congestion effects are particularly relevant in road transportation networks~\citep{Duan2023}.

The cascade cost functions introduced in~\eqref{generalizedFailureCost} and~\eqref{deltaFlowCost} provide a flexible framework to model a wide range of cascading effects in networked systems. Their applicability will be further illustrated in Section~\ref{Applications}. Moreover, both functions satisfy Assumption~\ref{assumption}, as established in the subsequent section.

\subsection{Satisfiability results}\label{subsec:classesSatisfy}

In this section, we show that Assumption~\ref{assumption} is satisfied for different modeling settings, given in previous sections, that correspond to different types of source production and flow behaviors. Before stating the results, we provide an intuitive overview of the cases considered.

The first setting, given in Proposition~\ref{propClassFunc1}, is suitable for modeling networks where production is centrally coordinated and flow capacity constraints actively influence flow and source allocations, as proposed in Section~\ref{subsec:SourceFunctions}. Here, the cascade cost is modeled by the generalized cascade size \eqref{generalizedFailureCost}, allowing for different exponents \(\rho\) to capture the system's sensitivity to local demand losses, while the possible choices for $\flowM^*$, $\genM^*$, and $\genMLim$ are specified in the proposition. This leads to Assumption~\ref{assumption} holding with \(\delta = \rho\). The second setting, given in Proposition~\ref{propClassFunc2}, is suitable for congestion analysis in decentralized systems. In this case, the cascade cost can be based on the increase in flow cost, following \eqref{deltaFlowCost}, or on the generalized cascade size \eqref{generalizedFailureCost}, resulting in $\delta = 1$ and $\delta = \rho$, respectively.

\begin{remark}\label{remarkNoProdCap}\textnormal{In Propositions~\ref{propClassFunc1} and \ref{propClassFunc2} below, we assume that there are no production capacity constraints ($\genMLim=\infty$), as this reflects the nature of the application examples considered in this paper. Nevertheless, we chose to incorporate production capacity constraints into the general modeling framework to maintain its full applicability to broader settings. Similar satisfiability results can be established under finite production limits by adopting our proof techniques. However, the capacity constraints must be set sufficiently large to guarantee the existence of a feasible source production matrix $\genM^*$ for all admissible sink requirement matrices $\demM$.}\end{remark}

In the first setting, we have the following proposition.
\begin{proposition}
\label{propClassFunc1}
If:
\begin{enumerate}
    \item The flow matrix \(F^*\) is the minimal-cost flow matrix, as given in \eqref{optFlow}, with \textcolor{black}{domain $\mathcal{D} = \R^{|\EE|\times|\K|}$ and} cost function \(c_f\) defined in \eqref{flowCost} with \(\beta = 2\) and constant \(a_e\) for all \(e \in \EE\);
    \item There are no production capacity constraints, i.e., \(\genMLim = \infty\);
    \item The source production matrix \(\genM^*\) is the minimal-cost source matrix as given in \eqref{optProd};
    \item The cascade cost \(\cascadeCost\) is given by the generalized cascade size (Equation~\eqref{generalizedFailureCost});
\end{enumerate}
then Assumption~\ref{assumption} \textcolor{black}{holds with \(\delta = \rho\), for any choice of parameter $\rho>0$ in Equation~\eqref{generalizedFailureCost}.}
\end{proposition}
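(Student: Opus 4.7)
The plan is to verify both parts of Assumption~\ref{assumption} with $\delta = \rho$ by unfolding the dynamics of Section~\ref{subsec:temporalDynamics} stage by stage, establishing by induction on $t$ that every matrix $\demM^{(t)}$, $\genM^{(t)}$, $\flowM^{(t)}$ and every capacity vector $\flowLimVec^{(t)}$ is $1$-scale-invariant and SRC as a function of the realization $\bm{x}$ of $\pareto$, conditional on any fixed cascade $\cascadeSeqDet\in\cascadeSet$. The upper bound on $\cascadeCost$ is immediate: since rebalancing in Step~2 can only shrink sink requirements, $\dem[v,k]^{(end)} \leq \dem[v,k]^{(1)} = q_{v,k}\paretoI{v}$, so each summand in \eqref{generalizedFailureCost} is bounded by $\constCostFunction{v,k}\, q_{v,k}^{\rho}\, \paretoI{v}^{\rho}$, giving $\cascadeCost \leq \boundZ \sum_{v\in\V}\paretoI{v}^{\rho}$ with $\boundZ := \max_{v\in\V}\sum_{k\in\K}\constCostFunction{v,k}\, q_{v,k}^{\rho}$.

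For scale invariance, with $\beta=2$ and constant $a_e$ the flow cost \eqref{flowCost} is quadratic in $\flowM$ and the program \eqref{optFlow} reduces to an equality-constrained quadratic problem whose unique solution $\flowMOpt{\bm{U},\flowLimVec}$ is linear in $\bm{U}$ and independent of $\flowLimVec$. In the planning stage, $\genMLim=\infty$ and $\flowLimVec^{(0)}=\infty$ render Constraints \eqref{optProdConst} and \eqref{optFlowConst} inactive, so the substitution $(\demM,\genM)\mapsto(\omega\demM,\omega\genM)$ preserves feasibility and multiplies the strictly convex objective by $\omega^{\gamma}$, forcing $\genM^*(\omega\demM,\infty,\infty,1)=\omega\genM^*(\demM,\infty,\infty,1)$; consequently $\flowM^{(0)}$ and $\flowLimVec^{(1)}=\max\{\tau\flowTot^{(0)},\varepsilon_{\min}\sum_v\paretoI{v}\}$ are linear in $\pareto$. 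At $t=1$ the constraint in \eqref{optFlowConst} is jointly invariant under rescaling $(\demM,\genM,\flowLimVec^{(1)})\mapsto\omega(\demM,\genM,\flowLimVec^{(1)})$ because $\flowM^*$ is linear, so the same argument yields linearity of $\genM^{(1)}$ and $\flowM^{(1)}$. For $t\geq 2$ the exceedances $\exc{e}^{(t)}$ are ratios of quantities linear in $\pareto$ and hence scale-invariant, so the capacity update factor $\failFactor_e^{(t-1)}$ is invariant under rescaling, the rebalancing ratios $\eta_k^{(t-1)}$ in \eqref{demGenCascade} are scale-invariant, and the subsequent flow redistribution is linear; by induction $\dem[v,k]^{(end)}$ is linear in $\pareto$, so \eqref{generalizedFailureCost} is $\rho$-scale-invariant.

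SRC in $\bm{x}$ follows from the same inductive skeleton with the scaling substitution replaced by a continuity argument. The map $\flowM^*$ is linear and hence continuous, while $\genM^*$ is the unique minimizer of a strictly convex objective (since $\gamma>1$) over a convex set cut out by the affine balance constraint and by \eqref{optFlowConst}, which is convex because $\flowMOpt{\cdot,\flowLimVec}$ is linear. The main obstacle is to upgrade this uniqueness to continuity of $\genM^*$ in $(\demM,\flowLimVec)$; following the strategy outlined in Appendix~\ref{subsec:proofsPropositions}, I would argue by contradiction, taking any sequence $\bm{x}_i\downarrow\bm{x}^*$, extracting a convergent subsequence of the corresponding bounded optimizers by compactness, and using $\genM^*(\bm{x}^*)$ itself (together with a vanishing feasible correction based on continuity of the balance constraint) as a competitor to force the limit to coincide with $\genM^*(\bm{x}^*)$ via strict convexity. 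Propagating this continuity through the rebalancing coefficients $\eta_k^{(t)}$, whose denominators remain positive within each relevant connected component, and through the capacity updates, which are continuous in the exceedances, yields SRC of $\demM^{(end)}$, hence of $\detZ(\bm{x},\cascadeSeqDet)$.

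Finally, Part~2 of Assumption~\ref{assumption} is a direct consequence of the induction. The per-edge probabilities $p_{e,c}, p_{e,r}$ in Step~4 depend only on the exceedances $\exc{e}^{(t)}$, which are scale-invariant and continuous in $\pareto$ by the preceding arguments; since $p_{e,c}, p_{e,r}$ are continuous, $\PR{\cascadeSeqRand=\cascadeSeqDet\mid\pareto}$ factors as a product of continuous, scale-invariant terms, yielding both scale invariance and SRC. Combined with the bound on $\cascadeCost$ and the $\rho$-scale-invariance and SRC of $\detZ(\bm{x},\cascadeSeqDet)$, Assumption~\ref{assumption} holds with $\delta=\rho$.
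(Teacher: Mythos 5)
Your proposal is correct and follows essentially the same route as the paper: a stage-by-stage induction establishing $1$-scale-invariance (via a change of variables in the convex programs) and SRC (via compactness, a feasible competitor sequence, and strict convexity) for the demand, production, flow, and capacity objects, and then reading off the $\rho$-scale-invariance, SRC, and the bound $\cascadeCost\leq \boundZ\sum_v \paretoI{v}^{\rho}$ for the generalized cascade size together with the invariance of the failure probabilities through the exceedances. The one under-specified step is the feasible competitor for the operational production problem when $\lambda^{(1)}<\infty$: your ``vanishing correction based on continuity of the balance constraint'' must also be arranged to respect the flow-capacity constraint \eqref{optFlowConst} (the paper achieves this by shrinking the limiting netput by the smallest ratio of perturbed to limiting edge capacities), but this is a refinement of your stated strategy rather than a different idea.
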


 An example of a system following this class of models is a power transmission system, which we demonstrate in detail in Section~\ref{subsec:powerEx}. We note that in Proposition~\ref{propClassFunc1}, we assume $\beta = 2$ to ensure that the source production matrix $\genM^*$ is well-defined. As illustrated by Example~\ref{exampleNonQuasi}, the existence of $\genM^*$ is not guaranteed for general values of $\beta$. Nevertheless, if one could establish the existence of $\genM^*$ for some $\beta\neq 2$, then the result of Proposition~\ref{propClassFunc1} would extend straightforwardly to that case.

In the following proposition, we show that Assumption~\ref{assumption} holds for the second class of problems. This is relevant for decentralized systems where production emerges without coordination, and where edge failures primarily cause congestion effects rather than disconnections. 
\begin{proposition}
\label{propClassFunc2}
If:
\begin{enumerate}
    \item The flow matrix \(F^*\) is the minimal-cost flow matrix, as given in \eqref{optFlow}, with \textcolor{black}{domain $\mathcal{D} = \R_+^{|\EE|\times|\K|}$ and }cost function \(c_f\) defined in \eqref{wardrop};
    \item There are no production or flow capacity constraints, i.e., \(\genMLim = \infty\) and \(\lambda = \infty\);
    \item The source production matrix \(\genM^*\) is the minimal-cost source matrix as given in \eqref{optProd};
    \item The cascade cost \(\cascadeCost\) is given by the cascade flow cost (Equation~\eqref{deltaFlowCost}) or by the generalized cascade size (Equation~\eqref{generalizedFailureCost});
\end{enumerate}
\textcolor{black}{then Assumption~\ref{assumption} holds, with \(\delta = 1\) when the cascade cost \(\cascadeCost\) is given by the cascade flow cost as in \eqref{deltaFlowCost}, and with \(\delta = \rho\) for any chosen \(\rho > 0\) when \(\cascadeCost\) is given by the generalized cascade size as \eqref{generalizedFailureCost}.}

\end{proposition}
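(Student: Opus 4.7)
The plan is to establish both parts of Assumption~\ref{assumption} by induction on the cascade stage $t$, showing that each relevant quantity scales appropriately with the vertex weight vector $\pareto$ and depends continuously (in the SRC sense) on $\pareto$. The core insight is that, under the Wardrop UE formulation with $\lambda = \infty$ and $\genMLim = \infty$, the absence of capacity constraints means scaling $\pareto \to \omega\pareto$ propagates linearly through the planning, operational, and emergency phases. First, I would show that scaling the sink requirements $\demM^{(0)} = \text{diag}(\pareto)\cdot\bm{Q}$ by $\omega$ scales the optimal source production $\genM^{(0)}$ by $\omega$, via a change-of-variables argument on the strictly convex resource allocation Problem~\eqref{optProd} with $\gamma > 1$ (the uniqueness guaranteed by $\lambda = \infty$ makes the constraint set trivially convex). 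Consequently, the netput matrix $\bm{U}^{(0)}$ scales linearly, and since the Wardrop cost \eqref{wardrop} is homogeneous under simultaneous scaling of $\flowM$ and $\flowLimVec$, the planning flow $\flowM^{(0)}$ and the operational capacity $\flowLimVec^{(1)} = \max\{\tau f^{(0)}_e, \flowLim{\min}\}$ also scale linearly (noting that $\flowLim{\min} = \varepsilon_{\min}\sum_v \paretoI{v}$ is itself linear in $\pareto$).

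Next, for the emergency phase I would proceed inductively: assuming at step $t$ that $\demM^{(t)}, \genM^{(t)}, \flowM^{(t)}, \flowLimVec^{(t)}$ all scale by $\omega$ under $\pareto \to \omega\pareto$, the relative exceedance $\exc{e}^{(t)} = |\flow[e]^{(t)}|/\flowLim{e}^{(t)}$ is \emph{scale-invariant}, so the failure probabilities $p_{e,c}(\exc{e}^{(t)}), p_{e,r}(\exc{e}^{(t)})$ and the capacity decrease factor $\failFactor_e^{(t)}(\exc{e}^{(t)})$ are invariant under scaling. Hence the conditional distribution of $\cascadeSeqRand$ given $\pareto$ is unchanged under rescaling, proving Part~2's scale invariance. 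The rebalancing step~\eqref{demGenCascade} preserves linear scaling (the ratio $\eta_k^{(t-1)}$ is scale-invariant), and since the flow redistribution satisfies $\flowM^*(\omega\bm{U}^{(t+1)},\omega\flowLimVec^{(t+1)}) = \omega\flowM^*(\bm{U}^{(t+1)},\flowLimVec^{(t+1)})$, the induction closes.

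For the cascade cost: when $\cascadeCost$ is the cascade flow cost \eqref{deltaFlowCost}, direct inspection of \eqref{wardrop} shows $\flowCost(\omega\flowM,\omega\flowLimVec) = \omega\flowCost(\flowM,\flowLimVec)$, so $\detZ(\omega\bm{x},\cascadeSeqDet) = \omega\detZ(\bm{x},\cascadeSeqDet)$, giving $\delta = 1$. When $\cascadeCost$ is the generalized cascade size \eqref{generalizedFailureCost}, each factor $(\dem[v,k]^{(1)} - \dem[v,k]^{(end)})^\rho$ scales as $\omega^\rho$, giving $\delta = \rho$. The boundedness condition $\cascadeCost \leq \boundZ \sum_v (\paretoI{v})^\delta$ then follows by bounding $\dem[v,k]^{(1)} \leq q_{v,k}\paretoI{v}$ (hence $(\dem[v,k]^{(1)} - \dem[v,k]^{(end)})^\rho \leq (q_{v,k}\paretoI{v})^\rho$ for the generalized cascade size), and by bounding the operational flow cost linearly in $\pareto$ (using the explicit form of \eqref{wardrop} and the linear scaling of $\flowM^{(1)}$ and $\flowLimVec^{(1)}$) for the cascade flow cost.

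The main obstacle will be proving the SRC property, since the cascade dynamics involve discontinuous decisions at exceedance thresholds and changes in connected-component structure. Following the proof strategy outlined in the paper for Proposition~\ref{propClassFunc1}, I would proceed by contradiction: suppose SRC fails along some sequence $\pareto_i \downarrow \pareto^*$ with $\pareto_i \geq \pareto^*$. Using compactness of the (finite) cascade set $\cascadeSet$, and exploiting the continuity of $\failFactor_e$, $p_{e,c}$, $p_{e,r}$, and of the optimal flow $\flowM^*$ (which is continuous in $(\bm{U},\flowLimVec)$ by standard results on strictly convex parametric programs), I would extract a subsequence along which the cascade trajectories stabilize at some $\cascadeSeqDet^\dagger\in\cascadeSet$ and show that this trajectory must coincide with the one generated from $\pareto^*$. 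The right-sided approach is essential for handling boundary cases: at a vertex $v$ where some $\dem[v,k]^{(end)}$ vanishes, approaching from above ensures the same zero-set structure in the limit; at exceedance levels with $\exc{e}^{(t)} = 1$, right continuity of $p_{e,c}, p_{e,r}$ at $x=1$ (where both vanish) guarantees the failure probabilities converge consistently. The induction then verifies that at each stage, the limit of the scaled optimal flows, the limit of the rebalancing ratios, and the limit of the failure probabilities jointly yield the correct cascade cost $\detZ(\pareto^*, \cascadeSeqDet^\dagger)$, contradicting the assumed failure of SRC.
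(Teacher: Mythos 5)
Your overall strategy matches the paper's: change-of-variables arguments give $\omega$-linearity of the source production and flows through the planning and operational phases, an induction over cascade stages propagates scale invariance and SRC into the emergency phase, scale invariance of the exceedances makes the conditional cascade law independent of $\omega$, and the two cost functions are then $1$- and $\rho$-scale-invariant with the stated bounds. However, one step of your SRC argument would fail as written: you propose to ``extract a subsequence along which the cascade trajectories stabilize at some $\cascadeSeqDet^\dagger$ and show that this trajectory must coincide with the one generated from $\pareto^*$.'' There is no single trajectory generated from $\pareto^*$: failures are sampled from $p_{e,c},p_{e,r}$, so the cascade is random even for fixed weights. Assumption~\ref{assumption} only asks for SRC of the \emph{conditional} cost $\detZ(\cdot,\cascadeSeqDet)$ for each fixed $\cascadeSeqDet\in\cascadeSet$ and of the conditional probability $\PR{\cascadeSeqRand=\cascadeSeqDet\mid\pareto}$. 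The paper accordingly conditions on a fixed $\cascadeSeqDet$ throughout the induction (Lemmas~\ref{scaleInvCascade} and~\ref{SRCCascade}) and treats the probability separately (Lemma~\ref{cascadeProb}); with that reformulation your continuity ingredients (continuity of $\failFactor_e$, $p_{e,c}$, $p_{e,r}$, and of $\flowM^*$ --- the latter proved in the paper by a compactness, feasible-sequence, and cost-comparison argument rather than quoted as standard) do suffice.

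A second step needs repair: for the cascade flow cost you bound ``the operational flow cost linearly in $\pareto$,'' but $\cascadeCost=\flowCost(\flowM^{(end)},\flowLimVec^{(end)})-\flowCost(\flowM^{(1)},\flowLimVec^{(1)})$ requires a linear bound on the \emph{end-of-cascade} cost. Since \eqref{wardrop} contains the term $b_e\flowLim{e}(\flow[e]/\flowLim{e})^\beta$ with $\beta>1$, this could scale superlinearly in $\sum_v\paretoI{v}$ if capacities degraded toward zero. One must use that each surviving edge fails at most $n_e$ times with decrease factor bounded below by $l_e>0$, together with $\flowLim{\min}=\varepsilon_{\min}\sum_{v\in\V}\paretoI{v}$, to get $\flowLim{e}^{(end)}\geq l_{\min}^{r^*}\varepsilon_{\min}\sum_{v\in\V}\paretoI{v}$, while $|\flow[e]|\leq\max_{v,k}\{q_{v,k}\}\sum_{v\in\V}\paretoI{v}$; the exceedance ratio is then bounded by a constant independent of $\pareto$ and the terminal cost is linear in $\sum_{v\in\V}\paretoI{v}$. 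You list the ingredients (linearity of $\flowLim{\min}$, positivity of $l_e$) but do not assemble them for the terminal state, which is where the bound is actually needed.
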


Examples of networks that belong to this class of systems are certain traffic or processing networks, which we discuss in detail in Sections~\ref{trafficApp} and \ref{subsec:processingEx}.

These satisfiability results are crucial because they show that our framework captures a broad range of network settings through appropriate choices of flow and production cost structures. Furthermore, they establish sufficient conditions under which our main results on the tail behavior of the cascade cost $\cascadeCost$, presented in Section~\ref{mainRes}, apply. In particular, they lend theoretical support to the hypothesis that a Pareto-tailed input induces a Pareto-tailed cascade cost, suggesting the existence of universal heavy-tailed behavior across diverse classes of flow networks. This universal behavior is further substantiated in Section~\ref{Applications}, where we examine the cascade cost in three real-life applications modeled within our framework and obtain consistent heavy-tailed behavior through the application of the aforementioned propositions and asymptotic results.

The formal proofs of the propositions are provided in Appendix~\ref{subsec:proofsPropositions}. These proofs build on a sequence of auxiliary results that establish scale invariance and SRC properties of the flow, production, and flow capacity functions at each stage of the cascade process, which are collected in Appendix~\ref{proofs}. The overall strategy proceeds inductively, characterizing how the cascade dynamics respond to perturbations of the input distribution. 

\section{Application examples}\label{Applications}
In this section, we present three examples to illustrate how our framework can be used to model and analyze cascades in various applications. First, we show an example of a power transmission system where failures are associated with line overloads, and a cascade cost represents the total amount of power lost during a blackout. Second, we discuss an example of a highway network where the cascade represents the propagation of congestion, and the cascade cost is given by either the aggregate additional travel cost or the number of travelers that could not make the intended journey due to network collapse. Last, we apply our framework to model cascades in processing networks, where jobs are sent through a system of processors of a given capacity. Here, scale-free input represents job sizes, and, unlike in the other two examples, the failures occur on vertices instead of edges. We show that this scenario can still be studied with our model, under small modifications to the underlying graph. 

\subsection{Power transmission systems} \label{subsec:powerEx}
Blackouts in power transmission systems are a classic application of cascading failure models, and over the years, many variants of such models have been proposed \citep{OPA2002, Branching2004, Cascade2003, Nesti_2020, Guo2017}. Here, we show that this application can also be modeled within our general modeling framework, described in Section~\ref{modelDescription}. To do so, the parameters and functions of our framework must be specified to capture the mechanisms of power transmission networks. In this section, we identify the choices suitable for modeling blackouts, resulting in a \deleted{specific }model closely resembling \replaced{the one}{those} in~\citep{Nesti_2020}.

In cascading models for power transmission systems, the graph $\G$ represents the underlying physical network, where the vertices $v\in \V$ are associated with the buses and the edges $e\in \EE$ with the power lines. 
There is only one commodity transported through the network --- electricity, hence $\K = \{1\}$. 

In real transmission systems, buses can be purely load buses, purely generator buses, or a combination of both. In macroscopic cascade models for blackouts, this is simplified by treating each vertex as a bus that can both produce and consume electricity. The major buses are typically located near large demand centers, so for modeling purposes, we associate each vertex with a representative city.

According to our general framework in Section~\ref{modelDescription}, because there is only one commodity, the source requirement matrix $\demM$ is given by $$\demM =\text{diag}(\pareto)\cdot  \bm{Q} = (q_{1,1} \paretoI{1}, \dots q_{\nv, 1}\paretoI{\nv})^T.$$ 
Here, $\paretoI{v}$ denotes the population of the city associated with vertex $v$, and $q_{v,1}$ can be interpreted as the average electricity requirement per inhabitant of that city.

In cascading failure models for power transmission systems, it is common to model the power flow using the so-called DC power flow problem, which is a linearization of the AC power flow problem \citep{DCPowerFlowRevisited}. Specifically, for a netput matrix $\bm{U}$, the corresponding flow matrix is given by $\flowM^* = \bm{V}\bm{U}$, where $\bm{V}$ is the Power Transfer Distribution Factors matrix, uniquely defined by the underlying network. In particular, 
\(\bm{V} := \bm{S}\incM^T\bm{L}^+,\)
where $\bm{S}$ is the $\nee\times \nee$ diagonal matrix of line susceptances, $\bm{L} := \incM\bm{S}\incM^T\in \R^{\nv\times\nv}$ is the weighted graph Laplacian with edge weights given by susceptance, and $+$ denotes the matrix pseudoinverse\deleted{ operator}. 

It is known that the DC power flow $\flowM^* = \bm{V}\bm{U}$ can be written as an instance of the minimal-cost flow, which we state rigorously in the following lemma.
\begin{lemma}
    $\flowM^* = \bm{V}\bm{U}$ is the optimal solution to Problem \eqref{optFlow} with cost function \(c_f = \sum_{e\in \EE}\frac{1}{2S_{e,e}}|f_e|^2.\)\label{DCPowerFlow}
\end{lemma}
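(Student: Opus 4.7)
The plan is to recognize Problem \eqref{optFlow} with the stated cost as a strictly convex quadratic program in the single-commodity flow vector $\bm{f}\in\R^{|\EE|}$, and to solve its KKT system explicitly. Writing the cost as $\frac{1}{2}\bm{f}^{T}\bm{S}^{-1}\bm{f}$ with $\bm{S}^{-1}$ positive definite (since all line susceptances $S_{e,e}>0$), the objective is strictly convex in $\bm{f}$, so any feasible stationary point is automatically the unique global minimizer. I would use this remark to justify that it suffices to exhibit one such point and identify it with $\bm{V}\bm{u}$.

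First, I would introduce Lagrange multipliers $\bm{\lambda}\in\R^{|\V|}$ for the constraint $\incM\bm{f}=\bm{u}$ (where $\bm{u}\in\R^{|\V|}$ denotes the single column of $\bm{U}$), form the Lagrangian $L(\bm{f},\bm{\lambda})=\frac{1}{2}\bm{f}^{T}\bm{S}^{-1}\bm{f}+\bm{\lambda}^{T}(\bm{u}-\incM\bm{f})$, and impose stationarity $\nabla_{\bm{f}}L=\bm{0}$, which yields $\bm{f}^{*}=\bm{S}\incM^{T}\bm{\lambda}^{*}$. Substituting into primal feasibility gives $\bm{L}\bm{\lambda}^{*}=\bm{u}$, with $\bm{L}=\incM\bm{S}\incM^{T}$ the weighted Laplacian. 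I would then observe that because $\G$ is connected, $\ker\bm{L}=\mathrm{span}(\e{|\V|})$, while the production-requirement balance \eqref{prodBalance} forces $\e{|\V|}^{T}\bm{u}=\e{|\V|}^{T}(\demM-\genM)=0$. Consequently $\bm{u}\in\mathrm{range}(\bm{L})$, and the dual equation admits the particular solution $\bm{\lambda}^{*}=\bm{L}^{+}\bm{u}$.

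Back-substitution then gives $\bm{f}^{*}=\bm{S}\incM^{T}\bm{L}^{+}\bm{u}=\bm{V}\bm{u}$. The one place where I expect any genuine care is in handling the pseudoinverse: one must check that the additive ambiguity of $\bm{\lambda}^{*}$ along $\ker\bm{L}$ does not contaminate $\bm{f}^{*}$, and that $\bm{L}\bm{L}^{+}\bm{u}=\bm{u}$ indeed holds. Both reduce to the identity $\incM^{T}\e{|\V|}=\bm{0}$ (every column of $\incM$ has exactly one $+1$ and one $-1$), combined with the already-established inclusion $\bm{u}\in\mathrm{range}(\bm{L})$. Strict convexity of the objective then upgrades this stationary point to the unique global minimizer, completing the identification $\flowM^{*}=\bm{V}\bm{U}$.
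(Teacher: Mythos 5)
Your proof is correct and takes essentially the same route as the paper: both verify/derive the solution via the KKT conditions of the equality-constrained convex QP, using that $\incM^T\e{_{\nv}}=\bm{0}$, that the balance constraint forces $\e{_{\nv}}^T\bm{U}=\bm{0}$ so $\bm{U}$ lies in the range of the Laplacian $\bm{L}$, and that strict convexity yields uniqueness. The only cosmetic difference is that you solve the KKT system forward to derive $\bm{V}\bm{U}$, whereas the paper plugs in the candidate $\flowM=\bm{V}\bm{U}$, $\mu=-\bm{L}^+\bm{U}$ and checks the conditions directly.
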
 
We note that the above flow cost is an instance of \eqref{flowCost}, with $b_e = 1$ for all $e\in \EE$, $\beta = 2$, and $a_e = S_{e,e}$. The proof uses the Karush-Khun-Tucker (KKT) conditions to prove optimality of the proposed solution. The proof is provided in Appendix~\ref{app:KKTproof}. 

 In models of power transmission systems, the question of how much power has to be generated and where is typically formulated as an optimization problem, called the Optimal Power Flow (OPF) problem. Although there are many variants of the OPF problem, common formulations aim to minimize the generation cost, while complying with safety operating limits and meeting the total power demand \citep{Frank01122016}. If the OPF problem assumes a DC power flow distribution, the optimization problem is called the DC-OPF problem \citep{Baker21}. This problem can be viewed as an instance of the resource allocation problem~\eqref{optProd}, considered in our framework, where 
 \begin{enumerate}
     \item[(i)] power generation corresponds to source production, 
     \item[(ii)] the balance of total demand and generation is ensured by Constraint~\eqref{optProdBalance},
     \item[(iii)] the power generation capacity is given by Constraint~\eqref{optProdConst}, and
     \item[(iv)] the line capacity constraint is ensured by Constraint~\eqref{optFlowConst}. 
     \end{enumerate} A common approach for the cost function is to use a quadratic form \citep{Molzahn2017}, which in our model corresponds to the choice of $\gamma = 2$ and $c_{v,k} = 1$ for all $v\in \V$ in \eqref{optProd}.

In the general framework in Section~\eqref{modelDescription}, the operational line capacities $\flowLimVec^{(1)}$ are determined in the planning phase by solving the resource allocation problem~\eqref{optProd} with $\flowLimVec^{(0)} = \infty$ and $\lambda^{(0)} = 1$; a similar approach has been used in other cascading failure models, including models of power networks \citep{Nesti_2020, Motter2002}. The planning slack parameter $\tau$ allows to specify how much larger the line capacities are compared to the benchmark given by the planning flows $\flowTot^{(0)}$ (see \eqref{flowCapacity}). The value of $\varepsilon_{\min}$ can be chosen based on properties of power lines in the network. The operational source production $\genM^{(1)}$ is obtained by solving the minimal-cost source production problem \eqref{optProd} with input $\demM$, $\genMLim$, and $\flowLimVec^{(1)}$, and $\lambda^{(1)}\in (0,1)$. Here, $\lambda^{(1)}$ plays the role of a safety parameter, ensuring that at most a fraction $\lambda$ of true line capacities is utilized during a typical operation of the system.

In the emergency phase, our general framework assumes that the cascade of line failures is initiated randomly, according to some probability law $p^{(1)}(\cdot)$ that allows for correlations among the initial line failures. This law could be chosen based on statistical data or real-life features of the system (for example, lines located in wooded areas are more likely to fail, due to the proximity of trees \citep{Parent2019}). In power transmission systems, if a line is overloaded, it fully breaks down and is no longer usable. To model this, we exclude partial failures by setting $p^{(1)}(\bm{x})=0$ whenever $x_i=\mathrm{P}$ for some $i\in\EE$, and by taking $p_{e,c}(\cdot)\equiv 0$ for all $e\in\EE$. Thus, the only admissible failure type is removal (complete failure). Given this, it is reasonable to assume that all lines may fail only once, i.e., $n_e = 1$ for all $e\in \EE;$. In other words, each line is either fully functional or fully broken. At cascade stage $t\geq 2$, if the flow on a given line exceeds its capacity, i.e., $\exc{e}^{(t)}>1$, the line fails with probability $p_{e,r}(\exc{e}^{(t)})$, according to a Bernoulli distribution. Again, the function $p_{e,r}(\cdot)$ could be chosen based on statistical data or physical properties of line $e$.

The cascade cost or blackout size in power transmission systems is often measured in the amount of electricity that cannot be delivered due to disconnections in the network. This measure, in our model, is equivalent to the amount of requirement lost in the process of balance restoration. Hence\deleted{, in this setting}, the appropriate choice for the cascade cost function $\cascadeCost$ is the cascade size, defined in \eqref{generalizedFailureCost}, with $\constCostFunction{v,k} = 1$ for all $v\in \V$ and $\rho = 1$. Other cost functions could also be considered by choosing appropriate values for $\constCostFunction{v,k}$ and~$\rho$.

To conclude, we have shown how to choose functions $\flowM^*$, $\genM^*$, and $\cascadeCost$ in order to model cascading failures in power transmission systems using the modeling framework proposed in this paper. According to Proposition~\ref{propClassFunc1}, the chosen functions satisfy Assumption \ref{assumption} with $\delta = \rho$, implying that Theorem \ref{mainThm} holds. This finding is in line with the results in \citep{Nesti_2020} and \citep{janicka2024scalefree}, which establish that the total cascade size has a scale-free tail with power-law exponent $\alpha$. Our result is novel in two respects. First, we employ a probabilistic failure mechanism in the emergency phase, whereas the models in \citep{Nesti_2020, janicka2024scalefree} used a deterministic rule that removed all lines for which capacity was exceeded. Second, we generalize the cost function by introducing the parameter $\rho$, inherited from the generalized cascade cost~\eqref{generalizedFailureCost}. In earlier studies, only the case $\rho = 1$ was considered, corresponding to a direct proportionality between the cost and the total failed demand. Our formulation allows for more general measures of cascade impact, and shows explicitly how the tail exponent changes from $\alpha$ to $\alpha / \rho$ depending on the chosen cascade cost function.

\subsection{Highway traffic networks} \label{trafficApp}
A significantly different application \deleted{example }of our framework involves the modeling of congestion in highway traffic networks, which can also be interpreted as a cascade process. The example \deleted{described }in this section bears many similarities with the cascade model for highway traffic \deleted{presented }in~\citep{Janicka_Sloothaak_Vlasiou_Zwart_2025}. Minor differences occur in the construction of the source production and sink requirements and in the cascade initiation mechanism. We explain these differences as we introduce these modeling components in the following paragraphs.

The network of highways is modeled using a connected directed graph $\G$, where the vertices $v\in \V$ represent highway crossings, often associated with cities, and the edges represent highways. Note that, unlike in power systems, edges can transport flow only in a single direction, hence, two separate directed edges represent a bi-directional highway. Moreover, since the vertices are associated with cities, the vertex weight vector $\pareto$ represents the vertex of city sizes.

In highway traffic networks, commuters travel from an origin location (vertex) to their destination. The flow on each edge represents the intensity of commuters that travel through this edge at a given time. As is common in traffic literature, we assume that vehicles with a destination at a given location constitute a separate commodity. Hence, in this model, we have $\K = \V$ because each vertex can be a destination. 

The sink requirement $\dem[v,k]$ of commodity $k\in \K$ at vertex $v\in \V$ represents the number of commodity $k$ commuters that travel to vertex $v$, while the source production $\gen[v,k]$ represents the number of vehicles of commodity $k$ that originate at vertex $v$. Due to the way we have defined commodities, it is crucial that $\dem[v,k] > 0$ if and only if $v = k$. Hence, 
\[\demM = \text{diag}(\pareto) \cdot \bm{Q},\]
with $q_{v,w}>0$ if $w = v$ and $q_{v,w} = 0$ otherwise. This means that the number of commuters traveling to vertex $v$ grows linearly with its size, which is a fair assumption, as larger cities offer more opportunities for business and leisure. With this motivation in mind, the parameter $q_{v,v}$ can be interpreted as the ``attractiveness'' factor of vertex $v$. Similarly, we assume that $\gen[v,k]^* = r_{v,k}\paretoI{k}$, $r_{v,k}>0$ such that for every $k\in \K$, $\sum_{v\in \V}r_{v,k} = q_{k,k}$. In matrix notation, the source production can be written as
\[\genM^*(\pareto) =  \bm{R} \cdot \text{diag}(\pareto),\] where $\bm{R} = (r_{v,k})_{v\in\V, k\in \K}$. Note that the use of letter $\bm{R}$ is intentional as this choice of $\genM^*$ is a special case of \eqref{eq:linearSource}. The parameter $r_{v,k}$ can be interpreted as the fraction of commuters traveling from vertex $v$ to vertex $k$. In this example, the meaning behind the source production is atypical; it is not a resource that can be generated at suitably chosen locations, but it is exogenously induced by the traveling preferences of the commuters. As such, the notion of production capacity in this setting is not meaningful, which is why we set $\genMLim= \infty$, nullifying its influence on the behavior of the model. 

Alternatively, a commodity could be defined as travelers \textit{originating} at a given vertex. This has been done in \citep{Janicka_Sloothaak_Vlasiou_Zwart_2025}, where the authors also considered an alternative interpretation of sink and source matrices. There, the amount of traffic \textit{originating} from a given location is proportional to its size, which intuitively means that the roles of matrices $\demM$ and $\genM^*$ are swapped, compared with $\demM$ and $\genM^*$ defined here. Note that the model presented in this paper is capable of supporting either formulation, and the choice does not influence the resulting power-law behavior of the cascade cost. 

A common choice for the flow matrix $\flowM^*$ in traffic networks represents the so-called Wardrop \replaced{UE}{User Equilibrium} flow distribution, which we also apply in this example. This equilibrium distribution ensures that no individual commuter can improve their travel time by choosing an alternative route, given that all other commuters do not change their \deleted{travel }behavior. As stated in Section~\ref{subsec:FlowFunctions}, this flow matrix can be obtained as a solution to Problem~\eqref{optFlow} over the non-negative domain $\R^{\nee\times \nv}_+$ with cost function \eqref{wardrop}. Here, the parameter $d_e$ represents the ``free flow'' travel cost on edge $e$, $b_e$ represents the congestion effect of edge $e$, and $\gamma$ represents the overall congestion effect on travel cost. Moreover, the travel cost is a decreasing function of the edge capacity, which reflects the fact that a wider road (or with more lanes) typically allows for faster travel time (cost). 

It remains to choose the parameters $\tau$ and $\lambda$. Similarly to many traffic models, we choose $\tau>1$, which implies that the capacity of the system is at least $\tau$ times larger than the capacity required to accommodate traffic in the operational phase.  Note that the flow in highway traffic is the aggregate flow of individual commuters, who choose their routes independently. As such, there is no overlooking party able to ensure that the flow magnitudes are within the safety level determined by the parameter $\lambda$. What is more, in traffic networks, the exceedance of edge capacity, i.e., congestion, happens regularly and does not immediately render the edge unusable, but it increases the travel cost through this edge. This means that the flow capacity constraint is not an integral aspect of traffic networks and, to reflect this, we set $\lambda = \infty$, implying that Constraint~\eqref{optFlowConst} always holds. 

Before we discuss the cascading congestion effects, we first show that the choices for $\genM^*$, $\genMLim^*$, and $\flowM^*$ satisfy the assumptions of the model. Since $\genM^*$ does not depend on $\flowLimVec$ and $\genMLim$, it immediately follows that the planning and operational source requirements are equal, i.e., $\genM^{(0)}(\pareto) = \genM^{(1)}(\pareto)$. This is a desired behavior, as the source production is exogenous and therefore not influenced by the specifics of the network design determined in the planning stage. Moreover, $\genM^*$ satisfies Constraint~\eqref{optProdBalance} since
\[\e{_{\nv}}^T \genM^* =\e{_{\nv}}^T\bm{R}\cdot \text{diag}(\pareto)  = (q_{1,1}, \dots, q_{\nk,\nk}) \cdot \text{diag}(\pareto)= \e{_{\nv}}^T\text{diag}(\pareto)\cdot \bm{Q} =\e{_{\nv}}^T\demM(\pareto),\]
while Constraints $\eqref{optProdConst}$ and $\eqref{optFlowConst}$ are immediately satisfied given the choice of $\genMLim$ and $\lambda$. Hence, $\genM^{(0)}(\pareto)$ and $\genM^{(1)}(\pareto)$ both satisfy the assumptions of the model. Moreover, $\flowM^{(0)}$ and $\flowM^{(1)}$ are well-defined as the objective function given in~\eqref{wardrop} is strictly convex and graph $\G$ is connected, implying that \eqref{optFlow} has a unique solution for every input $\bm{U}$ that satisfies the production balance constraint (Constraint \eqref{optProdBalance}). Hence, we conclude that the proposed setup meets the criteria of our model. 

Next, we describe the emergency phase, modeling the congestion cascade in the network. The first failure(s) (either congestion or complete failure) may be caused by an external event, such as flooding, or by an internal event such as a road accident. Each edge failure leads to a decrease in edge capacity by a factor $\failFactor_e(\exc{e})$. If $\failFactor_e(\exc{e})>0$, then the congestion is only partial, for example, one of the lanes is blocked, therefore the road can still be used but the corresponding travel cost increases. Otherwise, the road is fully blocked and traffic through the edge needs to be completely redirected. Note that in \citep{Janicka_Sloothaak_Vlasiou_Zwart_2025}, the factor $\failFactor_e$ at the first cascade stage was modeled as a random variable, whereas here it is a deterministic function, resulting in a negligible difference between the two models. If complete edge failures are allowed, it is convenient to measure the cascade cost using the cascade size function, which tells us how many commuters could not make their trip due to congestion on the roads. Otherwise, it is more suitable to use the cascade flow cost, representing the added travel cost caused by the cascade. 

In conclusion, it follows from Proposition~\ref{propClassFunc2} that Assumption \ref{assumption} is satisfied with $\delta = 1$ or $\delta = \rho$, depending on the choice of the cascade cost function $\cascadeCost$. This implies that Theorem \ref{mainThm} holds and the tail of the cascade cost distribution is scale-free with parameter $\alpha$ or $\alpha/\rho$, respectively, which depends on the choice of $\cascadeCost$.

\subsection{Processing networks} \label{subsec:processingEx}

Our framework can also be used to describe cascading behavior in \emph{processing networks}---systems where jobs traverse a network and are transformed as they pass through processing nodes. Examples include:
\begin{itemize}
    \item \textbf{Manufacturing systems}, where raw materials are assembled into products,
    \item \textbf{Communication networks}, where data packets are processed by routers or servers,
    \item \textbf{Supply chains}, where goods move through warehouses and distribution centers toward end users.
\end{itemize}

Processing networks are diverse, and their modeling depends, amongst others, on whether production is decentralized or centralized, whether the network permits cyclic rerouting of jobs, and how flow is redistributed. In what follows, we model a simple subclass suitable for systems such as content delivery networks (CDNs): we assume a directed, acyclic network in which origin vertices, representing requests from customers, have unbounded production capacity, processing occurs at finite-capacity resources, and routing is determined by minimizing a capacity-dependent flow cost function. We later comment on how these assumptions can be modified to capture other subclasses of processing networks.

As in the traffic flow setting, we approximate the job flow with a continuous fluid, which is a common approach when dealing with a large volume of jobs. We model the network as consisting of \emph{origin vertices}, \emph{destination vertices}, and \emph{processing vertices}. Jobs enter the system at origin vertices, undergo a series of transformations at processing vertices, and exit the system at their assigned destinations. A small example of a directed, acyclic processing network is shown in Figure~\ref{fig:ProcessingNetwork}. Although cycles are uncommon in CDNs, the modeling approach presented in this section also applies to networks with cyclic topology.

\begin{figure}[ht]
    \centering
    \begin{subfigure}[ht]{0.49\textwidth}
\scalebox{0.8}{
        \begin{tikzpicture}
  \tikzset{vertex/.style = {shape=circle,draw, fill = black, minimum size=1em, inner sep = 0pt}}
  \tikzset{edge/.style = {->,> = latex'}}

  \node[vertex, fill = BrickRed, draw = BrickRed] (1) at (0,0) {};
  \node[vertex, fill = BrickRed, draw = BrickRed] (2) at (3,0) {};
  \node[vertex, fill = BrickRed, draw = BrickRed] (3) at (-3,0) {};
  \node[vertex] (4) at (-1.5,-2.2) {};
    \node[vertex] (5) at (1.5,-2.2) {};
  \node[vertex] (6) at (0,-4.4) {};
  \node[vertex] (7) at (3,-4.4) {};
  \node[vertex] (8) at (-3,-4.4) {};
  \node[vertex, fill = Violet, draw = Violet] (9) at (0,-6.6) {};
  \draw[-{Stealth[length=2mm, width=1.5mm]}] (1) -- (4);
  \draw[-{Stealth[length=2mm, width=1.5mm]}] (1) -- (5);
  \draw[-{Stealth[length=2mm, width=1.5mm]}] (2) -- (5);
  \draw[-{Stealth[length=2mm, width=1.5mm]}] (3) -- (4);
  \draw[-{Stealth[length=2mm, width=1.5mm]}] (4) -- (6);
  \draw[-{Stealth[length=2mm, width=1.5mm]}] (4) -- (8);
  \draw[-{Stealth[length=2mm, width=1.5mm]}] (5) -- (6);
  \draw[-{Stealth[length=2mm, width=1.5mm]}] (5) -- (7);
  \draw[-{Stealth[length=2mm, width=1.5mm]}] (6) -- (9);
  \draw[-{Stealth[length=2mm, width=1.5mm]}] (7) -- (9);
  \draw[-{Stealth[length=2mm, width=1.5mm]}] (8) -- (9);
  \draw[dashed, fill = gray] (-4,-1.1) -- (4,-1.1);
    \draw[dashed, fill = gray] (-4,-5.5) -- (4,-5.5);

 \end{tikzpicture}}
    \caption{Orginal graph.\vspace{0.75cm}}
    \label{fig:ProcessingNetworkOriginal}
            
    \end{subfigure}
    ~
    \begin{subfigure}[ht]{0.49\textwidth}
    \scalebox{0.8}{
        \begin{tikzpicture}
  \tikzset{vertex/.style = {shape=circle,draw, fill = black, minimum size=1em, inner sep = 0pt}}
  \tikzset{edge/.style = {->,> = latex'}}

\draw[dashed, fill = gray!20] (-1.5,-2.2) ellipse (0.5cm and 1cm);
\draw[dashed, fill = gray!20]  (1.5,-2.2) ellipse (0.5cm and 1cm);
\draw[dashed, fill = gray!20]  (-3,-4.4) ellipse (0.5cm and 1cm);
\draw[dashed, fill = gray!20]  (0,-4.4) ellipse (0.5cm and 1cm);
\draw[dashed, fill = gray!20]  (3,-4.4) ellipse (0.5cm and 1cm);
  
  \node[vertex, fill = BrickRed, draw = BrickRed] (1) at (0,0) {};
  \node[vertex, fill = BrickRed, draw = BrickRed] (2) at (3,0) {};
  \node[vertex, fill = BrickRed, draw = BrickRed] (3) at (-3,0) {};
  \node[vertex] (4a) at (-1.5,-1.8) {};
    \node[vertex] (5a) at (1.5,-1.8) {};
  \node[vertex] (4b) at (-1.5,-2.6) {};
    \node[vertex] (5b) at (1.5,-2.6) {};
  \node[vertex] (6a) at (0,-4) {};
  \node[vertex] (7a) at (3,-4) {};
  \node[vertex] (8a) at (-3,-4) {};
  \node[vertex] (6b) at (0,-4.8) {};
  \node[vertex] (7b) at (3,-4.8) {};
  \node[vertex] (8b) at (-3,-4.8) {};
  \node[vertex, fill = Violet, draw = Violet] (9) at (0,-6.6) {};
  \draw[-{Stealth[length=2mm, width=1.5mm]}] (1) -- (4a);
  \draw[-{Stealth[length=2mm, width=1.5mm]}] (1) -- (5a);
  \draw[-{Stealth[length=2mm, width=1.5mm]}] (2) -- (5a);
  \draw[-{Stealth[length=2mm, width=1.5mm]}] (3) -- (4a);
  \draw[-{Stealth[length=2mm, width=1.5mm]}] (4b) -- (6a);
  \draw[-{Stealth[length=2mm, width=1.5mm]}] (4b) -- (8a);
  \draw[-{Stealth[length=2mm, width=1.5mm]}] (5b) -- (6a);
  \draw[-{Stealth[length=2mm, width=1.5mm]}] (5b) -- (7a);
  \draw[-{Stealth[length=2mm, width=1.5mm]}] (6b) -- (9);
  \draw[-{Stealth[length=2mm, width=1.5mm]}] (7b) -- (9);
  \draw[-{Stealth[length=2mm, width=1.5mm]}] (8b) -- (9);
  \draw[-{Stealth[length=2mm, width=1.5mm]}] (4a) -- (4b);
  \draw[-{Stealth[length=2mm, width=1.5mm]}] (5a) -- (5b);
  \draw[-{Stealth[length=2mm, width=1.5mm]}] (6a) -- (6b);
  \draw[-{Stealth[length=2mm, width=1.5mm]}] (7a) -- (7b);
  \draw[-{Stealth[length=2mm, width=1.5mm]}] (8a) -- (8b);
  \draw[dashed, fill = gray] (-4,-1.1) -- (4,-1.1);
    \draw[dashed, fill = gray] (-4,-5.5) -- (4,-5.5);


\end{tikzpicture}}
    \caption{Modified graph, which arises by replacing each processing vertex with a pair of vertices connected by a directed edge.}
    \label{fig:ProcessingNetworkMod}
    \end{subfigure}
    \caption{Example of a small processing network. Here, red, black, and blue vertices correspond to the origin, processing, and destination vertices, respectively.}
    \label{fig:ProcessingNetwork}
\end{figure}
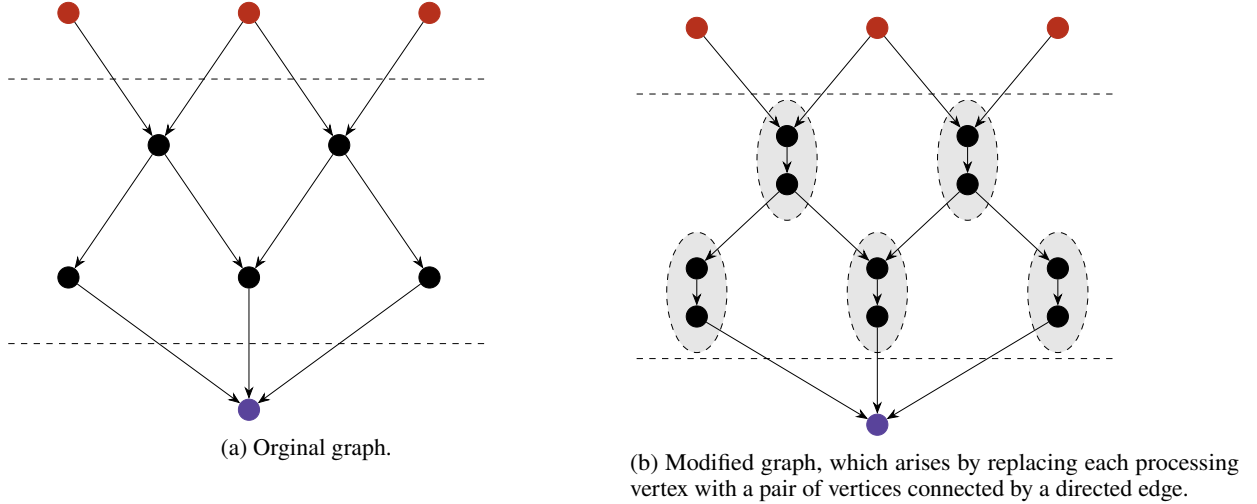

In classical examples of processing networks, including CDNs, the processing capacity is associated with the \emph{vertices} \citep{Ren2018, tang2016complex, Fu2023}. In contrast, our modeling framework in Section~\ref{modelDescription} assumes that processing occurs on the \emph{edges}. However, we can circumvent this difference by modeling each processing vertex as two auxiliary vertices connected by a single \emph{processing edge}, as shown in Figure~\ref{fig:ProcessingNetworkMod}. The capacity of a processing vertex is thus captured by the capacity of the corresponding edge.

To \deleted{clearly }distinguish different network components, we define
\(
\V = \V_o \cup \V_d \cup \V_p\) and \( \EE = \EE_p \cup \EE_r,
\)
where:
\begin{itemize}
    \item \( \V_o, \V_d, \V_p \) denote the sets of origin, destination, and processing vertices, respectively,
    \item \( \EE_p \) denotes the set of \emph{processing edges}, each taking the form \( (v_p, v_p) \) for \( v_p \in \V_p \),
    \item \( \EE_r \) denotes the set of all other (routing) edges.
\end{itemize}
While in general processing networks a vertex may simultaneously act as an origin, a processing node, and/or a destination, such dual roles can be removed without loss of generality by introducing auxiliary origin and destination vertices so that each vertex in the model plays a singular role.

The \emph{arrival of jobs at origin vertices} corresponds to source production in our model, while the \emph{departure at destination vertices} corresponds to sink requirements. Each commodity represents a different job type.
\subsubsection*{Handling job arrivals}
In many real-world processing networks, job sizes are heavy-tailed \citep{crovella1998heavy, Leland93, resnick1997heavy}---for example, file sizes in communication networks or CDNs. In our example, we model their impact via source production rates at origin nodes: larger jobs require proportionally more processing per unit time, so heavy-tailed job sizes yield heavy-tailed source production rates in the aggregated flow.  Intuitively, one could model this by letting the source production \( \genM^{(0)} \) be proportional to the heavy-tailed vertex sizes $\pareto$, with the sink requirements \( \demM^{(0)} \) defined as a function of \( \genM^{(0)} \). However, this is the opposite of the structure assumed in our model, where \( \genM^{(0)} \) is derived from  \( \demM^{(0)} \). To resolve this mismatch, we reinterpret source and sink roles using sign conventions: \emph{negative production corresponds to demand}, and \emph{negative demand corresponds to production}. We thus define:
\[
\demM^{(0)} = -\mathrm{diag}(\pareto) \bm{Q}, \quad 
\genM^{(0)} = -\bm{R} \cdot \mathrm{diag}(\e{_{\nv}}^T \demM^{(0)}),
\]
as given in Equation~\eqref{eq:linearSource}.
Moreover, we set:
\(
q_{v,k} = 0\) for all \(v \in \V_d \cup \V_p,\) such that \(\sum_{k \in \K} q_{v,k} = 1\) for all \(v \in \V_o.\) This allows us to interpret \( \paretoI{v} \) as the total arrival rate of jobs at origin vertex \( v \), and \( q_{v,k} \) as the fraction of the jobs arriving at $v$ that are of type $k$ .  We further assume that the matrix \( \bm{R} \) satisfies \( r_{v,k} > 0 \) if and only if \( v \in \V_d \), where $r_{v,k}$ indicates the fraction of type \( k \) jobs destined for \( v \).

Equation~\eqref{eq:linearSource} is a special case of~\eqref{optProd}, under the assumption of no generation and flow constraints, i.e., $\genMLim = \infty$ and $\lambda = \infty$. This choice corresponds to decentralized generation and is justified in settings such as CDNs, where users can generate an unbounded number of requests and do so without awareness of the network’s congestion state. In other processing networks, for example manufacturing, production capacity is typically finite; while we do not treat this case explicitly in our satisfiability results in Section~\ref{sec:satisfiability}, Remark \ref{remarkNoProdCap} explains that our results can be extended to accommodate certain production capacity function $\genMLim$. For systems with centralized production, generation can instead be modeled using Equation~\eqref{optProd}.
\subsubsection*{Flows and cost structure}
As in the traffic model, we assume \emph{non-negative flow}, which means that jobs can only move in the assigned direction of each edge. This is common in many processing networks because even bidirectional links often have asymmetric capacities and control, making it natural to model each direction as a separate arc with non-negative flow. The flow is chosen to \emph{minimize the total processing cost}, defined using the cost function in Equation~\eqref{wardrop}, reflecting the centralized routing logic typical of CDN-like systems, where flows are allocated with awareness of processing capacities across the network. We assume that the flow on non-processing edges, i.e., edges in the set $\EE_r$ do not incur any cost, which is why we set:
\[
b_e = d_e = 0 \quad \text{for all } e \in \EE_r, \qquad d_e > 0, \ b_e \geq 0 \quad \text{for all } e \in \EE_p.
\]
This choice is not restrictive---positive costs on edges $e\in \EE_r$ could be introduced to capture, for example, transport delays or communication latencies in other types of processing networks. Here, \( d_e \) represents the fundamental cost of processing one unit of flow, and \( b_e \) controls the sensitivity of the cost to load; higher \( b_e \) values imply a higher dependence of the processing cost on the processing load.
\subsubsection*{Cascade process}
To model cascade failures, we assume \emph{only processing edges can fail}, i.e., \( p_{e,c}(x) = p_{e,r}(x) = 0 \) for all \( e \in \EE_r \). This assumption could be relaxed---for example, in transport or communication networks, routing edges may \deleted{also }fail due to disruptions or outages, \deleted{which can be }modeled by assigning nonzero failure probabilities to edges in $\EE_r$. The initial failure is triggered at random, and consequent failures occur when the flow on a processing edge exceeds its capacity \( \flowLim{e} \). The failure may be \emph{partial or complete}, and the associated cascade cost is measured using either Equation~\eqref{generalizedFailureCost}, which, if $\rho = 1$, counts the number of jobs that cannot be processed due to network disconnections or \eqref{deltaFlowCost}, measuring the flow cost increase due to congestions.

This setup satisfies \deleted{the conditions of }Proposition~\ref{propClassFunc2}, which implies that Theorem~\ref{mainThm} also holds in the proposed context of the processing networks. \replaced{S}{More s}pecifically, this result shows that the cascade cost has a Pareto tail with parameter $\alpha/\rho$ ($\rho = 1$ if $\cascadeCost$ is the cascade flow cost), where $\alpha$ governs the tail of the distribution of job arrivals to the system. 

We point out that this application example illustrates how our framework can be applied to model nodal cascade failures through simple modifications of the underlying graph. Note that here, for the sake of exposition, we presented a simple setting of a processing network, where each processing unit was capable of processing every job type. This could be generalized to the case of specialized servers by adapting the flow cost function. In particular, for every processing unit that is incapable of processing job type $k$, the flow function should assign an infinite cost to any non-zero flow of commodity $k$. Although such a generalization no longer satisfies Proposition~\ref{propClassFunc2} directly, we are confident that using a similar approach as in the proof of Proposition~\ref{propClassFunc2}, one could show that Assumption~\ref{assumption} remains true.

\section{Discussion}
This paper studies a universal mechanism for the emergence of Pareto-tailed cascade costs in a multi-commodity cascade model for flow networks. By parametrizing core model components, our framework captures a wide range of network behaviors and represents operational differences across systems such as power grids, highway traffic, and processing networks---an integration that, to the best of our knowledge, has not been achieved in prior models. Under mild assumptions, in Theorem~\ref{mainThm} we prove that cascade costs in this framework follow a scale-free tail. To make this result applicable in diverse settings, we identify broad classes of parametrized functions for resource allocation, flow distribution, and cascade cost measurement that satisfy these assumptions (Propositions \ref{propClassFunc1} and \ref{propClassFunc2}). While not exhaustive, these classes cover many practical systems, and the proof strategy used to establish them is versatile, allowing verification of other functional choices and further broadening the scope of our results.


Our theorem generalizes the asymptotic results of \citep{Nesti_2020,Janicka_Sloothaak_Vlasiou_Zwart_2025}, extending them to diverse applications and cost functions. \replaced{W}{Specifically, w}e prove that the probability of the cascade cost $\cascadeCost$ exceeding a large threshold $y$ decays as $l_\cascadeCost \cdot y^{-\alpha/\delta}$ and $y\rightarrow \infty$, where $\alpha$ and $\delta$ govern the tail of the vertex weight distribution and the scaling of the cost function, respectively. In contrast to previous studies \citep{Nesti_2020, Janicka_Sloothaak_Vlasiou_Zwart_2025}, which assumed linear cost functions ($\delta = 1$), our generalization shows how different cost structures change the tail behavior and allows for richer modeling of cascade costs across versatile networks. 

Theorem~\ref{mainThm} also supports the hypothesis \deleted{first }proposed by \cite{Nesti_2020} that the scale-free nature of cascade costs may be inherited from scale-free input distributions, such as those reflecting city sizes. Our findings take this theory one step further by demonstrating that \replaced{this}{the same} emergence mechanism is not application-specific but arises universally across a wide range of flow networks. \replaced{O}{The o}perational differences between flow networks influence the constant $l_\cascadeCost$, but not the scaling exponent $\alpha/\delta$, emphasizing the robustness of this phenomenon. 


Existing universal models provide important insights into heavy-tailed cascade behavior in networks, but none offer a comprehensive explanation across all contexts \citep{Sornette2006}. 
Self-Organized Criticality requires slow demand driving, separation of time scales, and fixed thresholds---conditions more typical of natural than engineered systems like power grids---and consequently empirical evidence for SOC in real-world systems remains limited \citep{Watkins_Pruessner_Chapman_Crosby_Jensen_2015}. Percolation models, relying on local redistribution, suit spatial contagion processes such as epidemics or wildfires but not flow networks with global redistribution of flows \citep{Zhang18}. In other approaches, such as load–capacity or preferential-attachment models, the scale-free phenomenon is inherited from the scale-free degree distribution of the network (e.g., airline networks or the internet) \citep{Lai2004, Barabasi2000, Yao_Zhang_2023}; however, many flow systems, including traffic, power, and certain processing networks, often lack such a property. Our model fills the gap, offering a universal explanation for Pareto-tailed cascades in flow networks, showing that this behavior can also arise from an exogenous Pareto-tailed demand driver, regardless of the underlying topology.


In most real-world networks, capacities result from the history of incremental planning and upgrades rather than from a single centralized optimization. The planning phase in our model is therefore an idealization, but one that aligns with common practice in cascade modeling, and is necessary to let capacities respond to probabilistic demand. Fixing capacities independently of demand would miss this essential dependence. By solving the planning problem with stochastic inputs, we approximate the capacity patterns that historical processes might produce under similar conditions, while preserving analytical tractability.

Our choices for the source generation matrix $\genM^*$ and flow matrix $\flowM^*$ in Section~\ref{sec:satisfiability} give rise to a bilevel optimization problem. In general, the behavior of such problems is not well understood, except in cases where the problem exhibits a special structure \citep{Dempe_Kalashnikov_Pérez-Valdés_Kalashnykova_2015}. For this reason, in Proposition~\ref{propClassFunc1}, we restrict attention to the case $\beta=2$, which results in a quadratic inner optimization problem (minimal cost flow) and ensures the uniqueness of the optimal solution to the outer problem (resource allocation). In Example~\ref{exampleNonQuasi}, we demonstrate that for other values of $\beta$, the quasi-convexity of the flow function, a sufficient condition for uniqueness in the resource allocation problem, is no longer preserved. Thus, extending our proposition to general $\beta>1$ requires a deeper understanding of the bilevel structure and, in particular, the conditions that ensure uniqueness. Such a generalization would broaden the admissible class of flow mechanisms, enhancing the applicability of our framework.

A structural property that plays a key role in our derivation of sufficient conditions is scale invariance with respect to the vertex weight vector $\pareto$. This property, however, may not be realistic in all settings, especially in systems where operational rules change with load intensity, represented here by $\pareto$. We argue that scale invariance is not strictly necessary; it may suffice to understand the asymptotic behavior of the cascade cost as $\max\{\pareto\}\rightarrow\infty$. Relaxing this assumption would require a more nuanced analysis to prove Proposition~\ref{sbj} and Theorem~\ref{mainThm}, but could significantly expand the scope of the model and admissible functional choices.

Furthermore, although we assume Pareto-tailed input weights $\pareto$, we expect that the results extend naturally to a broader class of regularly varying distributions, albeit at the cost of additional technical complexities in the proofs. Generalizing the input tail behavior in this way would further enhance the applicability of our framework to systems that fall outside the current assumptions.

Finally, our model assumes static sink requirements: demand remains fixed unless the loss of network connectivity renders it unattainable. In reality, however, some systems respond dynamically to failures—particularly over longer time scales—or exhibit inherently dynamic demand behavior. For example, in traffic networks, travelers may delay or cancel trips during heavy congestion, thereby altering sink demand, while in insurance networks, new claims arrive at each timestep, causing demand to evolve over time \citep{Blanchet_Shi_2012}. Extending our model to incorporate such dynamics could yield more realistic insights into phenomena such as cascade duration and time to recovery, and would also encompass other cascading systems, including insurance networks. Investigating whether the scale-free hypothesis holds in these dynamic settings is a promising direction for future research.

\bibliographystyle{unsrt}
\bibliography{sample}

\newpage

\appendix
    \section{Proofs of results in Section~\ref{mainRes}}
\subsection{Proof of Proposition \ref{sbj}} \label{EC:propProof}
\begin{proof}
From Assumption \ref{assumption}, we know that $\cascadeCost\leq \boundZ\cdot \sum_{v\in \V} \paretoI{v}^\delta$ for some constant $\boundZ>0$. Hence,
\[\cascadeCost\leq \boundZ\cdot \sum_{v\in \V} \paretoI{v}^\delta \leq \nv\cdot \boundZ\cdot \paretoI{\max}^\delta.\]
Let $\paretoI{\max - 1}$ denote the second largest vertex weight. Using the above bound on $\cascadeCost$, we obtain
\begin{align*}
    &\PR{\cascadeCost >y, \sum_{i = 1}^{\nv} \paretoI{i}^\delta > (1+\varepsilon) \paretoI{\max}^\delta }\leq \PR{\nv\cdot \boundZ\cdot \paretoI{\max}^\delta>y, \sum_{i = 1}^{\nv} \paretoI{i}^\delta - \paretoI{\max}^\delta > \varepsilon \paretoI{\max}^\delta }\\
    &\qquad\leq \PR{\paretoI{\max}^\delta>y/(\nv\cdot \boundZ), \sum_{i = 1}^{\nv} \paretoI{i}^\delta - \paretoI{\max}^\delta> \varepsilon y/(\nv\cdot \boundZ)}\\
    &\qquad\overset{(*)}{\leq}  \PR{\paretoI{\max}^\delta>y/(\nv\cdot \boundZ), (\nv-1)\paretoI{\max-1}^\delta >\varepsilon y/(\nv\cdot \boundZ)}\\&\qquad= \PR{\paretoI{\max}>\left(y/(\nv\cdot \boundZ)\right)^{1/\delta}, \paretoI{\max-1} >\left(\varepsilon y/(\nv\cdot \boundZ\cdot (\nv-1))\right)^{1/\delta}},
\end{align*}
\newcommand{\constSBJ}{\textcolor{black}{\kappa}}
where in $(*)$ we bound all but the largest vertex weight by $\paretoI{\max - 1}$. Furthermore, for any constants $\constSBJ_{\max}, \constSBJ_{\max - 1}>0$, 
\begin{align*}&\PR{\paretoI{\max}>\constSBJ_{\max}\cdot y^{1/\delta}, \paretoI{\max-1}> \constSBJ_{\max - 1} \cdot y^{1/\delta}} \\&\qquad\overset{(**)}{\leq} \nv(\nv - 1)\PR{\paretoI{1}> \constSBJ_{\max}\cdot y^{1/\delta}, \paretoI{2}>\constSBJ_{\max - 1}\cdot  y^{1/\delta}, \paretoI{1}\geq \paretoI{2}\geq \paretoI{i},~ i = 3,\dots, \nv}  \\
&\qquad \leq \nv(\nv - 1)\PR{\paretoI{1}> \constSBJ_{\max}\cdot y^{1/\delta}, \paretoI{2}>\constSBJ_{\max - 1} \cdot y^{1/\delta}} \\
&\qquad \overset{X_1, X_2 \text{ indep.}}{=}\nv(\nv - 1)\PR{\paretoI{1}> \constSBJ_{\max}\cdot y^{1/\delta}}\PR{ \paretoI{2}>\constSBJ_{\max - 1} \cdot y^{1/\delta}} = \OO{y^{-2\alpha/\delta}} \quad \text{as }y\rightarrow \infty,\end{align*}
where in $(**)$ we use the fact that there are $\nv(\nv - 1)$ possibilities for the two random variables to be the largest, each occurring with equal probability.
Thus, we conclude that 
\[ \PR{\cascadeCost>y, \sum_{i = 1}^{\nv} \paretoI{i}^\delta > (1+\varepsilon) \paretoI{\max}^\delta} =  \OO{y^{-2\alpha/\delta}} \quad \text{as }y\rightarrow \infty. \qedhere\]
\end{proof}

\subsection{Proof of Theorem~\ref{mainThm}}\label{EC:thmProof}

\begin{proof}
\hyperlink{stepA}{Step A}:  First, we partition the event $\PR{\cascadeCost>y}$ into disjoint cases, each defined by the particular realization of the cascade sequence $\cascadeSeqRand = \cascadeSeqDet$ for some $\cascadeSeqDet\in \cascadeSet$ and by a unique index $i\in \V$ such that $\paretoI{i}$ is the largest component of $\pareto$. Hence, we obtain
\[\PR{\cascadeCost>y} = \sum_{i\in \V}\sum_{\cascadeSeqDet\in \cascadeSet} \PR{\cascadeCost>y, \cascadeSeqRand = \cascadeSeqDet, \paretoI{j}< \paretoI{i}; \, \forall j\in \V\setminus\{i\}},\]
where we note that this is an equality since the probability that two independent Pareto-distributed random variables are equal occurs with probability zero.

\hyperlink{stepB}{Step~B}: Fix $\varepsilon\in (0,\frac{1}{2})$, and observe that
\begin{align*}
    \PR{\cascadeCost>y, \cascadeSeqRand = \cascadeSeqDet}&= \sum_{i=1}^{\nv} \PR{\cascadeCost>y, \cascadeSeqRand=\cascadeSeqDet, \sum_{j\neq i} \paretoI{j}^\delta \leq \varepsilon \paretoI{i}^\delta , \paretoI{j}\leq \paretoI{i};\, \forall ~j\in\V/\{i\}} \\&\qquad+ \underbrace{\sum_{i=1}^{\nv} \PR{\cascadeCost>y, \cascadeSeqRand=\cascadeSeqDet, \sum_{j\neq i} \paretoI{j}^\delta > \varepsilon \paretoI{i}^\delta , \paretoI{j}\leq \paretoI{i} ;\, \forall ~j\in\V/\{i\}}}_{\mathcal{O}(y^{-2\alpha/\delta}) \text{ by Prop.~\ref{sbj}}}\\
    &= \sum_{i=1}^{\nv} \underbrace{\PR{\cascadeCost>y, \cascadeSeqRand=\cascadeSeqDet, \sum_{j\neq i} \paretoI{j}^\delta \leq \varepsilon \paretoI{i}^\delta}}_{(\star)} + \mathcal{O}\left(y^{-2\alpha/\delta}\right),
\end{align*}
where the latter equality follows because $\varepsilon<\frac{1}{2}<1$.

\hyperlink{stepC}{Step~C}: Next, we consider the term $(\star)$ for some $i\in \V$. We condition on the value of $\paretoI{i}$, the event $\left\{\sum_{j\neq i}\paretoI{j}^\delta\leq \varepsilon \paretoI{i}^\delta \right\}$, and the occurrence of cascade $\cascadeSeqDet\in \cascadeSet$. The first conditioning requires a lower bound on the value of $\paretoI{i}^\delta $, which we derive next. By Assumption \ref{assumption}, $\cascadeCost$ is bounded from above by $\boundZ\cdot\sum_{i\in \V}\paretoI{i}^\delta$. Hence, if $\paretoI{i}^\delta \leq \frac{\varepsilon}{\boundZ}y$ and $\sum_{j\neq i} \paretoI{j}^\delta \leq \varepsilon \paretoI{i}^\delta $, then $$\sum_{i\in\V} \paretoI{i}^\delta \leq (1+\varepsilon) \paretoI{i}^\delta \leq (1+\varepsilon)\frac{\varepsilon}{\boundZ}y,$$ and therefore  \[\cascadeCost\leq \boundZ\cdot\left(\frac{\varepsilon}{\boundZ} +   \frac{\varepsilon^2}{\boundZ}\right)y \overset{\varepsilon\in (0,\frac{1}{2})}{\leq} \boundZ\left(\frac{1}{2\boundZ} + \frac{1}{4\boundZ}\right) y < y.\] Thus,  
\begin{equation}\PR{\cascadeCost>y,\, \cascadeSeqRand=\cascadeSeqDet, \,\sum_{j\neq i} \paretoI{j}^\delta \leq \varepsilon \paretoI{i}^\delta,\, \paretoI{i}^{\delta}<\frac{\varepsilon}{\boundZ}y } = 0.\label{eq:prob0}\end{equation}
In other words, the probability in consideration is only non-zero if $\paretoI{i}\geq\left(\frac{\varepsilon}{\boundZ} y\right)^{1/\delta}$. Proceeding with the conditioning on the value of $\paretoI{i}^\delta$ yields
\begin{align*}
    &\PR{\cascadeCost>y, \cascadeSeqRand=\cascadeSeqDet, \sum_{j\neq i} \paretoI{j}^\delta\leq \varepsilon \paretoI{i}^\delta}\\
    &\qquad = \int_{0}^\infty \PR{\cascadeCost>y, \cascadeSeqRand=\cascadeSeqDet, \sum_{j\neq i} \paretoI{j}^\delta\leq \varepsilon z^\delta y ~|~ \paretoI{i}^\delta = z^\delta y} \cdot \PR{\frac{\paretoI{i}}{y^{1/\delta}} \in \textnormal{d} z}\\
    &\qquad\overset{\eqref{eq:prob0}}{=}\int_{\left(\frac{\varepsilon}{\boundZ}\right)^{1/\delta}}^\infty \PR{\cascadeCost>y, \cascadeSeqRand=\cascadeSeqDet, \sum_{j\neq i} \paretoI{j}^\delta\leq \varepsilon z^\delta y ~|~ \paretoI{i}^\delta = z^\delta y} \cdot \PR{\frac{\paretoI{i}}{y^{1/\delta}} \in \textnormal{d} z}.
\end{align*}
Next, we condition on the remaining entries of $\pareto$. Let $\mathcal{X}_i(z) = \{\bm{x}\in \R_+^{\nv} : \sum_{j \neq i} x_j^\delta \leq \varepsilon x_i^\delta, x_i^\delta = z^\delta y\}$. Conditioning on all possible values of $\bm{x}\in\mathcal{X}_i(z)$, we obtain
\begin{multline*}
    \PR{\cascadeCost>y, \cascadeSeqRand=\cascadeSeqDet, \sum_{j\neq i} \paretoI{j}^\delta\leq \varepsilon \paretoI{i}^\delta}\\
    =\int_{\left(\frac{\varepsilon}{\boundZ}\right)^{1/\delta}}^\infty\int_{\bm{x}\in \mathcal{X}_i(z)} \PR{\cascadeCost>y, \cascadeSeqRand=\cascadeSeqDet  ~|~ \paretoI{i}^\delta = z^\delta y, \pareto = \bm{x}}\\\cdot \PR{\pareto\in \textnormal{d} \bm{x}~|~ \paretoI{i}^\delta = z^\delta y} \PR{\frac{\paretoI{i}}{y^{1/\delta}} \in \textnormal{d} z}.
\end{multline*}
Then, conditioning on $\cascadeSeqRand=\cascadeSeqDet$, we obtain
\begin{multline*}
    \PR{\cascadeCost>y, \cascadeSeqRand=\cascadeSeqDet, \sum_{j\neq i} \paretoI{j}^\delta\leq \varepsilon \paretoI{i}^\delta}\\ =\int_{\left(\frac{\varepsilon}{\boundZ}\right)^{1/\delta}}^\infty \int_{\bm{x}\in \mathcal{X}_i(z)}\PR{\cascadeCost>y  ~|~ \paretoI{i}^\delta = z^\delta y, \pareto = \bm{x}, \cascadeSeqRand=\cascadeSeqDet} \PR{\cascadeSeqRand=\cascadeSeqDet~|~ \paretoI{i}^\delta = z^\delta y, \pareto = \bm{x}}\\\cdot
    \PR{\pareto\in \textnormal{d}\bm{x}~|~ \paretoI{i}^\delta = z^\delta y} \PR{\frac{\paretoI{i}}{y^{1/\delta}} \in \textnormal{d} z}.
\end{multline*}

\hyperlink{stepD}{Step~D}: By Assumption~\ref{assumption}, for a given cascade $\cascadeSeqRand = \cascadeSeqDet$ and a given vertex weight vector $\pareto = \bm{x}$, the cascade cost $\detZ(\bm{x}, \cascadeSeqDet)$ is $\delta$-scale-invariant in $\bm{x}$. Moreover, the probability of observing cascade $\cascadeSeqRand=\cascadeSeqDet$ given $\pareto = \omega \bm{x}$ is the same for all $\omega>0$. We apply these two properties to the expression above and obtain
\begin{multline}
    \PR{\cascadeCost>y, \cascadeSeqRand=\cascadeSeqDet, \sum_{j\neq i} \paretoI{j}^\delta\leq \varepsilon \paretoI{i}^\delta}\\ =\int_{\left(\frac{\varepsilon}{\boundZ}\right)^{1/\delta}}^\infty \int_{\bm{x\in \mathcal{X}_i(z)}} \PR{\cascadeCost>z^{-\delta}  ~|~ \paretoI{i} = 1, \pareto = \bm{x}/(z^{\delta} y), \cascadeSeqRand=\cascadeSeqDet} \PR{\cascadeSeqRand=\cascadeSeqDet~|~ \paretoI{i} = 1, \pareto = \bm{x}/(z^{\delta} y)}\\\cdot\PR{\pareto \in \textnormal{d}\bm{x}~|~ \paretoI{i}^\delta = z^\delta y} \PR{\frac{\paretoI{i}}{y^{1/\delta}} \in \textnormal{d} z}.
\label{eq:condZ}\end{multline}


Next, we evaluate the term $\PR{\cascadeCost>z^{-\delta} ~|~ \paretoI{i}= 1, \pareto = \bm{x}/(z^{\delta} y), \cascadeSeqRand=\cascadeSeqDet}$. Let $\tilde{\mathcal{X}}_i(\varepsilon)$ denote the set $\mathcal{X}_i(z)$ rescaled by $z^{\delta} y$, i.e., $$\tilde{\mathcal{X}}_i(\varepsilon):= \{\tilde{\bm{x}} : z^\delta y\tilde{\bm{x}}\in \mathcal{X}_i(z)\} = \{\tilde{\bm{x}}\in \R_+^{\nv}: \sum_{j\neq i} \tilde{x}_j^\delta\leq \varepsilon, \tilde{x}_i = 1\}.$$ 
After rescaling, the family $\{\tilde{\mathcal X}_i(\varepsilon)\}_{\varepsilon>0}$ carries no explicit dependence on $z$.
We note, however, the explicit dependence on $\varepsilon$, since in what follows we consider the limit $\varepsilon\downarrow 0$.

Recall that $\detZ(\tilde{\bm{x}}, \cascadeSeqDet)$ denotes the \textit{deterministic} cascade cost given $\cascadeSeqDet$ and $\pareto = \tilde{\bm{x}}$. By Assumption~\ref{assumption}, $\detZ(\tilde{\bm{x}},\cascadeSeqDet)$ is SRC with respect to $\tilde{\bm{x}}$, which means that for $\tilde{\bm{x}}(\varepsilon)\in \tilde{\mathcal{X}}_i(\varepsilon)$,
\[\lim_{\varepsilon \downarrow 0} \detZ( \tilde{\bm{x}}(\varepsilon),\cascadeSeqDet) = \detZ(\e{i,\nv},\cascadeSeqDet).\]
This implies that, for every $\tilde{\bm{x}}(\varepsilon)\in \tilde{\mathcal{X}}_i(\varepsilon)$ and function $b(\tilde{\bm{x}}(\varepsilon), \cascadeSeqDet):= |\detZ(\tilde{\bm{x}}(\varepsilon), \cascadeSeqDet) - \detZ(\e{i,\nv}, \cascadeSeqDet)|$, we have the following bounds on $\detZ(\tilde{\bm{x}}(\varepsilon), \cascadeSeqDet)$: 
\[\detZ(\e{{i,{\nv}}}, \cascadeSeqDet) - b( \tilde{\bm{x}}(\varepsilon),\cascadeSeqDet)\leq \detZ(\tilde{\bm{x}}(\varepsilon),\cascadeSeqDet)\leq \detZ(\e{{i,{\nv}}},\cascadeSeqDet) + b( \tilde{\bm{x}}(\varepsilon), \cascadeSeqDet).\]

Next, we show that $b(\cdot, \cdot)$ is bounded. First, note that $\detZ(\cdot,\cdot)\geq 0$ by definition. Furthermore, we assumed that $\cascadeCost\leq \boundZ\sum_{j\in \V} \paretoI{j}^\delta$, which means that this inequality must hold for any realization of $\pareto$ and $\cascadeSeqRand$, and in particular, for $\pareto = \tilde{\bm{x}}(\varepsilon)$ and $\cascadeSeqRand = \cascadeSeqDet$. Since $\tilde{\bm{x}}(\varepsilon)\in \tilde{\mathcal{X}}_i(\varepsilon)$, $$\sum_{j\in \V} (\tilde{x}(\varepsilon))_j^\delta \leq  1+\varepsilon,$$ and thus $\detZ(\tilde{\bm{x}}(\varepsilon),\cascadeSeqDet)\leq \boundZ (1+\varepsilon)$. Thus, we conclude that $$b(\tilde{\bm{x}}(\varepsilon),\cascadeSeqDet) = |\detZ(\tilde{\bm{x}}(\varepsilon),\cascadeSeqDet) - \detZ(\e{i,\nv}, \cascadeSeqDet)| \leq (1+\varepsilon) \boundZ$$ for all $\varepsilon>0$.

Next, let $b^*(i, \varepsilon, \cascadeSeqDet) := \sup_{\tilde{\bm{x}}\in \tilde{\mathcal{X}}_i(\varepsilon)} \{b(\tilde{\bm{x}},\cascadeSeqDet)\}$, which exists because $b(\cdot,\cdot)$ is bounded for $\tilde{\bm{x}}\in\tilde{\mathcal{X}}_i(\varepsilon)$ and $\tilde{\mathcal{X}}_i(\varepsilon)$ is non-empty. Hence, 
\begin{align}\begin{split}\label{eq:ineq1}&\PR{\detZ(\e{{i,{\nv}}},\cascadeSeqDet) - b^*(i, \varepsilon, \cascadeSeqDet)>z^{-\delta} ~|~ \cascadeSeqRand=\cascadeSeqDet, \paretoI{i} = 1, \pareto = \bm{x}/(z^{\delta}y)}\\
&\qquad\leq \PR{\cascadeCost>z^{-\delta } ~|~ \cascadeSeqRand=\cascadeSeqDet, \paretoI{i} = 1, \pareto = \bm{x}/(z^{\delta}y)}\\&\qquad\leq \PR{\detZ(\e{{i,{\nv}}},\cascadeSeqDet) + b^*(i, \varepsilon, \cascadeSeqDet)>z^{-\delta } ~|~\cascadeSeqRand=\cascadeSeqDet, \paretoI{i} = 1, \pareto = \bm{x}/(z^{\delta}y)}.\end{split}\end{align}

As the last part of \hyperlink{stepD}{Step~D}, we evaluate the term \(\PR{\cascadeSeqRand=\cascadeSeqDet ~|~ \paretoI{i} = 1, \pareto = \bm{x}/(z^{\delta}y)}\), using a similar approach to that above. Due to the assumption that the probability of occurrence of cascade $\cascadeSeqRand=\cascadeSeqDet$ is SRC, we again have that  \[\lim_{\varepsilon \downarrow 0}\PR{\cascadeSeqRand=\cascadeSeqDet~|~\paretoI{i} = 1, \pareto = \bm{x}/(z^{\delta}y)} = \PR{\cascadeSeqRand=\cascadeSeqDet ~|~\pareto = \e{{i,{\nv}}}}.\] Therefore, there exists some function $\xi( i,\varepsilon, \cascadeSeqDet)\rightarrow 0$ as $\varepsilon \downarrow 0$ such that 
\begin{align}&\label{eq:ineq2}\PR{\cascadeSeqRand=\cascadeSeqDet ~|~\pareto = \e{{i,{\nv}}}} - \xi( i,\varepsilon, \cascadeSeqDet) \nonumber\\&\qquad \leq\PR{\cascadeSeqRand=\cascadeSeqDet~|~\paretoI{i} = 1, \pareto = \bm{x}/(z^{\delta}y)} \\&\qquad\leq \PR{\cascadeSeqRand=\cascadeSeqDet ~|~\pareto = \e{{i,{\nv}}}} + \xi( i,\varepsilon, \cascadeSeqDet).\nonumber\end{align}

\hyperlink{stepE}{Step~E}: Next, we apply the SRC-based bounds \eqref{eq:ineq1} and \eqref{eq:ineq2} to \eqref{eq:condZ}. We obtain the following upper bound for the large cascade cost 
\begin{multline*}
    \PR{\cascadeCost>y, \cascadeSeqRand=\cascadeSeqDet, \sum_{j\neq i} \paretoI{j}^\delta \leq \varepsilon \paretoI{i}^\delta }\\\leq \int_{\left(\frac{\varepsilon}{\boundZ}\right)^{1/\delta}}^\infty\PR{\detZ(\e{{i,{\nv}}}, \cascadeSeqDet) + b^*(i, \varepsilon,\cascadeSeqDet)>z^{-\delta }~|~\cascadeSeqRand=\cascadeSeqDet, \paretoI{i}= 1, \pareto = \bm{x}/(z^\delta y)}\\\cdot\left(\PR{\cascadeSeqRand=\cascadeSeqDet ~|~\pareto = \e{{i,{\nv}}}}+\xi( i,\varepsilon, \cascadeSeqDet)\right) \cdot \int_{\bm{x}\in \mathcal{X}_i(z)}\PR{\pareto \in \textnormal{d}\bm{x} ~|~ \paretoI{i}^\delta  = z^\delta y}\cdot\PR{\frac{\paretoI{i}}{y^{1/\delta}} \in \textnormal{d} z},
\end{multline*}
and the associated lower bound
\begin{multline*}
    \PR{\cascadeCost>y, \cascadeSeqRand=\cascadeSeqDet, \sum_{j\neq i} \paretoI{j}^\delta \leq \varepsilon \paretoI{i}^\delta }\\\geq \int_{\left(\frac{\varepsilon}{\boundZ}\right)^{1/\delta}}^\infty\PR{\detZ(\e{{i,{\nv}}}, \cascadeSeqDet) - b^*(i,\varepsilon, \cascadeSeqDet)>z^{-\delta }~|~\cascadeSeqRand=\cascadeSeqDet, \paretoI{i}= 1, \pareto = \bm{x}/(z^\delta y)} \\\cdot\left(\PR{\cascadeSeqRand=\cascadeSeqDet ~|~\pareto = \e{{i,{\nv}}}}-\xi( i,\varepsilon, \cascadeSeqDet)\right) \cdot \int_{\bm{x}\in \mathcal{X}_i(z)}\PR{\pareto \in \textnormal{d}\bm{x}~|~ \paretoI{i}^\delta  = z^\delta y}\cdot \PR{\frac{\paretoI{i}}{y^{1/\delta}} \in \textnormal{d} z}.
\end{multline*}
We observe that $\detZ(\e{{i,{\nv}}}, \cascadeSeqDet) \pm b^*( i,\varepsilon, \cascadeSeqDet)$ is a constant in $z$, which implies that \[\PR{\detZ(\e{{i,{\nv}}}, \cascadeSeqDet) \pm b^*( i,\varepsilon, \cascadeSeqDet)>z^{-\delta }~|~\cascadeSeqRand=\cascadeSeqDet, \paretoI{i} = 1, \pareto = \bm{x}/(z^\delta y)} = \vmathbb{1}\mkern-4mu\left\{\detZ(\e{{i,{\nv}}}, \cascadeSeqDet) \pm b^*( i,\varepsilon, \cascadeSeqDet)>z^{-\delta }\right\}.\]
We can therefore replace the original lower integration limit $(\varepsilon/\boundZ)^{1/\delta}$ by
\[
u^*(i,\varepsilon, \cascadeSeqDet) = \max\left\{\big(\detZ(\e{i,\nv},\cascadeSeqDet) + b^*\big)^{-1/\delta}, (\varepsilon/\boundZ)^{1/\delta}\right\}\]
and \[l^*(i,\varepsilon,\cascadeSeqDet) = \max\left\{\big(\detZ(\e{i,\nv},\cascadeSeqDet) - b^*\big)^{-1/\delta}, (\varepsilon/\boundZ)^{1/\delta}\right\},
\]
for the upper and lower bounds, respectively, and drop the indicators.

For the upper bound, we additionally observe that\[\int_{\bm{x}\in \mathcal{X}_i(z)}\PR{\pareto \in \textnormal{d}\bm{x}~|~ \paretoI{i}^\delta  = z^\delta y} = \PR{\sum_{j\neq i} \paretoI{j}^\delta <\varepsilon z^\delta y~|~\paretoI{i}^\delta  = z^\delta y}\leq 1.\] Hence, altogether we obtain that
\begin{align*}&\PR{\cascadeCost>y, \cascadeSeqRand=\cascadeSeqDet, \sum_{j\neq i} \paretoI{j}^\delta \leq \varepsilon \paretoI{i}^\delta }\\&\qquad\leq  \PR{\cascadeSeqRand=\cascadeSeqDet~|~\pareto = \e{{i,{\nv}}})+\xi( i,\varepsilon, \cascadeSeqDet)}\cdot \int_{u^*( i,\varepsilon, \cascadeSeqDet)}^{\infty} \PR{\frac{\paretoI{i}}{y^{1/\delta}} \in \textnormal{d} z}\\
&\qquad= \left(\PR{\cascadeSeqRand=\cascadeSeqDet~|~\pareto = \e{{i,{\nv}}}}+\xi( i,\varepsilon, \cascadeSeqDet)\right)\cdot \PR{\paretoI{i}^\delta > u^*( i,\varepsilon, \cascadeSeqDet)^\delta y}\\
&\qquad= \left(\PR{\cascadeSeqRand=\cascadeSeqDet~|~\pareto = \e{{i,{\nv}}}}+\xi( i,\varepsilon, \cascadeSeqDet)\right)\cdot y^{-\alpha/\delta} \cdot K\cdot u^*( i,\varepsilon, \cascadeSeqDet)^{-\alpha}.
\end{align*}
For the lower bound, we use the fact that $\paretoI{i}> l^*(i,\varepsilon,\cascadeSeqDet)$ and obtain the following expression:
\begin{align*}
    &\PR{\cascadeCost>y, \cascadeSeqRand=\cascadeSeqDet, \sum_{j\neq i} \paretoI{j}^\delta \leq \varepsilon \paretoI{i}^\delta }\\&\qquad\geq   \left(\PR{\cascadeSeqRand=\cascadeSeqDet~|~\pareto = \e{{i,{\nv}}}}-\xi( i,\varepsilon, \cascadeSeqDet)\right)\int_{l^*( i,\varepsilon, \cascadeSeqDet)}^{\infty} \PR{\sum_{j\neq i} \paretoI{j}^\delta <\varepsilon z^\delta y~|~\paretoI{i}^\delta  = z^\delta y}\PR{\frac{\paretoI{i}}{y^{1/\delta}} \in \textnormal{d} z}\\
    &\qquad=  \left(\PR{\cascadeSeqRand=\cascadeSeqDet~|~\pareto = \e{i,{\nv}}}-\xi( i,\varepsilon, \cascadeSeqDet)\right)\PR{\sum_{j\neq i} \paretoI{j}^\delta <\varepsilon \paretoI{i}^\delta , \paretoI{i}^\delta > l^*( i,\varepsilon, \cascadeSeqDet)^\delta y}\\
    &\qquad\overset{\paretoI{i}, \paretoI{j} \text{indep.}}{\geq}
    \left(\PR{\cascadeSeqRand=\cascadeSeqDet~|~\pareto = \e{i,{\nv}}}-\xi( i,\varepsilon, \cascadeSeqDet)\right)\PR{\sum_{j\neq i} \paretoI{j}^\delta <\varepsilon l^*( i,\varepsilon, \cascadeSeqDet)^\delta y} \PR{\paretoI{i}^\delta > l^*( i,\varepsilon, \cascadeSeqDet)^\delta y}\\
    &\qquad=
    \left(\PR{\cascadeSeqRand=\cascadeSeqDet~|~\pareto = \e{i,{\nv}}}-\xi( i,\varepsilon, \cascadeSeqDet)\right)\PR{\sum_{j\neq i} \paretoI{j}^\delta <\varepsilon l^*( i,\varepsilon, \cascadeSeqDet)^\delta y}\cdot y^{-\alpha/\delta}\cdot K \cdot l^*(\cascadeSeqDet, i,\varepsilon)^{-\alpha}.
\end{align*}

\hyperlink{stepF}{Step~F}: So far, we have partitioned the probability $\PR{\cascadeCost>y}$ into a sum of probabilities of mutually exclusive events $\{\cascadeCost>y, \cascadeSeqRand = \cascadeSeqDet, \sum_{j\neq i} \paretoI{j}^\delta \leq \varepsilon \paretoI{i}^\delta\}$ and derived upper and lower bounds for the probabilities associated with these events. In this final step, we take the limits $y\rightarrow \infty$ and $\varepsilon\downarrow 0$ to show that the bounds are asymptotically equal, which proves the theorem. First, observe that as $y\rightarrow \infty,$ $\PR{\sum_{j\neq i} \paretoI{j}^\delta <\varepsilon l^*( i,\varepsilon, \cascadeSeqDet)^{\delta}y}\rightarrow 1$. Thus,  we obtain that
\[\lim_{y\rightarrow \infty}y^{\alpha/\delta} \PR{\cascadeCost>y, \cascadeSeqRand=\cascadeSeqDet, \sum_{j\neq i} \paretoI{j}^\delta \leq \varepsilon \paretoI{i}^\delta }\geq K\left(\PR{\cascadeSeqRand=\cascadeSeqDet~|~\pareto = \e{i,{\nv}}}-\xi( i,\varepsilon, \cascadeSeqDet)\right) l^*( i,\varepsilon, \cascadeSeqDet)^{-\alpha}\]
and 
\[\lim_{y\rightarrow \infty}y^{\alpha/\delta} \PR{\cascadeCost>y, \cascadeSeqRand=\cascadeSeqDet, \sum_{j\neq i} \paretoI{j}^\delta \leq \varepsilon \paretoI{i}^\delta }\leq K\left(\PR{\cascadeSeqRand=\cascadeSeqDet~|~\pareto = \e{i,{\nv}}}+\xi( i,\varepsilon, \cascadeSeqDet)\right) u^*( i,\varepsilon, \cascadeSeqDet)^{-\alpha}.\]
Therefore, taking the limit $\varepsilon\downarrow 0$, we obtain
\[\lim_{\varepsilon\downarrow 0}\lim_{y\rightarrow\infty} y^{\alpha/\delta} \PR{\cascadeCost>y,\cascadeSeqRand=\cascadeSeqDet, \sum_{j\neq i} \paretoI{j}^\delta \leq \varepsilon \paretoI{i}^\delta } = K\cdot \PR{\cascadeSeqRand=\cascadeSeqDet~|~\pareto = \e{i,{\nv}}}\cdot \detZ(\e{{i,{\nv}}}, \cascadeSeqDet)^{\alpha/\delta}.\]
Thus, altogether we find
\begin{align*}
    \lim_{y\rightarrow \infty}y^{\alpha/\delta}\PR{\cascadeCost>y}&= \lim_{y\rightarrow \infty}y^{\alpha/\delta}\sum_{i = 1}^{\nv}\sum_{\cascadeSeqDet\in \cascadeSet}\PR{\cascadeCost>y,\cascadeSeqRand=\cascadeSeqDet, \sum_{j\neq i} \paretoI{j}^\delta \leq \varepsilon \paretoI{i}^\delta } + \lim_{y\rightarrow \infty} \mathcal{O}(y^{-2\alpha/\delta})\\
    &= \sum_{i = 1}^{\nv}\sum_{\cascadeSeqDet\in \cascadeSet} K\cdot \PR{\cascadeSeqRand=\cascadeSeqDet ~|~ \pareto = \e{i,{\nv}}} \cdot \detZ(\e{i,{\nv}}, \cascadeSeqDet)^{\alpha/\delta}=l_\cascadeCost.
\end{align*}

We stress that that heavy-tailed behavior with tail parameter $\alpha/\delta$, i.e.,
$
\PR{\cascadeCost>y} \sim l_\cascadeCost\cdot y^{-\alpha/\delta},
$
occurs if there exist $\cascadeSeqDet\in\cascadeSet$ such that, for some $i\in\V$,
$
\PR{\cascadeSeqRand=\cascadeSeqDet \mid \pareto=\mathbf e_{i,\nv}}\cdot 
\detZ\big(\mathbf e_{i,\nv},\cascadeSeqDet\big)^{\alpha/\delta} > 0,
$
which ensures that $l_\cascadeCost>0$. Otherwise,
$
\PR{\cascadeCost>y} = \mathcal{O}\!\big(y^{-2\alpha/\delta}\big).
$
\end{proof}

\section{Scaling and SRC properties across phases} \label{proofs}
This section contains the auxiliary results required to establish Propositions~\ref{propClassFunc1} and~\ref{propClassFunc2}. These results characterize the behavior of the minimal-cost flow, source production, and flow capacity functions with respect to scale transformations and perturbations of the vertex weight input. Specifically, for each of these functions, we show scale-invariance and the Sequential Right Continuity (SRC), which is done in steps for different phases of the process. First, Section~\ref{subsec:flowproperties} focuses on the properties of the minimal-cost flow functions. Then, Sections~\ref{subsec:resultsPlanning}, \ref{subsec:Operationalproperties}, and \ref{subsec:emergencyProperties} show the scale invariance and SRC properties at the planning, operational, and emergency stage, respectively. Next, Section~\ref{subsec:probabilityCascadeProperties} shows that the probability of a particular cascade sequence occurring is also SRC w.r.t.\ the vertex weight $\pareto$ and it does not depend on the scale of $\pareto$. Finally, the Propositions~\ref{propClassFunc1} and \ref{propClassFunc2} are proven in Section~\ref{subsec:proofsPropositions}, using the behavioral properties derived so far. 

We emphasize that our results hold under the assumption of no production capacity constraints, i.e., $\genMLim = \infty$. However, as noted in Remark~\ref{remarkNoProdCap}, by following the proof strategy outlined in this section, the results of Propositions~\ref{propClassFunc1} and~\ref{propClassFunc2} can be extended to accommodate suitable choices of the production capacity matrix~$\genMLim$.

\subsection{Properties of the minimal-cost flow function} \label{subsec:flowproperties}
The flow matrix $\flowM^*$, defined as the solution to a minimal-cost flow problem, serves as a fundamental building block in both planning and operational stages. This section establishes that $\flowM^*$ is scale-invariant (Lemma~\ref{flowInv}) and continuous with respect to its input arguments (Lemma~\ref{contFlow}). The following lemma shows the scale invariance.
\begin{lemma}\label{flowInv}
    The minimal-cost flow matrix $\flowM^*$ with costs given by \eqref{flowCost} or $\eqref{wardrop}$ is a scale-invariant function of the netput matrix and the flow capacity vector $(\bm{U}, \flowLimVec)$. In particular, 
    \[\flowM^*(\omega \bm{U}, \omega \flowLimVec) = \omega \flowM^*(\bm{U}, \flowLimVec).\]
\end{lemma}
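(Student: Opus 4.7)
The plan is a straightforward change-of-variables argument inside the minimal-cost flow program. Fix $\omega > 0$ and consider the problem
\begin{equation*}
\min_{\flowM \in \mathcal{D}} \flowCost(\flowM, \omega\flowLimVec) \quad \text{subject to}\quad \incM \flowM = \omega \bm{U}.
\end{equation*}
I would substitute $\flowM = \omega \flowM'$. Since $\mathcal{D}$ is either $\mathbb{R}^{|\EE|\times|\K|}$ or $\mathbb{R}_+^{|\EE|\times|\K|}$, both of which are cones, the substitution is a bijection of $\mathcal{D}$ onto itself. Under this substitution, the constraint $\incM (\omega \flowM') = \omega \bm{U}$ becomes $\incM \flowM' = \bm{U}$, i.e., exactly the feasibility condition for the original (unscaled) problem.

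Next, I would verify that the objective scales as a positive power of $\omega$. For the cost \eqref{flowCost}, one checks directly that
\begin{equation*}
\flowCost(\omega \flowM', \omega \flowLimVec) = \sum_{e\in\EE}\frac{1}{\beta} a_e b_e \left(\frac{|\omega f'_e|}{a_e}\right)^\beta = \omega^\beta \flowCost(\flowM', \flowLimVec),
\end{equation*}
and for the cost \eqref{wardrop}, using the specific form $d_e f_e + \frac{1}{\beta} b_e \flowLim{e} (f_e/\flowLim{e})^\beta$, the ratio $f_e/\flowLim{e}$ is invariant under joint scaling of $(f_e, \flowLim{e})$, which gives $\flowCost(\omega\flowM',\omega\flowLimVec) = \omega\, \flowCost(\flowM',\flowLimVec)$. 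In either case, the objective is a strictly positive multiple of $\flowCost(\flowM',\flowLimVec)$ independent of $\flowM'$, so the argmin is unchanged.

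Hence the set of minimizers of the scaled problem is exactly $\omega$ times the set of minimizers of the original problem, which immediately gives $\flowM^*(\omega\bm{U},\omega\flowLimVec) = \omega \flowM^*(\bm{U},\flowLimVec)$ whenever the argmin is a singleton. The only subtle point — and the step I would treat most carefully — is the tie-breaking convention when $\flowM^*$ is not unique. When $\beta > 1$, the total flow $\flowTot^*$ is unique by strict convexity in $\flowTot$, so $\flowTot^*(\omega\bm{U},\omega\flowLimVec) = \omega\flowTot^*(\bm{U},\flowLimVec)$ follows at once from the argument above. For the convention that distributes each commodity uniformly over all admissible paths, one notes that the set of admissible (per-commodity) decompositions of $\flowTot^*$ is determined by the graph topology and the per-commodity netputs $\bm{U}_{\cdot,k}$, both of which scale linearly with $\omega$; the uniform decomposition rule is itself $1$-homogeneous, so it commutes with the scaling and delivers $\flowM^*(\omega\bm{U},\omega\flowLimVec) = \omega\flowM^*(\bm{U},\flowLimVec)$ as required.
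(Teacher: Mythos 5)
Your proposal is correct and follows essentially the same route as the paper's proof: both arguments rest on the facts that the feasible set is a cone mapped bijectively onto itself by $\flowM\mapsto\omega\flowM$ and that each cost function is positively homogeneous under the joint scaling (degree $\beta$ for \eqref{flowCost}, degree $1$ for \eqref{wardrop}), the paper phrasing this as a two-sided feasibility-and-cost comparison where you phrase it as a change of variables. Your explicit treatment of the tie-breaking convention when the per-commodity decomposition is non-unique is, if anything, slightly more careful than the paper's closing appeal to uniqueness, which strictly speaking only covers the total flow $\flowTot^*$ when $\beta>1$.
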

We prove this result by showing that $\omega\flowM^*(\bm{U}, \flowLimVec)$ is a feasible solution to the minimal-cost flow problem with input $(\omega \bm{U}, \omega \flowLimVec)$ and the cost of this flow is equal to the cost of $\flowM^*(\omega\bm{U}, \omega\flowLimVec)$. 
\begin{proof}
Recall that the minimal cost flow matrix $\flowM^*$ is given as the solution 
\[\min_{\flowM\in \mathcal{D}}c_f(\flowM, \bm{U})~\text{s.t.:}~ \incM \flowM = \bm{U},\]
where $\mathcal{D} = \R^{\nee\times \nk}$ for cost function \eqref{flowCost} or $\mathcal{D} = \R_+^{\nee\times \nk}$ for cost function \eqref{wardrop}. Consider the scale parameter $\omega>0$ and, for better exposition, let $\flowM^*(\omega):= \flowM^*(\omega\bm{U}, \omega\flowLimVec)$ denote the minimal-cost flow matrix corresponding to input $(\omega \bm{U}, \omega \flowLimVec)$. First, we show that $\omega \flowM^*(1) = \omega \flowM^*(\bm{U}, \flowLimVec)$ is a feasible solution to the minimal-cost flow problem with input $(\omega \bm{U}, \omega \flowLimVec)$. Clearly, $\omega \flowM^*(1)\in \mathcal{D}$ because $\flowM^*(1)\in \mathcal{D}$. Moreover, 
\begin{equation}\incM\cdot \left(\omega\flowM^*(1)\right) = \omega \incM\flowM^*(1) = \omega \bm{U}\label{arg1},\end{equation}
because $\flowM^*(1)$ is the optimal, and therefore feasible, solution for input $(\bm{U}, \flowLimVec)$. Hence, we conclude that $\omega \flowM^*(1)$ is a feasible solution to the minimal-cost flow problem with input $(\omega \bm{U}, \omega \flowLimVec)$. 

Similarly, we can show that $\frac{1}{\omega} \flowM^*(\omega)$ is a feasible solution to the minimal-cost flow problem with input $(\bm{U}, \flowLimVec)$. Specifically, $\frac{1}{\omega}\flowM^*(\omega)\in \mathcal{D}$ and  $\incM \cdot \left(\frac{1}{\omega}\flowM^*(\omega)\right) = \frac{1}{\omega} \incM \flowM^*(\omega) = \bm{U}$, because $\flowM^*(\omega)$ is the optimal solution to the problem with the input $(\omega \bm{U}, \omega \flowLimVec)$. We use this property in the next step.

It remains to show that $\omega \flowM^*(1)$ is not only feasible, but also optimal for input $(\omega \bm{U}, \omega\flowLimVec)$. Due to the optimality of $\flowM^*(\omega)$ and $\flowM^*(1)$ for $(\omega \bm{U}, \omega \flowLimVec)$ and $(\bm{U}, \flowLimVec)$, respectively, and the corresponding feasibility of $\omega \flowM^*(1)$  and $\frac{1}{\omega}\flowM^*(\omega)$, the following two inequalities hold:
\begin{align}\label{ineq:costsFunc1}
 c_f(\flowM^*(\omega),\omega\flowLimVec)\leq  c_f(\omega\flowM^*(1),\omega \flowLimVec), \end{align}
 \begin{align} \label{ineq:costsFunc2}  c_f(\flowM^*(1),\flowLimVec)\leq  c_f(\flowM^*(\omega)/\omega, \flowLimVec).
\end{align}
Now, we consider the two different cost functions separately. First, suppose that $c_f$ is given by \eqref{flowCost}. Note that this particular cost function is independent of its second argument $\flowLimVec$, but we write it for completeness. Multiplying \eqref{ineq:costsFunc2} by $\omega^\beta$, we obtain
\begin{align*}c_f(\omega \flowM^*(1), \omega \flowLimVec) &= \frac{\omega ^\beta}{\beta} \sum_{e\in \EE} b_e a_e (|\flow[e]^*(1)|/a_e)^\beta = \frac{\omega^\beta}{\beta}c_f(\flowM^*(1), \flowLimVec)\\&
\overset{\eqref{ineq:costsFunc2}}\leq  \frac{\omega^\beta}{\beta}c_f(\flowM^*(\omega)/\omega, \flowLimVec) = \frac{\omega ^\beta}{\beta}  \sum_{e\in \EE} b_e a_e (|\flow[e]^*(\omega)|/(\omega a_e))^\beta = c_f(\flowM^*(\omega), \omega \flowLimVec).\end{align*} 
Hence, applying \eqref{ineq:costsFunc1}, we conclude that \begin{align}
 c_f(\omega\flowM^*(1),\omega\flowLimVec) =  c_f(\flowM^*(\omega), \omega \flowLimVec). \label{eq: ineqFlowCost}
\end{align}
Similarly, if $c_f$ is given by \eqref{wardrop}, \eqref{ineq:costsFunc2} multiplied with $\omega$ yields
\begin{align*}
    c_f(\omega \flowM^*(1), \omega \flowLimVec) &=  \omega \sum_{e\in \EE} d_e\flow[e]^*(1) + \frac{b_e}{\beta} \flowLim{e}(\flow[e]^*(1)/\flowLim{e})^\beta = \omega c_f(\flowM^*(1), \flowLimVec)\\&\overset{\eqref{ineq:costsFunc2}}{\leq} \omega c_f(\flowM^*(\omega)/\omega, \flowLimVec) = \omega \sum_{e\in \EE} d_e\flow[e]^*(\omega)/\omega +  \frac{b_e}{\beta} \flowLim{e}(\flow[e]^*(\omega)/(\omega \flowLim{e}))^\beta = c_f(\flowM^*(\omega), \omega \flowLimVec).
\end{align*}
Hence, again, applying \eqref{ineq:costsFunc1} yields \eqref{eq: ineqFlowCost}, which implies that the costs of $\omega \flowM^*(1)$ and $\flowM^*(\omega)$ for the input $\omega\bm{U}$ are equal. Finally, by the uniqueness of the solution of the minimum cost flow problem with cost functions \eqref{flowCost} and \eqref{wardrop} discussed in Section~\ref{subsec:FlowFunctions}, we conclude that $\flowM^*(\omega) = \omega\flowM^*(1)$.
\end{proof}

Next, we establish the continuity of the flow function $\flowM^*$, which maps a given netput matrix $\bm{U}$ and flow capacity vector $\flowLimVec$ to the corresponding minimal-cost flow. Continuity plays a crucial role in our analysis, particularly when studying the behavior of the optimal source production $\genM^*$, which depends on $\flowM^*$ through the flow capacity constraint~\eqref{optFlowConst}. The following lemma shows that the minimal-cost flow matrix $\flowM^*$ is indeed a continuous function of $\bm{U}$ and $\flowLimVec$ for the choices of cost functions introduced in Section~\ref{subsec:FlowFunctions}. 
\begin{lemma} \label{contFlow}
Let $\flowM^*$ denote the minimal-cost flow matrix as defined in~\eqref{optFlow}, with the cost function given by either~\eqref{flowCost} or~\eqref{wardrop}. Then $\flowM^*$ is a continuous function of the netput matrix $\bm{U}$ and the flow capacity vector $\flowLimVec$. In particular, for any sequence of netput matrices and flow capacity vectors $(\bm{U}^l, \flowLimVec^l) \to (\bm{U}^\infty, \flowLimVec^\infty)$, it holds that
\[
\lim_{l \to \infty} \flowM^*(\bm{U}^l, \flowLimVec^l) = \flowM^*(\bm{U}^\infty, \flowLimVec^\infty).
\]
\end{lemma}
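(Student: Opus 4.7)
The plan is to apply the standard compactness argument for parametric convex optimization. Fix a sequence $(\bm{U}^l, \flowLimVec^l) \to (\bm{U}^\infty, \flowLimVec^\infty)$ and set $\flowM^{*,l} := \flowM^*(\bm{U}^l, \flowLimVec^l)$. The proof proceeds in three stages: first bound the minimizing sequence; then extract a subsequential limit that is feasible and optimal for the limiting instance; finally upgrade subsequential convergence to full convergence via uniqueness of the minimizing total flow.

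For the boundedness step, I would construct a continuous reference feasible map $\widetilde{\flowM}\colon \bm{U} \mapsto \widetilde{\flowM}(\bm{U})$ with $\incM\,\widetilde{\flowM}(\bm{U}) = \bm{U}$. Because $\G$ is connected, fixing a spanning tree of $\G$ and routing each commodity's imbalance along tree edges yields such a linear, hence continuous, map, giving $\widetilde{\flowM}(\bm{U}^l) \to \widetilde{\flowM}(\bm{U}^\infty)$. Optimality of $\flowM^{*,l}$ then provides the a priori bound $c_f(\flowM^{*,l}, \flowLimVec^l) \leq c_f(\widetilde{\flowM}(\bm{U}^l), \flowLimVec^l)$, whose right-hand side converges to a finite value. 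Combined with coercivity of $c_f$ in $\flowM$ (automatic for \eqref{flowCost} when $\beta > 1$, and holding for \eqref{wardrop} provided $\flowLim{e}^\infty > 0$ on the tree edges), this yields a uniform bound on $\|\flowM^{*,l}\|$.

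Next, Bolzano-Weierstrass furnishes a convergent subsequence $\flowM^{*,l_k} \to \flowM^\infty$. Feasibility for $(\bm{U}^\infty, \flowLimVec^\infty)$ is immediate from continuity of $\flowM \mapsto \incM\flowM$ and closedness of the domain $\mathcal{D}$. To show optimality, I argue by contradiction: if some feasible $\widehat{\flowM}$ achieves strictly lower limiting cost, the perturbation $\widehat{\flowM}^l := \widehat{\flowM} + \widetilde{\flowM}(\bm{U}^l - \bm{U}^\infty)$ is feasible for $(\bm{U}^l, \flowLimVec^l)$ and converges to $\widehat{\flowM}$. Passing the chain $c_f(\flowM^{*,l}, \flowLimVec^l) \leq c_f(\widehat{\flowM}^l, \flowLimVec^l)$ to the limit via joint continuity of $c_f$ contradicts $c_f(\widehat{\flowM}, \flowLimVec^\infty) < c_f(\flowM^\infty, \flowLimVec^\infty)$. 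Uniqueness of the optimal total flow (discussed in Section~\ref{subsec:FlowFunctions}) then identifies $\flowM^\infty$ with $\flowM^*(\bm{U}^\infty, \flowLimVec^\infty)$, and the standard subsequence argument promotes this to convergence of the full sequence.

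The main obstacle I foresee is the case where $\flowLim{e}^\infty = 0$ for some $e \in \EE$, which arises in the emergency phase when edges are removed. For the cost \eqref{wardrop}, the term $\flow[e]^\beta/\flowLim{e}^{\beta - 1}$ ceases to be jointly continuous at $\flowLim{e} = 0$, so the argument must be localized to the subgraph of edges with strictly positive limiting capacity; the cost bound forces any flow on edges with $\flowLim{e}^l \to 0$ to vanish in the limit, so this localization is lossless. A secondary subtlety concerns the paper's convention for pinning down $\flowM^*$ from $\flowTot^*$ in the multi-commodity setting: one must verify that the rule of distributing each commodity uniformly over viable paths yields a continuous selection, which holds as long as the set of viable paths is locally constant under small perturbations of strictly positive capacities.
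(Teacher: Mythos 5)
Your proposal follows the same architecture as the paper's proof: bound the minimizing sequence, extract a subsequential limit, show it is feasible and optimal for the limit instance by comparing against a feasible sequence converging to a prescribed point, and invoke uniqueness of the minimizer to upgrade subsequential to full convergence. The two substantive differences are in the details. For boundedness, you use coercivity of $c_f$ plus a cost bound from a reference feasible flow, whereas the paper bounds each $|\flow[e,k]^l|$ directly by the total demand (asserting that optimal flows carry no unnecessary cycles); your route is arguably cleaner, but note that both \eqref{flowCost} and \eqref{wardrop} are coercive only in the \emph{total} flow $\flowTot$, so you still need the paper's uniform-distribution selection convention to bound the individual commodity flows. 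For the comparison sequence, the paper sets $\widetilde{\flowM}^l = \flowM^\infty + \flowM^{(cont)}(\bm{U}^l - \bm{U}^\infty)$, where $\flowM^{(cont)}$ is the minimizer of an auxiliary strictly convex quadratic program \emph{over the domain} $\mathcal{D}$, continuous by parametric-QP theory.

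This last point is where your construction has a real problem. You take the reference and perturbation map to be linear spanning-tree routing and form $\widehat{\flowM}^l := \widehat{\flowM} + \widetilde{\flowM}(\bm{U}^l - \bm{U}^\infty)$. In the Wardrop case the domain is $\mathcal{D} = \R_+^{|\EE|\times|\K|}$, and no \emph{linear} map can send every balanced netput to a nonnegative flow, since linearity forces $\widetilde{\flowM}(-\bm{U}) = -\widetilde{\flowM}(\bm{U})$. Because $\bm{U}^l - \bm{U}^\infty$ has arbitrary sign and $\widehat{\flowM}$ may have zero entries, $\widehat{\flowM}^l$ can leave $\mathcal{D}$, and the inequality $c_f(\flowM^{*,l}, \flowLimVec^l) \le c_f(\widehat{\flowM}^l, \flowLimVec^l)$ is then unavailable. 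The repair is exactly what the paper does: replace the linear routing by a continuous but nonlinear feasible selection taking values in $\mathcal{D}$, so that the sum of two elements of $\R_+^{|\EE|\times|\K|}$ is automatically feasible. For the signed domain used with \eqref{flowCost} your tree routing is fine. On the credit side, you correctly flag two subtleties the paper glosses over: joint continuity of \eqref{wardrop} fails at $\flowLim{e}=0$ (relevant after complete edge removals), and continuity of the multi-commodity selection rule from $\flowTot^*$ to $\flowM^*$ is not automatic; your localization argument for the former is sound.
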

The proof establishes continuity by showing that all optimal flows converge to the desired limit. Compactness guarantees the existence of subsequential limits, and we construct a sequence of feasible flows that converges to the desired limit. A cost comparison between the feasible and optimal subsequences then shows that every subsequential limit---and hence the limit of the complete sequence---is the desired limit.
\begin{proof}
Let $(\bm{U}^l, \flowLimVec^l)_{l \in \mathbb{N}}$ be a sequence of netput matrices and flow capacity vectors converging to $(\bm{U}^\infty, \flowLimVec^\infty)$. For each $l\in \N$, define $\flowM^l := \flowM^*(\bm{U}^l, \flowLimVec^l)$ and $\flowM^\infty := \flowM^*(\bm{U}^\infty, \flowLimVec^\infty)$. These flows are well-defined under our assumptions, since i) the graph is strongly connected, ii) the netput matrices by construction satisfy the total requirement–production balance for each commodity, and iii) the cost function is strictly convex. The first two conditions ensure the existence of a feasible flow, such as a shortest-path flow, while the last guarantees the uniqueness of the optimal solution.

We begin by showing that the sequence $(\flowM^l)$ lies in a compact subset of $\mathbb{R}^{\nee \times \nk}$. Define the total demand for each $\bm{U}^l$ as
\[
\xi^l := \sum_{v \in \V} \sum_{k \in \K} \bm{U}^l_{v,k} \cdot \vmathbb{1}\{ \bm{U}^l_{v,k} \geq 0 \}.
\]
Since $(\bm{U}^l)$ converges, it is bounded and, as a result, $(\xi^l)$ is also bounded. Let $\xi := \sup_l \xi^l < \infty$. Because the cost function is non-negative and increasing in flow, optimal flows avoid unnecessary cycles and thus never exceed the total demand. Hence, for every $e \in \EE$ and $k \in \K$,
\[
|\flow[e,k]^l| \leq \xi^l \leq \xi.
\]
It follows that $\flowM^l \in [-\xi, \xi]^{\nee \times \nk}$ for each $l$. This set is compact in $\mathbb{R}^{\nee \times \nk}$, so $(\flowM^l)_{l\in \N}$ admits a convergent subsequence. 

Our next goal is to show that the limit of an arbitrary convergent subsequence $(\flowM^{l_m})_{m\in\N}$ is equal to $\flowM^{\infty}$, i.e., $\lim_{m\rightarrow \infty } \flowM^{l_m} := \widehat{\flowM}^{\infty} = \flowM^{\infty}$.  To do so, we first construct a \textit{sequence} of feasible flows converging to $\flowM^\infty$. Let $\flowM^{(cont)}(\bm{U})$ denote the solution to the flow problem with quadratic cost function $\sum_{e\in \EE} f_e^2$. This problem is a strictly convex quadratic program with linear constraints, and its solution is a continuous function of $\bm{U}$; see~\citep{TONDEL2003489}. Define
\[
\widetilde{\flowM}^l := \flowM^\infty + \flowM^{(cont)}(\bm{U}^l - \bm{U}^\infty).
\]
 Note that, since $\flowM^\infty, \flowM^{(cont)}(\bm{U}^l - \bm{U}^\infty )\in \mathcal{D}$, we have $\widetilde{\flowM}^l\in \mathcal{D}$. Moreover, by definition of $\flowM^\infty$, we have $\incM\flowM^\infty = \bm{U}^\infty$, and thus
\[\incM\widetilde{\flowM}^l =\incM\flowM^\infty + \incM\flowM^{(cont)}(\bm{U}^l - \bm{U}^\infty)= \bm{U}^{\infty} + \bm{U}^l - \bm{U}^{\infty} = \bm{U}^l.\]
This means that $\widetilde{\flowM}^l$ is a feasible flow matrix for netput $\bm{U}^l$ and, as a result, it is a feasible solution for all $l\in \N$. In addition, by continuity of $\flowM^{(cont)}$ it follows that $\widetilde{\flowM}^l \to \flowM^\infty$ as $l \to \infty$.

With the feasible sequence defined, we now compare its flow costs to those of the optimal subsequence and use continuity of the cost function to establish equality of the limits. Since $\flowM^{l_m}$ is optimal under cost $c_f$ and $\widetilde{\flowM}^{l_m}$ is feasible, we have
\[
c_f(\flowM^{l_m}, \flowLimVec^{l_m}) \leq c_f(\widetilde{\flowM}^{l_m}, \flowLimVec^{l_m}) \quad \text{for all } m\in \N.
\]
Taking the limit as $m \to \infty$ and using continuity of $c_f$, we obtain
\[
c_f(\widehat{\flowM}^\infty, \flowLimVec^\infty) \leq c_f(\flowM^\infty, \flowLimVec^\infty).
\]
Note that, for every $m\in\N$, $\flowM^{l_m}$ satisfies $\incM\flowM^{l_m} = \bm{U}^{l_m}$ and $ \flowM^{l_m}\in \mathcal{D}$. Therefore, taking limits on both sides and using the continuity of linear transformations, we obtain $\incM\widehat{\flowM}^\infty = \bm{U}^{\infty}$ with $\widehat{\flowM}^{\infty}\in \mathcal{D}$. In other words, $\widehat{\flowM}^\infty$ is a feasible flow for the netput matrix $\bm{U}^\infty$.
Since $\flowM^\infty$ is the unique minimizer for $(\bm{U}^\infty, \flowLimVec^\infty)$ and $\widehat{\flowM}^\infty$ is feasible, it follows that $\widehat{\flowM}^\infty = \flowM^\infty$.

As every convergent subsequence of $(\flowM^l)_{l\in\N}$ converges to $\flowM^\infty$, and the sequence lies in a compact set, the full sequence also converges to $\flowM^\infty$, i.e., 
\[
\lim_{l \to \infty} \flowM^l = \flowM^\infty. \qedhere
\]
\end{proof}
\subsection{Scale invariance and SRC in the planning phase} \label{subsec:resultsPlanning}
Having established the behavioral properties of the minimal-cost flow function, we now turn to analyzing the behavior of the source production matrix \eqref{eq:defPlanningProd} in the planning stage. We first show the scale-invariance property of property of source production, under the assumption that $\genM^*$ is given as the minimal cost solution to the resource allocation problem, as stated in \eqref{optProd} in Section~\ref{subsec:SourceFunctions}.
\begin{lemma} \label{lemmaProductionPlanningInv}
    Suppose that $\genM^*$ is given by the minimal-cost source production \eqref{optProd} and suppose that there are no production capacity constraints, i.e., $\genMLim = \infty$. Then, $\genM^{(0)}$ is a scale-invariant function of the vertex weight vector $\pareto$.
\end{lemma}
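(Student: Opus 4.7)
The plan is to mirror the argument used to prove Lemma~\ref{flowInv}. The key observation is that in the planning phase we have $\flowLimVec^{(0)} = \infty$ and $\genMLim = \infty$, so Constraints~\eqref{optProdConst} and~\eqref{optFlowConst} are automatically satisfied for every $\genM \in \R_+^{\nv\times\nk}$. Hence the planning-phase resource allocation problem reduces to minimizing $c_s(\genM) := \frac{1}{\gamma}\sum_{v,k} c_{v,k} s_{v,k}^\gamma$ subject to non-negativity and the single linear balance constraint $\e{_{\nv}}^T \genM = \e{_{\nv}}^T \demM^{(0)}$. Additionally, from \eqref{eq:defT0} we have the input-scaling identity $\demM^{(0)}(\omega\pareto) = \omega\,\demM^{(0)}(\pareto)$, which is the mechanism driving scale invariance.

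With this setup, I would introduce the abbreviation $\genM^{(0)}(\omega) := \genM^*(\demM^{(0)}(\omega\pareto), \infty, \infty, 1)$ and proceed in two symmetric feasibility steps. First, I would check that $\omega\,\genM^{(0)}(1)$ is feasible for input $\omega\pareto$: non-negativity is preserved since $\omega > 0$, and
\begin{equation*}
\e{_{\nv}}^T\!\bigl(\omega\,\genM^{(0)}(1)\bigr) = \omega\,\e{_{\nv}}^T \demM^{(0)}(\pareto) = \e{_{\nv}}^T \demM^{(0)}(\omega\pareto).
\end{equation*}
By the same computation in reverse, $\omega^{-1}\genM^{(0)}(\omega)$ is feasible for input $\pareto$. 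This is where an obstacle could have appeared in a more general setting: one would typically need to verify that the flow capacity constraint~\eqref{optFlowConst} scales consistently, invoking Lemma~\ref{flowInv}. Here this difficulty is sidestepped entirely because $\flowLimVec^{(0)} = \infty$ nullifies the constraint.

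Next I would compare costs using the identity $c_s(\omega\genM) = \omega^\gamma c_s(\genM)$ for any $\omega > 0$. Optimality of $\genM^{(0)}(\omega)$ for input $\omega\pareto$ paired with feasibility of $\omega\genM^{(0)}(1)$ gives $c_s(\genM^{(0)}(\omega)) \leq \omega^\gamma c_s(\genM^{(0)}(1))$, while optimality of $\genM^{(0)}(1)$ for input $\pareto$ paired with feasibility of $\omega^{-1}\genM^{(0)}(\omega)$ gives $\omega^\gamma c_s(\genM^{(0)}(1)) \leq c_s(\genM^{(0)}(\omega))$. Equality throughout means $\omega\genM^{(0)}(1)$ achieves the optimal cost at input $\omega\pareto$. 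The objective is strictly convex (since $\gamma > 1$ and $c_{v,k} > 0$) and the feasible set is convex (it is the intersection of a hyperplane with $\R_+^{\nv\times\nk}$), so the minimizer is unique and we conclude $\genM^{(0)}(\omega\pareto) = \omega\,\genM^{(0)}(\pareto)$, which is precisely $1$-scale-invariance in the sense of \eqref{eq:deltaScaleInv}. Since every step is essentially bookkeeping on an unconstrained convex program, I do not anticipate any real technical obstacle; the result is genuinely a corollary of the structure of the cost function and the linearity of the balance constraint.
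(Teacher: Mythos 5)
Your proof is correct and follows essentially the same route as the paper's: the paper formalizes the scaling correspondence via the change of variables $\tilde{\genM} = \omega\genM$, which turns the problem with input $\omega\pareto$ into $\omega^\gamma$ times the problem with input $\pareto$, while you unpack that same correspondence into two feasibility checks, a cost sandwich using $c_s(\omega\genM)=\omega^\gamma c_s(\genM)$, and an appeal to uniqueness of the minimizer. The substance — linearity of $\demM^{(0)}$ in $\pareto$, triviality of Constraints~\eqref{optProdConst} and~\eqref{optFlowConst} in the planning phase, and homogeneity of the strictly convex objective — is identical in both arguments.
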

We prove this result by transforming the minimization problem \eqref{optProd} with input $\omega \pareto$ into a problem that has the same solution as \eqref{optProd} with input $\pareto$. This is done using a change-of-variables approach. We emphasize that all functions in the planning and operational phases are random variables, fully determined by the vertex weight $
\pareto$. With this in mind, in the proofs that follow, for any random variable $A$, we use the shorthand notation 
$A(\pareto)$ to denote random variable $A$ conditioned on 
$\pareto$.

\begin{proof}
Fix $\omega > 0$ and consider the vertex weight vectors $\pareto$ and $\omega \pareto$. From Equation~\eqref{eq:defT0} it follows that the corresponding sink requirement matrices in the planning stage are
\[
\demM^{(0)}(\pareto) = \mathrm{diag}(\pareto)\cdot \bm{Q}, \quad \text{and} \quad \demM^{(0)}(\omega \pareto) =  \text{diag}(\omega\pareto) \cdot \bm{Q} = \omega \demM^{(0)}(\pareto).
\]
According to Equation~\eqref{eq:defPlanningProd}, the planning source production matrix is obtained by solving the minimal-cost source production problem with input $\demM^{(0)}(\omega \pareto)$ given no flow or capacity constraints. For brevity, we denote this by $$\genM^{(0)}(\omega) := \genM^*(\demM^{(0)}(\omega\pareto),\infty, \infty, 1 ).$$ Then, using~\eqref{optProd}, we can write
\begin{equation} \label{eq:prod1}
\genM^{(0)}(1) = \argmin_{\genM \in \mathbb{R}_+^{\nv \times \nk}} \sum_{v \in \V} \sum_{k \in \K} \frac{1}{\gamma} c_{v,k} \gen[v,k]^\gamma \quad \text{s.t.:} \quad \e{_{\nv}}^T \genM = \e{_{\nv}}^T \demM^{(0)}(\pareto),
\end{equation}
and
\begin{equation} \label{eq:prod2}
\genM^{(0)}(\omega) = \argmin_{\widetilde{\genM} \in \mathbb{R}_+^{\nv \times \nk}} \sum_{v \in \V} \sum_{k \in \K} \frac{1}{\gamma} c_{v,k} \tilde{s}_{v,k}^\gamma \quad \text{s.t.:} \quad \e{_{\nv}}^T \widetilde{\genM} = \omega \cdot \e{_{\nv}}^T \demM^{(0)}(\pareto).
\end{equation}
Using the change of variables $\tilde{\genM} := \omega\genM$, it follows that \eqref{eq:prod1} and \eqref{eq:prod2} are identical minimization problems up to a scalar, and therefore
\[
\genM^{(0)}(\omega) = \omega \cdot \genM^{(0)}(1),
\]
which establishes the scale invariance of the planning-stage source production function.
\end{proof}
The following lemma shows that the planning source production matrix is SRC with respect to the vertex weight vector~$\pareto$ on the positive orthant. In particular, along any componentwise nondecreasing sequence of inputs converging to a non-zero point in the positive orthant, the image under $\genM^{(0)}$ converges to the value at that point. 

\begin{lemma}
\label{lemmaProductionPlanning}
Suppose that $\genM^*$ is given by the minimal-cost source production solution to \eqref{optProd}, and assume there are no production capacity constraints, i.e., $\genMLim = \infty$. Then the planning-stage source production matrix $\genM^{(0)}$ is an SRC function of the vertex weight vector $\pareto$ on $\mathbb{R}^{\nv}_+ \setminus \{ \bm{0} \}$. Specifically, for any sequence $(\pareto^l)_{l \in \mathbb{N}}$ with $\pareto^l \to \pareto^\infty \neq \bm{0}$ and $\pareto^l \geq \pareto^\infty$ componentwise for all~$l$, it holds that
\[
\lim_{l \to \infty} \genM^{(0)}(\pareto^l) = \genM^{(0)}(\pareto^\infty).
\]
\end{lemma}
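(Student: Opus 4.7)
The plan is to adapt the compactness plus feasible-sequence strategy from Lemma~\ref{contFlow}. With $\flowLimVec^{(0)}=\infty$ and $\genMLim=\infty$, constraints \eqref{optProdConst} and \eqref{optFlowConst} are vacuous, so problem \eqref{optProd} reduces to
\[
\genM^{(0)}(\pareto) \;=\; \argmin_{\genM \in \R_+^{\nv\times\nk}} \sum_{v\in \V,\, k\in \K} \frac{1}{\gamma}\, c_{v,k}\, s_{v,k}^\gamma
\quad\text{s.t.}\quad \e{_{\nv}}^T \genM \;=\; \e{_{\nv}}^T \demM^{(0)}(\pareto).
\]
Since $\gamma>1$, the objective is strictly convex and separable, so the minimizer exists and is unique. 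The problem furthermore decouples across commodities: the $k$-th column of $\genM^{(0)}(\pareto)$ depends only on the scalar $d_k(\pareto):=\sum_v q_{v,k} X_v$, which is linear, hence continuous, in $\pareto$.

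I would then proceed in four steps. First, since $(\pareto^l)_l$ converges it is bounded, hence so is each $d_k(\pareto^l)$; combined with nonnegativity and the balance constraint this confines the sequence $(\genM^{(0)}(\pareto^l))_l$ to a compact box, so any subsequence admits a convergent subsubsequence with some limit $\widehat{\genM}^\infty$. Second, by continuity of $d_k$, the limit $\widehat{\genM}^\infty$ is nonnegative and satisfies $\e{_{\nv}}^T \widehat{\genM}^\infty = \e{_{\nv}}^T \demM^{(0)}(\pareto^\infty)$, hence is feasible for the limiting problem. Third, construct a feasible sequence $(\widetilde{\genM}^l)_l$ for each input $\pareto^l$ that converges to $\genM^{(0)}(\pareto^\infty)$. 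Fourth, by optimality of $\genM^{(0)}(\pareto^{l_m})$ and continuity of the objective $\phi$,
\[
\phi(\widehat{\genM}^\infty) \;\leq\; \lim_{m}\phi(\widetilde{\genM}^{l_m}) \;=\; \phi(\genM^{(0)}(\pareto^\infty)),
\]
so by uniqueness of the minimizer $\widehat{\genM}^\infty=\genM^{(0)}(\pareto^\infty)$; since every convergent subsequence yields the same limit, the full sequence converges.

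The only delicate point is the construction of the comparison sequence in Step three. For a commodity $k$ with $d_k(\pareto^\infty)>0$, the column-wise rescaling
\[
\widetilde{s}_{v,k}^l \;:=\; \frac{d_k(\pareto^l)}{d_k(\pareto^\infty)}\, s_{v,k}^{(0)}(\pareto^\infty)
\]
is nonnegative, satisfies the balance constraint, and converges entrywise to $s_{v,k}^{(0)}(\pareto^\infty)$. For degenerate columns with $d_k(\pareto^\infty)=0$, nonnegativity together with the balance constraint forces $s_{v,k}^{(0)}(\pareto^\infty)=0$ for every $v$, and setting $\widetilde{s}_{v,k}^l := d_k(\pareto^l)/\nv$ gives a feasible column that likewise tends to zero. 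The monotonicity $\pareto^l \geq \pareto^\infty$ built into the SRC definition is not strictly required, but it conveniently guarantees $d_k(\pareto^l)\geq d_k(\pareto^\infty)$ and makes the rescaling factor at least one. I expect the analogous result in the operational and emergency phases to be considerably more delicate, since a finite $\flowLimVec$ activates \eqref{optFlowConst}, couples the commodities, and makes this naive column-wise rescaling no longer automatically feasible.
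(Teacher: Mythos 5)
Your proposal is correct and follows essentially the same route as the paper's proof: compactness of the production sequence via the balance constraint, a feasible comparison sequence obtained by per-commodity rescaling of $\genM^{(0)}(\pareto^\infty)$ (your $d_k(\pareto^l)/d_k(\pareto^\infty)$ is exactly the paper's $\sigma^l_k$), a cost comparison along an arbitrary convergent subsequence, and uniqueness of the strictly convex minimizer. The one substantive difference is that you explicitly treat the degenerate case $d_k(\pareto^\infty)=0$, where the paper's ratio $\sigma^l_k$ is ill-defined; your observation that nonnegativity plus the balance constraint forces the corresponding column of $\genM^{(0)}(\pareto^\infty)$ to vanish, together with the substitute column $\widetilde{s}^l_{v,k}=d_k(\pareto^l)/\nv$, cleanly closes that gap.
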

The proof follows a strategy similar to that used in the proof of Lemma~\ref{contFlow}, but relies on a different construction for the alternative feasible sequence of production matrices.
\begin{proof}
Let $(\pareto^l)_{l \in \mathbb{N}}$ be a sequence satisfying the conditions of the lemma. From the definition in~\eqref{eq:defT0}, the planning sink requirement matrix corresponding to a vertex weight vector $\pareto$ is given by
\[
\demM^{(0)}(\pareto) = \operatorname{diag}(\pareto) \cdot \bm{Q},
\]
and the associated source production matrix is defined as
\[
\genM^{(0)}(\pareto) := \genM^*(\demM^{(0)}(\pareto), \infty, \infty, 1),
\]
which exists and is unique for every $\pareto \geq \bm{0}$ due to strict convexity of the objective function in~\eqref{optProd}.

We first establish that the sequence $\big(\genM^{(0)}(\pareto^l)\big)_{l \in \mathbb{N}}$ lies in a compact subset of $\mathbb{R}^{\nv \times \nk}$. For each $l\in \N$, the balance constraint~\eqref{optProdBalance} implies that for every $v \in \V$ and $k \in \K$, the production of commodity $k$ at vertex $v\in \V$ is bounded by the total requirement for this commodity, i.e., 
\[
\gen[v,k]^{(0)}(\pareto^l) \leq \left( \e{_{\nv}}^T \demM^{(0)}(\pareto^l) \right)_k = \left( (\pareto^l)^T \bm{Q} \right)_k \leq \sup_{l \in \mathbb{N}} \left( (\pareto^l)^T \bm{Q} \right)_k =: m_k.
\]
Since $(\pareto^l)$ converges, the supremum $m_k$ is finite for each $k$, so the sequence lies in $[0, \max_k m_k]^{\nv \times \nk}$. As this is a closed and bounded subset of a finite-dimensional Euclidean space, it is compact. Hence, $\big(\genM^{(0)}(\pareto^l)\big)_{l \in \mathbb{N}}$ has at least one convergent subsequence.

Next, we construct another sequence of source production matrices $\widetilde{\genM}^l$ that converges to $\genM^{(0)}(\pareto^{\infty})$ and satisfies Constraint~\eqref{optProdBalance} for every $l\in \N$. For each $l$ and $k\in \K$, let the scaling factor $\sigma_k^l$ be the ratio of the total sink requirements of commodity $k$ for vertex weight vectors $\pareto^l$ and $\pareto^{\infty}$, i.e., 
\[
\sigma^l_k := \frac{ \left( \e{_{\nv}}^T \demM^{(0)}(\pareto^l) \right)_k }{ \left( \e{_{\nv}}^T \demM^{(0)}(\pareto^\infty) \right)_k }, \quad \bm{\sigma}^l := (\sigma^l_k)_{k \in \K},
\]
and set
\[
\widetilde{\genM}^l := \genM^{(0)}(\pareto^\infty) \cdot \operatorname{diag}(\bm{\sigma}^l).
\]
Then for all $l$, we have
\[
\e{_{\nv}}^T \widetilde{\genM}^l = \e{_{\nv}}^T \genM^{(0)}(\pareto^\infty) \cdot \operatorname{diag}(\bm{\sigma}^l) \overset{\eqref{optProdBalance}}{=} \e{_{\nv}}^T \demM^{(0)}(\pareto^\infty) \cdot \operatorname{diag}(\bm{\sigma}^l)=  \e{_{\nv}}^T \demM^{(0)}(\pareto^l),
\]
so each $\widetilde{\genM}^l$ satisfies the balance constraint~\eqref{optProdBalance}. Moreover, since the sink requirement and source production matrices, $\demM^{(0)}(\cdot)$ and $\genM^{(0)}(\cdot)$, are non-negative by construction, it follows that $\genM^{(0)}(\pareto^\infty) \geq 0$ and $\bm{\sigma}^l \geq 0$. Consequently, $\widetilde{\genM}^l \geq 0$, so each $\widetilde{\genM}^l$ is feasible for problem~\eqref{optProd} with input $\demM^{(0)}(\pareto^l)$. Furthermore, by linearity of $\demM^{(0)}(\cdot)$, the components of $\bm{\sigma}^l$ are continuous in $\pareto$, and together with the convergence $\pareto^l \to \pareto^\infty$, this implies that
\begin{equation}
\lim_{l \to \infty} \widetilde{\genM}^l = \genM^{(0)}(\pareto^\infty).
\label{eq:limitSourceProd}
\end{equation}

Next, we show that each convergent subsequence of $\left(\genM^{(0)}(\pareto^l)\right)_{l\in \N}$ has a limit equal to $\genM^{(0)}(\pareto^{\infty})$, by comparing the production costs of the feasible and optimal subsequences. Let $(\genM^{l_m})_{m \in \mathbb{N}}$ be any convergent subsequence of $\genM^{(0)}(\pareto^l)$ with limit $\widehat{\genM} := \lim_{m \rightarrow \infty} \genM^{l_m}$. Since $\genM^{l_m}$ is optimal for $\demM^{(0)}(\pareto^{l_m})$ and $\widetilde{\genM}^{l_m}$ is feasible, we have that for all $m$,
\[
c_s(\genM^{l_m}) \leq c_s(\widetilde{\genM}^{l_m}),
\]
where $c_s$ denotes the production cost function, as in \eqref{optProd}. By the continuity of $c_s$ in $\genM$, and using $\eqref{eq:limitSourceProd}$, it follows that
\[
c_s(\widehat{\genM}) \leq c_s(\genM^{(0)}(\pareto^\infty)).
\]
Moreover, since $\widehat{\genM}$ is feasible for the limit input and the cost function is strictly convex, the minimizer is unique, and thus
\[
\widehat{\genM} = \genM^{(0)}(\pareto^\infty).
\]

Finally, every convergent subsequence of $\genM^{(0)}(\pareto^l)$ converges to the same limit, and the sequence lies in a compact set. Therefore, the entire sequence converges:
\[
\lim_{l \to \infty} \genM^{(0)}(\pareto^l) = \genM^{(0)}(\pareto^\infty). \qedhere
\]
\end{proof}
\subsection{Scale invariance and SRC in the operational phase} \label{subsec:Operationalproperties}
We have thus far established the scale-invariance and SRC properties of the relevant functions in the planning phase. The results presented in this section extend these properties to the operational phase. First, the following lemma shows that the operational edge capacity vector \( \flowLimVec^{(1)} \) inherits scale-invariance and SRC from the structure of the minimum-cost flow and planning source production functions.

\begin{lemma} \label{planningFlowInv}
Suppose that \( \flowM^* \) is given by the minimal-cost flow matrix associated with either cost function~\eqref{flowCost} or~\eqref{wardrop}, and that \( \genM^* \) is given by the minimal-cost source production \eqref{optProd}. Furthermore, suppose there are no production capacity constraints, i.e., \( \genMLim = \infty \). Then,
\begin{enumerate}\item The operational edge capacity vector \( \flowLimVec^{(1)} \) is a scale-invariant function of the vertex weight vector \( \pareto \), i.e., 
\[
\flowLimVec^{(1)}(\omega \pareto) = \omega \flowLimVec^{(1)}(\pareto).
\]
\item The operational edge capacity vector \( \flowLimVec^{(1)} \) is SRC with respect to the vertex weight $\pareto$ on \( \mathbb{R}^{\nv}_+ \setminus \{\mathbf{0}\} \). In other words, for every sequence $\pareto^l\rightarrow \pareto^\infty \neq \bm{0}$, with $\pareto^l\geq \pareto^\infty$ for every $l\in \N$, 
\[\lim_{l\rightarrow \infty } \flowLimVec^{(1)}(\pareto^l) = \flowLimVec^{(1)}{\pareto^\infty}.\]
\end{enumerate}
\end{lemma}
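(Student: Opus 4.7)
The plan is to view $\flowLimVec^{(1)}$ as a composition of maps whose scale-invariance and SRC properties are already established in Lemmas~\ref{flowInv}, \ref{contFlow}, \ref{lemmaProductionPlanningInv}, and \ref{lemmaProductionPlanning}, and then use the fact that both properties are preserved under linear combinations, composition, and componentwise maxima (with positive scalars, in the case of scale-invariance). Concretely, unfolding \eqref{flowCapacity} and \eqref{eq:defPlanningProd}, one has
\[
\flowLim{e}^{(1)}(\pareto)
= \max\!\Bigl\{\tau\,\bigl[\flowM^*\!\bigl(\demM^{(0)}(\pareto)-\genM^{(0)}(\pareto),\,\infty\bigr)\,\e{_{|\K|}}\bigr]_e,\ \varepsilon_{\min}\!\sum_{v\in\V}\paretoI{v}\Bigr\},
\]
so the task reduces to tracing scale-invariance and SRC through this chain.

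For scale-invariance, I would proceed left to right. The sink matrix $\demM^{(0)}(\pareto)=\text{diag}(\pareto)\,\bm{Q}$ is linear in $\pareto$, hence $1$-scale-invariant; Lemma~\ref{lemmaProductionPlanningInv} gives the same for $\genM^{(0)}$; consequently $\bm{U}^{(0)}(\omega\pareto)=\omega\,\bm{U}^{(0)}(\pareto)$. Lemma~\ref{flowInv} then yields $\flowM^*(\omega\bm{U}^{(0)},\infty)=\omega\,\flowM^*(\bm{U}^{(0)},\infty)$ (the capacity $\infty$ is fixed under positive scaling), so $\flowTot^{(0)}$ is also $1$-scale-invariant. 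Finally, $\overline{f}_{\min}$ is linear in $\pareto$, and the identity $\max\{\omega a,\omega b\}=\omega\max\{a,b\}$ for $\omega>0$ closes the argument, giving $\flowLimVec^{(1)}(\omega\pareto)=\omega\flowLimVec^{(1)}(\pareto)$.

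For the SRC claim, fix a sequence $\pareto^l\to\pareto^\infty\neq\bm{0}$ with $\pareto^l\geq\pareto^\infty$ componentwise. Continuity of $\demM^{(0)}(\cdot)$ is immediate from its linear form, while Lemma~\ref{lemmaProductionPlanning} provides $\genM^{(0)}(\pareto^l)\to\genM^{(0)}(\pareto^\infty)$; subtracting, $\bm{U}^{(0)}(\pareto^l)\to\bm{U}^{(0)}(\pareto^\infty)$. Lemma~\ref{contFlow}, applied to the constant capacity sequence $\flowLimVec^l\equiv\infty$, then yields $\flowM^{(0)}(\pareto^l)\to\flowM^{(0)}(\pareto^\infty)$, whence $\flowTot^{(0)}(\pareto^l)\to\flowTot^{(0)}(\pareto^\infty)$. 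Continuity of $\overline{f}_{\min}$ in $\pareto$ and joint continuity of the max operator finish the argument.

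The only non-routine point is the use of Lemma~\ref{contFlow} at the planning capacity $\flowLimVec^{(0)}=\infty$: for cost \eqref{flowCost} the cost function does not depend on $\flowLimVec$ at all, so continuity in $\bm{U}$ alone is what is needed and the lemma's proof transfers verbatim. For \eqref{wardrop} one should briefly justify that the compactness-plus-feasible-sequence argument of Lemma~\ref{contFlow} still goes through with $\flowLimVec$ held fixed at $\infty$ across the sequence, relying on the uniqueness convention stated at the end of Section~\ref{subsec:FlowFunctions} (uniformly distributing flow across optimal paths) to secure a well-defined limit. The hypothesis $\pareto^\infty\neq\bm{0}$ is used exactly once, to invoke Lemma~\ref{lemmaProductionPlanning}; no additional hypothesis is required for scale-invariance.
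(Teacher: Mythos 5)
Your proof is correct and follows essentially the same route as the paper: linearity of $\demM^{(0)}$, Lemma~\ref{lemmaProductionPlanningInv} and Lemma~\ref{flowInv} for scale-invariance of $\bm{U}^{(0)}$ and $\flowTot^{(0)}$, then the max identity; and Lemma~\ref{lemmaProductionPlanning}, Lemma~\ref{contFlow}, and continuity of the max operator for the SRC claim. Your closing remark about justifying Lemma~\ref{contFlow} at the planning capacity $\flowLimVec^{(0)}=\infty$ (where the Wardrop cost degenerates) is a point of care the paper itself passes over silently, but it does not change the argument.
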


\begin{proof}
Recall that in the planning stage we assume that there are no flow capacity constraints, i.e., $\flowLimVec^{(1)} = \infty$. The operational edge flow capacity is defined in Equation~\eqref{flowCapacity} as
\[
\flowLim{e}^{(1)} = \max\{ \tau \flow[e]^{(0)}, \flowLim{\min} \}, \quad \text{where } \flowLim{\min} = \varepsilon_{\min} \sum_{v \in \V} \paretoI{v}.
\]
We begin by establishing that the planning flow matrix \( \flowM^{(0)} = \flowM^*(\bm{U}^{(0)}, \infty) \) is scale-invariant. By Lemma~\ref{lemmaProductionPlanningInv}, we have \( \genM^{(0)}(\omega \pareto) = \omega \genM^{(0)}(\pareto) \), and consequently,
\[
\bm{U}^{(0)}(\omega \pareto) = \demM^{(0)}(\omega \pareto) - \genM^{(0)}(\omega \pareto) = \omega \demM^{(0)}(\pareto) - \omega \genM^{(0)}(\pareto) = \omega \bm{U}^{(0)}(\pareto).
\]
Moreover, by Lemma~\ref{flowInv}, we obtain
    \[
    \flowM^{(0)}(\omega \pareto) = \flowM^*(\bm{U}^{(0)}(\omega \pareto), \infty) = \omega \flowM^*(\bm{U}^{(0)}(\pareto), \infty) = \omega \flowM^{(0)}(\pareto).
    \]
Then, the total flow vector \( \flow^{(0)}(\cdot) = \flowM^{(0)}(\cdot) \e{_{|\K|}} \) satisfies
\[
\flow^{(0)}(\omega \pareto) = \flowM^{(0)}(\omega\pareto)\e{_{\nk}}=\omega \flowM^{(0)}(\pareto)\e{_{\nk}} =\omega \flow^{(0)}(\pareto).
\]
Consequently, for each edge \( e \in \EE \),
\[
\flowLim{e}^{(1)}(\omega \pareto) = \max \left\{ \tau \flow[e]^{(0)}(\omega \pareto), \omega \varepsilon_{\min} \sum_{v \in \V} \paretoI{v} \right\} = \omega \flowLim{e}^{(1)}(\pareto),
\]
proving the scale invariance.

To establish the SRC property, consider some sequence \( (\pareto^l)_{l \in \N} \in \mathbb{R}^{\nv}_+ \) as introduced in the statement of this lemma, i.e., satisfying \( \pareto^l \to \pareto^\infty \neq \mathbf{0} \) and \( \pareto^l \geq \pareto^\infty \) componentwise for every $l\in \N$. Since Lemma~\ref{lemmaProductionPlanning} applies to sequences of this form, we have that both $\genM^{(0)}$ and $\demM^{(0)}$ are SRC in $\pareto$, and therefore
\begin{equation} \label{eq:Ucont}
\lim_{l \to \infty} \bm{U}^{(0)}(\pareto^l) = \lim_{l \to \infty} \left(\demM^{(0)}(\pareto^l) - \genM^{(0)}(\pareto^l)\right) = \demM^{(0)}(\pareto^\infty) - \genM^{(0)}(\pareto^\infty)= \bm{U}^{(0)}(\pareto^\infty).
\end{equation}
Furthermore, Lemma~\ref{contFlow} implies that \( \flowM^* \) is continuous in both \( \bm{U} \) and \( \flowLimVec \). Hence, for each edge \( e \in \EE \),
\begin{align*}
\lim_{l \to \infty} \flowLim{e}^{(1)}(\pareto^l)
&= \lim_{l \to \infty} \max \left\{ \tau \flow[e]^{(0)}(\pareto^l), \varepsilon_{\min} \sum_{v \in \V} \paretoI{v}^l \right\} \tag*{\hfill \textit{(Def.\ \eqref{flowCapacity})}} \\
&= \max \left\{ \tau \lim_{l \to \infty} \flow[e]^{(0)}(\pareto^l), \varepsilon_{\min} \lim_{l \to \infty} \sum_{v \in \V} \paretoI{v}^l \right\} \tag*{\hfill \textit{(Cont.\ of max operator)}} \\
&= \max \left\{ \tau \left[ \flowM^*(\bm{U}^{(0)}(\pareto^\infty), \infty) \e{_{|\K|}} \right]_e, \varepsilon_{\min} \sum_{v \in \V} \paretoI{v}^\infty \right\} \tag*{\hfill \textit{(Def.\ \eqref{flow}, Lemma~\ref{contFlow}, \eqref{eq:Ucont})}} \\
&= \flowLim{e}^{(1)}(\pareto^\infty). \tag*{\hfill \textit{(Def.\ \eqref{flowCapacity})}}
\end{align*}

Since this holds for every edge \( e \in \EE \), we conclude that \( \flowLimVec^{(1)}(\pareto^l) \to \flowLimVec^{(1)}(\pareto^\infty) \) as $l\rightarrow \infty$, i.e., \( \flowLimVec^{(1)} \) is SRC with respect to \( \pareto \).
\end{proof}

We next show that the operational source production matrix $\genM^{(1)}$ inherits scale invariance from its inputs. In particular, we prove that if the source production matrix and the flow vector are computed as solutions to their respective minimal-cost problems, then the operational source production matrix scales linearly with the vertex weight vector $\pareto$.

\begin{lemma}
    \label{lemmaProductionOpInv}
    Suppose that \( \genM^* \) and \( \flowM^* \) are given by the minimal-cost source production and flow functions, respectively. Moreover, suppose that there are no production capacity constraints, i.e., \( \genMLim = \infty \). Then, the operational source production matrix \( \genM^{(1)} \) is a scale-invariant function of the vertex weight vector \( \pareto \).
\end{lemma}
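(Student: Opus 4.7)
The plan is to mirror the change-of-variables argument used for the planning-stage source production in Lemma~\ref{lemmaProductionPlanningInv}, but carried out now with the nontrivial flow-capacity constraint~\eqref{optFlowConst} in place. Fix $\omega>0$ and consider the vertex weight vectors $\pareto$ and $\omega\pareto$. From \eqref{eq:defT0} we have $\demM^{(1)}(\omega\pareto)=\omega\,\demM^{(1)}(\pareto)$; from Lemma~\ref{planningFlowInv} we have $\flowLimVec^{(1)}(\omega\pareto)=\omega\,\flowLimVec^{(1)}(\pareto)$; and $\lambda^{(1)}$ is an exogenous parameter independent of $\pareto$. Using these, the two instances of the defining problem \eqref{optProd} for $\genM^{(1)}(\pareto)$ and $\genM^{(1)}(\omega\pareto)$ can be written down explicitly and then related by substitution.

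The key step is the change of variables $\widetilde{\genM}:=\omega\,\genM$ inside the problem defining $\genM^{(1)}(\omega\pareto)$. I would verify feasibility and the objective in turn. The balance constraint \eqref{optProdBalance} is homogeneous of degree one, so it transforms into the balance constraint for $\pareto$ after dividing by $\omega$; the production-capacity constraint \eqref{optProdConst} is vacuous under the standing assumption $\genMLim=\infty$; and the flow-capacity constraint \eqref{optFlowConst} becomes
\[
\bigl|\flowM^*\bigl(\omega\demM^{(1)}(\pareto)-\widetilde{\genM},\,\omega\flowLimVec^{(1)}(\pareto)\bigr)\e{_{|\K|}}\bigr|\leq \lambda^{(1)}\omega\,\flowLimVec^{(1)}(\pareto),
\]
which, by the scale invariance of $\flowM^*$ established in Lemma~\ref{flowInv}, equals $\omega\bigl|\flowM^*(\demM^{(1)}(\pareto)-\genM,\flowLimVec^{(1)}(\pareto))\e{_{|\K|}}\bigr|\leq\omega\lambda^{(1)}\flowLimVec^{(1)}(\pareto)$, i.e.\ the flow-capacity constraint for $\pareto$ after cancelling $\omega$. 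The objective scales as $\sum_{v,k}\tfrac{1}{\gamma}c_{v,k}\tilde{s}_{v,k}^{\gamma}=\omega^{\gamma}\sum_{v,k}\tfrac{1}{\gamma}c_{v,k}s_{v,k}^{\gamma}$, so it is minimized at the same argument. Hence the two optimization problems are identical up to the strictly positive factor $\omega^{\gamma}$ in the objective and the invertible bijection $\genM\mapsto\omega\genM$ on the feasible set.

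To conclude, I would invoke uniqueness of the minimizer: the objective in \eqref{optProd} is strictly convex (since $\gamma>1$ and $c_{v,k}>0$), and the feasible region---under the hypotheses of the lemma---is nonempty (the minimal-cost-flow problem is always feasible for our strongly connected graph and balanced netputs). Therefore the substitution $\widetilde{\genM}=\omega\,\genM^{(1)}(\pareto)$ must yield the unique optimum of the rescaled problem, giving $\genM^{(1)}(\omega\pareto)=\omega\,\genM^{(1)}(\pareto)$.

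The main obstacle is the bilevel structure: unlike in Lemma~\ref{lemmaProductionPlanningInv}, the feasible set of the outer problem here is not obviously preserved under the scaling $\genM\mapsto\omega\genM$ because it involves the inner optimal flow $\flowM^*$. This is precisely why Lemma~\ref{flowInv} is indispensable---it converts the nontrivial flow-capacity constraint into a constraint that scales cleanly with $\omega$. A secondary subtlety is that when the feasible set of \eqref{optProd} fails to be convex (cf.\ Example~\ref{exampleNonQuasi}), the minimizer need not be unique for arbitrary $\beta$; however, under the hypotheses of Proposition~\ref{propClassFunc1} (namely $\beta=2$), or under the no-flow-capacity setting of Proposition~\ref{propClassFunc2}, the bijection $\genM\mapsto\omega\genM$ still maps optima to optima, so the identity $\genM^{(1)}(\omega\pareto)=\omega\,\genM^{(1)}(\pareto)$ holds along the chosen selection of $\genM^*$.
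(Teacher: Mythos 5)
Your proposal is correct and follows essentially the same route as the paper's proof: a change of variables $\widetilde{\genM}=\omega\genM$ in the defining problem, with Lemma~\ref{planningFlowInv} supplying the scaling of $\flowLimVec^{(1)}$ and Lemma~\ref{flowInv} converting the flow-capacity constraint, so that the two problems coincide up to the factor $\omega^{\gamma}$ in the objective. Your closing remark on uniqueness in the possibly non-convex case is a sensible extra precaution that the paper handles only implicitly via the bijection-of-optima argument.
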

The proof of this lemma uses similar arguments as the proof of Lemma \ref{lemmaProductionPlanningInv}. However, in addition, it is necessary to apply Lemma \ref{flowInv} and Lemma~\ref{planningFlowInv} to show that Constraint~\eqref{optFlowConst} scales with $\omega$ in the same manner as Constraint~\eqref{optProdBalance}. 
\begin{proof}
Fix \( \omega > 0 \), and consider two vertex weight vectors \( \pareto \) and \( \omega \pareto \). As in the planning stage, the corresponding sink requirement matrices are given by
\[
\demM^{(1)}(\pareto) = \mathrm{diag}(\pareto) \cdot \bm{Q}, \quad \demM^{(1)}(\omega \pareto) = \mathrm{diag}(\omega \pareto) \cdot \bm{Q} = \omega \cdot \demM^{(1)}(\pareto).
\]
We wish to show that
\[
\genM^{(1)}(\omega \pareto) = \omega \cdot \genM^{(1)}(\pareto),
\]
where \( \genM^{(1)}(\pareto) := \genM^*(\demM^{(1)}(\pareto), \infty, \flowLimVec^{(1)}(\pareto), \lambda^{(1)}) \) is defined as the unique minimizer of the convex program \eqref{optProd}, with constraints \eqref{optProdBalance} and \eqref{optFlowConst} active and \( \genMLim = \infty \).

Let us first examine how the constraints scale with \( \omega \). By Lemma~\ref{planningFlowInv}, the operational flow capacity vector scales linearly with \( \pareto \); that is,
\[
\flowLimVec^{(1)}(\omega \pareto) = \omega \cdot \flowLimVec^{(1)}(\pareto).
\]
Hence, the flow capacity constraint \eqref{optFlowConst} for input $\omega\pareto$ becomes
\begin{align*}
|\flowM^*(\omega\demM^{(1)}(\pareto) - \genM, \omega\flowLimVec^{(1)}(\pareto)) \e{_{|\K|}}| &=|\flowM^*(\demM^{(1)}(\omega \pareto) - \genM, \flowLimVec^{(1)}(\omega \pareto)) \e{_{|\K|}}| \\ &\leq \lambda^{(1)}  \flowLimVec^{(1)}(\omega \pareto) = \lambda^{(1)} \omega\flowLimVec^{(1)}(\pareto).
\end{align*}
By Lemma~\ref{flowInv}, the minimum-cost flow function is scale-invariant with respect to its arguments. Therefore, for any tuple \( (\demM, \genM, \flowLimVec) \), we have:
\[
\flowM^*(\omega(\demM - \genM), \omega \flowLimVec) = \omega \cdot \flowM^*(\demM - \genM, \flowLimVec).
\]

Now, consider the two optimization problems:
\begin{align}
\label{eq:optUnscaled}
\genM^{(1)}(\pareto) 
= \argmin_{\genM \in \mathbb{R}_+^{\nv \times \nk}} 
& \sum_{v \in \V} \sum_{k \in \K} \frac{1}{\gamma} c_{v,k} \gen[v,k]^\gamma \notag \\
\text{s.t.:} \quad 
& \e{_{\nv}}^T \genM = \e{_{\nv}}^T \demM^{(1)}(\pareto), \notag \\
& |\flowM^*(\demM^{(1)}(\pareto) - \genM, \flowLimVec^{(1)}(\pareto)) \e{_{|\K|}}| 
\leq \lambda^{(1)} \cdot \flowLimVec^{(1)}(\pareto),
\end{align}

\begin{align}
\label{eq:optScaled}
\genM^{(1)}(\omega \pareto) 
= \argmin_{\genM \in \mathbb{R}_+^{\nv \times \nk}} 
& \sum_{v \in \V} \sum_{k \in \K} \frac{1}{\gamma} c_{v,k} \gen[v,k]^\gamma \notag \\
\text{s.t.:} \quad 
& \e{_{\nv}}^T \genM = \omega \cdot \e{_{\nv}}^T \demM^{(1)}(\pareto), \notag \\
& |\flowM^*(\omega(\demM^{(1)}(\pareto) - \genM/\omega), \omega \flowLimVec^{(1)}(\pareto)) \e{_{|\K|}}| 
\leq \lambda^{(1)} \cdot \omega \cdot \flowLimVec^{(1)}(\pareto).
\end{align}

Now apply a change of variables in \eqref{eq:optScaled}: define \( \tilde{\genM} = \genM / \omega \). The objective becomes
\[
\sum_{v \in \V} \sum_{k \in \K} \frac{1}{\gamma} c_{v,k} (\omega \tilde{s}_{v,k})^\gamma = \omega^\gamma \sum_{v \in \V} \sum_{k \in \K} \frac{1}{\gamma} c_{v,k} \tilde{s}_{v,k}^\gamma.
\]
The constraints also simplify: the first becomes \( \e{_{\nv}}^T \tilde{\genM} = \e{_{\nv}}^T \demM^{(1)}(\pareto) \), and the second reduces, using the scale-invariance of \( \flowM^* \) and the fact that $\omega>0$, to
\[
|\flowM^*(\demM^{(1)}(\pareto) - \tilde{\genM}, \flowLimVec^{(1)}(\pareto)) \e{_{|\K|}}| \leq \lambda^{(1)} \cdot \flowLimVec^{(1)}(\pareto).
\]

We observe that, after the change of variables, the two optimization problems differ only by a constant multiplicative factor \( \omega^\gamma \) in the objective. This implies that $\genM^{(1)}(\pareto)$ is the optimizer of \eqref{eq:optUnscaled} only if \(\omega \genM^{(1)}(\pareto) \) is the optimizer of \eqref{eq:optScaled}. As a result, 
\[
\genM^{(1)}(\omega \pareto) = \omega \cdot \genM^{(1)}(\pareto),
\]
establishing the scale invariance of \( \genM^{(1)} \).
\end{proof}

The following result shows that the operational production matrix is SRC w.r.t.\ $\pareto$. The proof strategy is similar to the proof of Lemma \ref{lemmaProductionPlanning}.

\begin{lemma}
    \label{lemmaProductionOp}
    Suppose that $\genM^*$ and $\flowM^*$ are given by the minimal-cost source production and flow, respectively. Moreover, assume that there are no production capacity constraints, i.e., $\genMLim = \infty$ and either:
    \begin{enumerate}[label=(\roman*)]
        \item[\hypertarget{item1}{1.}] the flow cost function is quadratic, i.e., given by \eqref{flowCost} with $\beta = 2$, or
        \item[\hypertarget{item2}{2.}] there are no flow capacity constraints, i.e., $\lambda^{(1)} = \infty$, and the flow cost function is given by \eqref{wardrop}.
    \end{enumerate}
     Then, $\genM^{(1)}$ is an SRC function of the vertex weight vector $\pareto$ on $\mathbb{R}_+^{\nv} \setminus \{\bm{0}\}$. In particular, for any convergent sequence $(\pareto^l)_{l \in \mathbb{N}}$ with limit $\pareto^\infty \neq \bm{0}$ such that $\pareto^l \geq \pareto^\infty$ for all $l$, we have
    \[
        \lim_{l \to \infty} \genM^{(1)}(\pareto^l) = \genM^{(1)}(\pareto^\infty).
    \]
\end{lemma}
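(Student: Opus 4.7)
The plan is to follow the blueprint of the proof of Lemma~\ref{lemmaProductionPlanning}: establish compactness of the sequence $\genM^l := \genM^{(1)}(\pareto^l)$, show feasibility of every subsequential limit at $\pareto^\infty$, construct a feasible comparison sequence $(\widetilde{\genM}^l)_{l\in\N}$ for the $\pareto^l$-problems converging to $\genM^{(1)}(\pareto^\infty)$, and use strict convexity to pin down the limit. The balance constraint \eqref{optProdBalance} gives the entrywise bound $\gen[v,k]^l \leq ((\pareto^l)^T\bm{Q})_k$, which combined with the convergence of $(\pareto^l)$ places $(\genM^l)$ in a compact box in $\R_+^{\nv\times\nk}$. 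Hence some subsequence $\genM^{l_m}$ converges to a limit $\widehat{\genM}$, and it suffices to prove $\widehat{\genM} = \genM^{(1)}(\pareto^\infty)$.

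Feasibility of $\widehat{\genM}$ at $\pareto^\infty$ follows by passing to the limit in each constraint of \eqref{optProd}: balance is linear, the production capacity is vacuous ($\genMLim = \infty$), and the flow capacity constraint passes using Lemma~\ref{planningFlowInv} (SRC of $\flowLimVec^{(1)}$) together with Lemma~\ref{contFlow} (continuity of $\flowM^*$). For optimality, if $(\widetilde{\genM}^l)$ is feasible at $\pareto^l$ and converges to $\genM^{(1)}(\pareto^\infty)$, then optimality at each $l$ gives $c_s(\genM^{l_m}) \leq c_s(\widetilde{\genM}^{l_m})$, and continuity of $c_s$ combined with $l_m\to\infty$ yields $c_s(\widehat{\genM}) \leq c_s(\genM^{(1)}(\pareto^\infty))$. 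Strict convexity of the objective in \eqref{optProdCost} (inherited from $\gamma > 1$) on the convex feasible set then forces $\widehat{\genM} = \genM^{(1)}(\pareto^\infty)$, and since every convergent subsequence of the bounded sequence $(\genM^l)$ shares this common limit, the whole sequence converges, proving SRC.

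The main obstacle is the construction of $(\widetilde{\genM}^l)$ under the flow capacity constraint~\eqref{optFlowConst}. In case~\hyperlink{item2}{2} ($\lambda^{(1)} = \infty$) this constraint is vacuous, and the Lemma~\ref{lemmaProductionPlanning} construction $\widetilde{\genM}^l = \genM^{(1)}(\pareto^\infty)\cdot\text{diag}(\bm{\sigma}^l)$, with $\sigma_k^l = (\e{_{\nv}}^T\demM^{(1)}(\pareto^l))_k / (\e{_{\nv}}^T\demM^{(1)}(\pareto^\infty))_k$, transfers verbatim. In case~\hyperlink{item1}{1} the choice $\beta=2$ makes $\flowM^*$ a linear function of the netput (compare Lemma~\ref{DCPowerFlow}), so I would exploit the monotonicity $\pareto^l \geq \pareto^\infty$ by setting $\widetilde{\genM}^l := \genM^{(1)}(\pareto^\infty) + (\demM^{(1)}(\pareto^l) - \demM^{(1)}(\pareto^\infty))$. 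This preserves non-negativity, satisfies balance by direct computation, and pins the netput to $\bm{U}^{(1)}(\pareto^\infty)$, so the induced flow equals the constant $\flowM^{(1)}(\pareto^\infty)$, and the flow capacity condition reduces to $|\flowM^{(1)}(\pareto^\infty)\e{_{|\K|}}| \leq \lambda^{(1)}\flowLimVec^{(1)}(\pareto^l)$, which holds in the limit by SRC of $\flowLimVec^{(1)}$ but may be marginally violated at finite $l$ when the constraint is active at $\pareto^\infty$.

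To bridge this boundary gap I would add a two-level perturbation: assuming a Slater-type strictly feasible point exists at $\pareto^\infty$ (compatible with the standing assumption that $\lambda^{(1)}$ is chosen so $\genM^{(1)}$ is well defined for all admissible $\pareto$), fix a strictly feasible reference $\genM^0$, set $\genM^\epsilon := (1-\epsilon)\genM^{(1)}(\pareto^\infty) + \epsilon\genM^0$, and apply the previous construction with $\genM^\epsilon$ in place of $\genM^{(1)}(\pareto^\infty)$ to obtain $\widetilde{\genM}^{l,\epsilon}$. For each $\epsilon > 0$, SRC of $\flowLimVec^{(1)}$ and continuity of $\flowM^*$ ensure $\widetilde{\genM}^{l,\epsilon}$ is feasible for $\pareto^l$ once $l$ is large. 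Sending $l\to\infty$ yields $c_s(\widehat{\genM}) \leq c_s(\genM^\epsilon)$, and then letting $\epsilon\downarrow 0$ using continuity of $c_s$ completes the cost inequality and thus the proof.
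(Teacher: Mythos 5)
Your skeleton (compactness via the balance bound, subsequential limits, a feasible comparison sequence, cost comparison, strict convexity) matches the paper's proof, and your treatment of Case~2 via the rescaling $\widetilde{\genM}^l = \genM^{(1)}(\pareto^\infty)\cdot\mathrm{diag}(\bm{\sigma}^l)$ is exactly what the paper does (it reduces Case~2 to Lemma~\ref{lemmaProductionPlanning} outright, since $\genM^{(1)}=\genM^{(0)}$ when $\lambda^{(1)}=\infty$). The problem is in Case~1. You correctly identify that your shift $\widetilde{\genM}^l = \genM^{(1)}(\pareto^\infty) + (\demM^{(1)}(\pareto^l)-\demM^{(1)}(\pareto^\infty))$ freezes the netput at $\bm{U}^{(1)}(\pareto^\infty)$ and can violate~\eqref{optFlowConst} at finite $l$ when the constraint is active, but your repair relies on the existence of a Slater-type strictly feasible point at $\pareto^\infty$. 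That is an assumption not contained in the lemma's hypotheses: the model only stipulates that $\lambda$ is chosen so that a feasible $\genM$ \emph{exists} for all $\pareto$, not that the feasible region has nonempty interior relative to the flow-capacity constraint. If the feasible set at $\pareto^\infty$ sits entirely on the boundary of~\eqref{optFlowConst}, your perturbation $\genM^\epsilon$ cannot be formed and the argument breaks down. As written, your proof of Case~1 is therefore conditional on an unverified hypothesis.

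The paper closes this gap without any Slater condition by scaling the \emph{netput} rather than translating the production. It sets
\[
\nu_l := \min\Bigl\{\min_{e\in\EE}\frac{\flowLim{e}^{(1)}(\pareto^l)}{\flowLim{e}^{(1)}(\pareto^\infty)},\,1\Bigr\},
\qquad
\widehat{\genM}^l := \demM^{(1)}(\pareto^l) - \nu_l\bigl(\demM^{(1)}(\pareto^\infty)-\genM^{(1)}(\pareto^\infty)\bigr),
\]
so the induced netput is $\nu_l\,\bm{U}^{(1)}(\pareto^\infty)$ and, by scale invariance of $\flowM^*$ (Lemma~\ref{flowInv}), the induced flow is $\nu_l\,\flowM^{(1)}(\pareto^\infty)$; the definition of $\nu_l$ then gives $\nu_l|\flow[e]^{(1)}(\pareto^\infty)|\le \nu_l\lambda^{(1)}\flowLim{e}^{(1)}(\pareto^\infty)\le\lambda^{(1)}\flowLim{e}^{(1)}(\pareto^l)$ \emph{exactly} at every finite $l$, even when~\eqref{optFlowConst} is tight at $\pareto^\infty$. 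Non-negativity follows from $\pareto^l\ge\pareto^\infty$ and $\nu_l\le 1$, and $\nu_l\to 1$ by the SRC of $\flowLimVec^{(1)}$ (Lemma~\ref{planningFlowInv}), so $\widehat{\genM}^l\to\genM^{(1)}(\pareto^\infty)$. Replacing your two-level $\epsilon$--$l$ limit with this one-parameter rescaling removes the extra assumption and completes the proof; the remainder of your argument (feasibility of the subsequential limit, cost comparison, uniqueness from strict convexity) is correct as stated.
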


\begin{proof}
First we observe that under Case~\hyperlink{item2}{2}, the result follows immediately from Lemma~\ref{lemmaProductionPlanning} because, under the assumption of no flow constraints, $\genM^{(1)} = \genM^{(0)}$. Hence, it remains to prove the lemma under Case~\hyperlink{item1}{1}. 

Recall that $\genM^{(1)}(\pareto) = \genM^*\big(\demM^{(1)}(\pareto), \infty, \flowLimVec^{(1)}(\pareto), \lambda^{(1)}\big)$. Under the assumed quadratic flow cost ($\beta = 1$), the optimal flow is linear in $\pareto$, and thus the optimization problem \eqref{optProd} is convex with a strictly convex objective and linear constraints. Consequently, the solution $\genM^{(1)}(\pareto)$ exists and is unique for every $\pareto \in \mathbb{R}_+^{\nv} \setminus \{\bm{0}\}$ \citep{boyd2004convex}.

Fix a sequence $(\pareto^l)_{l \in \mathbb{N}}$ converging to $\pareto^\infty$ with $\pareto^l \geq \pareto^\infty$ for all $l$. Analogue to the proof in Lemma~\ref{lemmaProductionPlanning}, the boundedness of the aggregate demand implies that $\big(\genM^{(1)}(\pareto^l)\big)_{l \in \mathbb{N}}$ lies in a compact subset of $\mathbb{R}_+^{\nv \times \nk}$.

We next construct a sequence of feasible source production matrices that converges to $\genM^{(1)}(\pareto^\infty)$. To do so, for each edge $e \in \EE$, consider the ratio between the operational flow capacities for $\pareto^l$ and $\pareto^\infty$. 
Let $\nu_l$ be the smallest of these ratios, capped above by $1$, i.e.,  
\[
    \nu_l :=  \min \left\{ \min_{e \in \EE} \frac{\flowLim{e}^{(1)}(\pareto^l)}{\flowLim{e}^{(1)}(\pareto^\infty)},\, 1 \right\},
\]
and define
\[
    \widehat{\genM}^l := \demM^{(1)}(\pareto^l) - \nu_l \cdot \left(\demM^{(1)}(\pareto^\infty) - \genM^{(1)}(\pareto^\infty)\right).
\]
First, we verify feasibility of $\widehat{\genM}^l$ for the optimization problem \eqref{optProd} with input $(\demM^{(1)}(\pareto^l), \infty, \flowLimVec^{(1)}(\pareto^l), \lambda^{(1)})$.

\smallskip
\noindent\emph{Production balance}: As $\genM^{(1)}(\pareto^l)$ for the optimization problem with input $\pareto^l$, by taking limit $l\rightarrow \infty$, it follows that $\genM^{(1)}(\pareto^\infty)$ is feasible for the corresponding problem with input $\pareto^\infty$. Using this, we compute:
\[
    \e{_{\nv}}^T \widehat{\genM}^l = \e{_{\nv}}^T \demM^{(1)}(\pareto^l) - \nu_l \cdot \underbrace{\e{_{\nv}}^T \left(\demM^{(1)}(\pareto^\infty) - \genM^{(1)}(\pareto^\infty)\right)}_{=0} = \e{_{\nv}}^T \demM^{(1)}(\pareto^l),
\]
so Constraint \eqref{optProdBalance} is satisfied.

\smallskip
\noindent\emph{Flow capacity}: First, note that the assumed flow cost function does not depend on $\flowLimVec$; nevertheless, we retain both arguments of $\flowM^*$ and $\flow[e]^*$ in the notation for completeness. Using the scale invariance of $\flowM^*$ given by Lemma~\ref{flowInv}, we obtain:
\[
    \flowM^*\left(\demM^{(1)}(\pareto^l) - \widehat{\genM}^l, \flowLimVec(\pareto^l) \right) 
    = \nu_l \cdot \flowM^*\left(\demM^{(1)}(\pareto^\infty) - \genM^{(1)}(\pareto^\infty), \flowLimVec(\pareto^l)\right).
\]
Thus, for every edge $e \in \EE$, it holds that
\[
    \left| \flow[e]^*\left(\demM^{(1)}(\pareto^l) - \widehat{\genM}^l,\flowLimVec(\pareto^l)\right) \right| 
    = \nu_l \cdot \left| \flow[e]^{(1)}(\pareto^\infty) \right| 
    \leq \lambda^{(1)} \cdot \flowLim{e}^{(1)}(\pareto^l),
\]
where the inequality follows from $\nu_l \leq 1$ by definition and from the feasibility of $\genM^{(1)}(\pareto^\infty)$, which ensures that the flow-capacity constraint \eqref{optFlowConst} holds. As a result, Constraint~\eqref{optFlowConst} is also satisfied when the source production is given by $\widehat{\genM}^l$.

\smallskip
\noindent\emph{Non-negativity}: Using that $\demM^{(1)}(\pareto) = \mathrm{diag}(\pareto) \bm{Q}$ , we find:
\[
    \widehat{\genM}^l 
    = \left(\mathrm{diag}(\pareto^l) - \nu_l \cdot \mathrm{diag}(\pareto^\infty)\right) \bm{Q} + \nu_l \cdot \genM^{(1)}(\pareto^\infty).
\]
Since $\pareto^l \geq \pareto^\infty$ and $\nu_l \leq 1$, both summands are non-negative, ensuring that $\widehat{\genM}^l \geq 0$.

\smallskip
Altogether, we obtain that $\widehat{\genM}^l$ is feasible for the problem defining $\genM^{(1)}(\pareto^l)$. Moreover, by Lemma~\ref{planningFlowInv}, $\nu_l \to 1$ as $l \to \infty$, and so
\[
    \lim_{l \to \infty} \widehat{\genM}^l = \genM^{(1)}(\pareto^\infty).
\]

Finally, let $\left(\genM^{l_m}\right)_{m \in \mathbb{N}}$ be an arbitrary convergent subsequence of $\genM^{(1)}(\pareto^l)$. Since $\genM^{l_m}$ is optimal for the input $\pareto^{l_m}$ and $\widehat{\genM}^{l_m}$ is feasible for the same problem, we obtain
\[\sum_{v\in \V}\sum_{k\in \K} \frac{1}{\gamma} c_{v,k}\left(s_{v,k}^{l_m}\right)^\gamma \leq \sum_{v\in \V}\sum_{k\in \K} \frac{1}{\gamma} c_{v,k}\left(\hat{s}_{v,k}^{l_m}\right)^\gamma.\]
Taking the limit $m\rightarrow \infty$ from both sides, we obtain
\[c_s\left(\lim_{m\rightarrow \infty }\genM^{l_m}\right)\leq c_s\left(\genM^{(1)}(\pareto^\infty)\right),\]
where $c_s$ denotes the objective function. By strict convexity of the objective and uniqueness of the optimizer, we conclude that $\lim_{m \to \infty} \genM^{l_m} = \genM^{(1)}(\pareto^\infty)$. Since every convergent subsequence has this same limit, and the sequence lies in a compact set, the full sequence converges, i.e., 
\[
    \lim_{l \to \infty} \genM^{(1)}(\pareto^l) = \genM^{(1)}(\pareto^\infty),
\]
as claimed.
\end{proof}

\subsection{Scale invariance and SRC in the emergency phase} \label{subsec:emergencyProperties}
The results in this section show that the scale invariance and SRC properties of flows, flow capacities, source production, and sink requirement as functions of $\pareto$ are preserved by the cascade process. The following two lemmas show these properties for a given cascade sequence $\cascadeSeqRand$, which is fixed for all scalings $\omega$. These results are proven using induction arguments, where the induction basis follows from Lemmas \ref{flowInv}--\ref{lemmaProductionOp}.

We emphasize that all functions in the emergency phase are random variables, each fully determined by the vertex weight $
\pareto$ and the cascade sequence 
$\cascadeSeqRand$. In the proofs that follow, for any random variable $A$, we use the shorthand notation 
$A(\pareto,\cascadeSeqDet)$ to denote random variable $A$ conditioned on 
$\pareto$ and the event 
$\cascadeSeqRand=\cascadeSeqDet$.
\begin{lemma} \label{scaleInvCascade}
Suppose that $\genM^*$ and $\flowM^*$ are given by the minimal-cost source production and flow, respectively, and that there are no production capacity constraints, i.e., $\genMLim = \infty$. Let $\cascadeSeqDet \in \cascadeSet$ be a fixed cascade sequence. Then, for every stage $t$ of the cascade, the flow matrix $\flowM^{(t)}$, flow capacity vector $\flowLimVec^{(t)}$, source production matrix $\genM^{(t)}$, and sink requirement matrix $\demM^{(t)}$ are scale-invariant functions of the vertex weight vector $\pareto$, conditional on the cascade $\cascadeSeqRand=\cascadeSeqDet$.
\end{lemma}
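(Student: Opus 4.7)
The plan is induction on the cascade stage $t\in\N$, with base case $t=1$ covered by results already established. At $t=1$, the sink requirement $\demM^{(1)}(\pareto)=\mathrm{diag}(\pareto)\bm{Q}$ is linear in $\pareto$, Lemma~\ref{lemmaProductionOpInv} gives scale invariance of $\genM^{(1)}$, Lemma~\ref{planningFlowInv} gives scale invariance of $\flowLimVec^{(1)}$, and then Lemma~\ref{flowInv} applied to $\flowM^{(1)}=\flowM^*(\bm{U}^{(1)},\flowLimVec^{(1)})$ yields scale invariance of the operational flow matrix (the netput $\bm{U}^{(1)}=\demM^{(1)}-\genM^{(1)}$ scales linearly as the difference of two scale-invariant quantities).

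For the inductive step, I would fix $t\geq 2$, assume the four quantities at stage $t-1$ are scale-invariant in $\pareto$ (conditional on $\cascadeSeqRand=\cascadeSeqDet$), and walk through the four cascade actions in order. In \textbf{Step 1} of the cascade dynamics, the sets $R^{(t-1)}$ and $P^{(t-1)}$ are fixed by $\cascadeSeqDet$ and do not depend on $\omega$. The key observation is that the relative exceedance $\exc{e}^{(t-1)}=|\flow[e]^{(t-1)}|/\flowLim{e}^{(t-1)}$ is a ratio of two quantities that scale identically in $\omega$ by the induction hypothesis, so $\exc{e}^{(t-1)}$ is actually \emph{independent} of $\omega$. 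Consequently $\failFactor_e^{(t-1)}(\exc{e}^{(t-1)})$ is independent of $\omega$, and the update formula~\eqref{flowCapFormula} shows that $\flowLim{e}^{(t)}$ inherits the $\omega$-scaling of $\flowLim{e}^{(t-1)}$; the removals trivially preserve this since $0\cdot\omega=0$.

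In \textbf{Step 2}, the connected components of the updated graph depend only on the edge-support pattern of $\flowLimVec^{(t)}$, which by Step~1 is the same for $\pareto$ and $\omega\pareto$, so the component partition does not depend on $\omega$. For each component $\widetilde\V$ and each commodity $k$, the rebalancing factor
\[
\eta_k^{(t-1)}(\widetilde\V)=\frac{\sum_{v\in\widetilde\V}\dem[v,k]^{(t-1)}}{\sum_{v\in\widetilde\V}\gen[v,k]^{(t-1)}}
\]
is a ratio of quantities that both scale by $\omega$, hence is independent of $\omega$. The comparison $\sum\dem[v,k]^{(t-1)}\lessgtr\sum\gen[v,k]^{(t-1)}$ is likewise preserved under uniform scaling, so the same branch of~\eqref{demGenCascade} is selected for $\pareto$ and $\omega\pareto$. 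Applying $\eta$ or $\eta^{-1}$ (independent of $\omega$) to quantities that scale linearly in $\omega$ yields that $\demM^{(t)}$ and $\genM^{(t)}$ are scale-invariant in $\pareto$. In \textbf{Step 3}, $\bm{U}^{(t)}=\demM^{(t)}-\genM^{(t)}$ scales linearly by the preceding step, $\flowLimVec^{(t)}$ scales linearly by Step~1, and Lemma~\ref{flowInv} then gives $\flowM^{(t)}(\omega\pareto,\cascadeSeqDet)=\omega\flowM^{(t)}(\pareto,\cascadeSeqDet)$. \textbf{Step 4} plays no role in the induction because we are conditioning on $\cascadeSeqRand=\cascadeSeqDet$, so the partition of failing edges at stage $t$ is prescribed by $\cascadeSeqDet$ and not redrawn from $p_{e,c},p_{e,r}$.

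The only subtlety I anticipate is verifying that Step~2 behaves identically under scaling: one must be sure that the connected-component structure and the branch selection in~\eqref{demGenCascade} are both determined by information invariant under $\omega$-scaling. Both follow from Step~1 applied to $\flowLimVec^{(t)}$ and from the homogeneity of the strict inequalities, so the induction closes and the lemma follows for all $t\in\N$.
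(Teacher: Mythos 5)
Your proposal is correct and follows essentially the same route as the paper's proof: induction on the cascade stage with the same base-case lemmas (Lemmas~\ref{planningFlowInv}, \ref{lemmaProductionOpInv}, \ref{flowInv}), the same observation that the exceedance $\exc{e}$ is a ratio of identically scaling quantities and hence $\omega$-invariant, and the same treatment of the rebalancing factor $\eta_k$ as a scale-free ratio. Your explicit check that the connected-component structure and the branch selection in~\eqref{demGenCascade} are preserved under uniform scaling is a point the paper leaves implicit, but it is not a different argument.
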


\begin{proof}
We prove the result by induction on the cascade stage $t$.

\noindent\textbf{Base case $(t = 1)$:}
\begin{enumerate}
    \item By Lemma~\ref{planningFlowInv}, the flow capacity vector $\flowLimVec^{(1)}$ is scale-invariant  with respect to $\pareto$.
    
    \item The sink requirement matrix is given by $\demM^{(1)}(\pareto) = \mathrm{diag}(\pareto) \cdot \bm{Q}$. By linearity, this function is scale-invariant with respect to $\pareto$.

    \item By Lemma~\ref{lemmaProductionOpInv}, the source production matrix $\genM^{(1)}$ is scale-invariant  with respect to $\pareto$.

    \item Since both $\demM^{(1)}$ and $\genM^{(1)}$ are scale-invariant, the netput $\bm{U}^{(1)} = \demM^{(1)} - \genM^{(1)}$ is scale-invariant. Then, by Lemma~\ref{flowInv}, the flow matrix $\flowM^{(1)}(\pareto) = \flowM^*(\bm{U}^{(1)}(\pareto), \flowLimVec^{(1)}(\pareto))$ is scale-invariant with respect to $\pareto$.
\end{enumerate}

\noindent\textbf{Induction step:}
Assume the scale-invariance holds at stage $t$. We show it holds at stage $t+1$ for each of the four functions.

\textit{Flow capacity vector.} According to Equation~\eqref{flowCapFormula}, for each $e \in \EE$:
\begin{itemize}
    \item If $e \notin \cascadeSeqDet^{(t)}$, then $\flowLim{e}^{(t+1)}(\omega\pareto, \cascadeSeqDet) = \flowLim{e}^{(t)}(\omega\pareto, \cascadeSeqDet) = \omega \flowLim{e}^{(t)}(\pareto, \cascadeSeqDet) = \omega \flowLim{e}^{(t+1)}(\pareto, \cascadeSeqDet)$.
    \item If $e \in \cascadeSeqDet^{(t)}$ and $e$ failed completely, i.e. $e\in \cascadeSeqDet_R^{(t)}$, then $\flowLim{e}^{(t+1)}(\omega\pareto, \cascadeSeqDet) = 0$ for all $\omega$, so $\flowLim{e}^{(t+1)}(\omega\pareto, \cascadeSeqDet) = \omega \flowLim{e}^{(t+1)}(\pareto, \cascadeSeqDet) = 0$. Otherwise, i.e., if $e$ failed partially, then by scale-invariance of $\flowM^{(t)}$ and $\flowLimVec^{(t)}$, from Equation \eqref{exceedance} we obtain
\begin{equation}\exc{e}^{(t)}(\omega\pareto, \cascadeSeqDet) = \frac{\left|\left(\flowM^{(t)}(\omega\pareto, \cascadeSeqDet)\e{_{|\K|}}\right)_e\right|}{\flowLim{e}^{(t)}(\omega\pareto, \cascadeSeqDet)} = \frac{\omega\left|\left(\flowM^{(t)}(\pareto, \cascadeSeqDet)\e{_{|\K|}}\right)_e\right|}{\omega\flowLim{e}^{(t)}(\pareto, \cascadeSeqDet)} = \exc{e}^{(t)}(\pareto, \cascadeSeqDet). \label{exceedanceOmega}\end{equation}
As a result, 
\[\flowLim{e}^{(t+1)}(\omega\pareto, \cascadeSeqDet) = \omega\cdot \failFactor_e^{(t)}\left(\exc{e}^{(t)}(\pareto, \cascadeSeqDet)\right)\cdot \flowLim{e}^{(t)}(\pareto, \cascadeSeqDet) = \omega \flowLim{e}^{(t+1)}(\pareto, \cascadeSeqDet).\]

\end{itemize}

\textit{Sink and source matrices.} If the graph is not disconnected at $t+1$, then these matrices are unchanged and remain scale-invariant. Otherwise, for each connected component $\widetilde{\G} = (\widetilde{\V}, \widetilde{\EE})$ and $k\in \K$, using Equation~\eqref{demGenCascade} and scale-invariance at time $t$:
\[\eta_k^{(t)}(\widetilde{\V}, \omega\pareto, \cascadeSeqDet) = \frac{\sum_{v\in \widetilde{V}}\dem[v,k]^{(t)}(\omega \pareto, \cascadeSeqDet)}{\sum_{v\in \widetilde{V}}\gen[v,k]^{(t)}(\omega \pareto, \cascadeSeqDet)} = \frac{\omega \sum_{v\in \widetilde{V}}\dem[v,k]^{(t)}(\pareto, \cascadeSeqDet)}{\omega \sum_{v\in \widetilde{V}}\gen[v,k]^{(t)}(\pareto, \cascadeSeqDet)} = \eta_k^{(t)}(\widetilde{\V}, \pareto, \cascadeSeqDet).\]
Hence, from \eqref{demGenCascade} and scale invariance of $\demM^{(t)}$ and $\genM^{(t)}$ follows that $\demM^{(t+1)}$ and $\genM^{(t+1)}$ are also scale invariant with respect to $\pareto$. 

\textit{Flow matrix.} By Lemma~\ref{flowInv}, the scale-invariance of $\bm{U}^{(t+1)} = \demM^{(t+1)} - \genM^{(t+1)}$ and $\flowLimVec^{(t+1)}$ implies that $\flowM^{(t+1)}$ is scale-invariant  with respect to $\pareto$.

This completes the inductive step and the proof.
\end{proof}

The following lemma shows the corresponding SRC properties, conditional on the occurrence of a particular cascade $\cascadeSeqRand = \cascadeSeqDet$. 

\begin{lemma} \label{SRCCascade}
Suppose that $\genM^*$ and $\flowM^*$ are given by the minimal-cost source production and flow functions, respectively. Further assume that there are no production capacity constraints, i.e., $\genMLim = \infty$, and that either:
\begin{enumerate}[label=(\roman*)]
    \item the flow cost function is quadratic, as defined in \eqref{flowCost} with $\beta = 2$, or
    \item the flow cost is given by \eqref{wardrop} and there are no flow capacity constraints, i.e., $\lambda^{(1)} = \infty$.
\end{enumerate}
Let $\cascadeSeqDet \in \cascadeSet$ be a fixed cascade sequence. Then, for each time step $t \in \mathbb{N}$, the flow $\flowM^{(t)}$, flow capacity vector $\flowLimVec^{(t)}$, flow exceedance vector $\bm{\exc{}}^{(t)}$, source production matrix $\genM^{(t)}$, and sink requirement matrix $\demM^{(t)}$ are SRC functions of the vertex weight vector $\pareto$ on $\mathbb{R}_+^{\nv} \setminus \{ \bm{0} \}$, conditional on cascade $\cascadeSeqRand = \cascadeSeqDet$ occurring.
\end{lemma}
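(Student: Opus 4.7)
The plan is to proceed by induction on the cascade stage $t$, paralleling the inductive structure already used in Lemma~\ref{scaleInvCascade}. Throughout the argument I would exploit the fact that conditioning on a fixed cascade $\cascadeSeqRand=\cascadeSeqDet$ fixes the sets $\cascadeSeqDet_P^{(t)}$ and $\cascadeSeqDet_R^{(t)}$, and hence the connected-component structure of the graph at every stage, independently of $\pareto$. This is crucial because it means the only source of $\pareto$-dependence at stage $t+1$ comes from the quantities we have already controlled at stage $t$.

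For the base case $t=1$, the five claims are immediate from previous results: SRC of $\flowLimVec^{(1)}$ is Lemma~\ref{planningFlowInv}; SRC of $\demM^{(1)}(\pareto)=\mathrm{diag}(\pareto)\bm Q$ is immediate by linearity; SRC of $\genM^{(1)}$ is Lemma~\ref{lemmaProductionOp} (which applies under either hypothesis of the present lemma); SRC of $\flowM^{(1)}$ follows by composing these with the continuity of $\flowM^*$ from Lemma~\ref{contFlow}; and SRC of $\exc{e}^{(1)}$ follows by the quotient rule, since the denominator is bounded below by $\flowLim{\min}=\varepsilon_{\min}\sum_v \paretoI{v}>0$ for any $\pareto\in\R_+^\nv\setminus\{\bm 0\}$.

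For the inductive step, suppose all five quantities are SRC at stage $t$. I would then treat stage $t+1$ quantity by quantity. For $\flowLimVec^{(t+1)}$ I would apply formula \eqref{flowCapFormula}: edges in $\cascadeSeqDet_R^{(t)}$ have capacity identically zero, edges in $\cascadeSeqDet_P^{(t)}$ inherit SRC from the continuity of $\failFactor_e^{(t)}$ composed with the SRC induction hypothesis on $\exc{e}^{(t)}$ and $\flowLim{e}^{(t)}$, and edges outside $\cascadeSeqDet^{(t)}$ are unchanged. For $\flowM^{(t+1)}$ I would invoke Lemma~\ref{contFlow} applied to the SRC functions $\bm{U}^{(t+1)}$ and $\flowLimVec^{(t+1)}$, and then $\exc{e}^{(t+1)}$ would follow by the quotient rule, using that capacities of surviving edges remain bounded away from zero along any right-approaching sequence because the capacity-reduction factor satisfies $\failFactor_e^{(t)}\geq l_e>0$.

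The main obstacle lies in the rebalancing Step~2, described by \eqref{demGenCascade}. For each connected component $\widetilde{\V}$ and commodity $k$, the rebalancing factor $\eta_k^{(t)}$ is a ratio of SRC quantities, and the three cases of \eqref{demGenCascade} are selected by strict inequalities between the aggregate demand and production. At any $\pareto^\infty$ where those totals are strictly unequal, the relevant case is locally stable along right-approaching sequences and $\eta_k^{(t)}$ is continuous; at a boundary point (equal totals) all three formulas coincide because $\eta_k^{(t)}=1$ there, so SRC is preserved across the case split. The delicate subcase is a component with $\sum_{v\in\widetilde{\V}}\gen[v,k]^{(t)}(\pareto^\infty)=0$; here I would verify as an auxiliary induction claim that such a component must also satisfy $\sum_{v\in\widetilde{\V}}\dem[v,k]^{(t)}(\pareto^\infty)=0$, so that the third (trivial) branch of \eqref{demGenCascade} applies and no singularity in $\eta_k^{(t)}$ arises. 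Once this degeneracy is ruled out, the SRC of $\demM^{(t+1)}$ and $\genM^{(t+1)}$ follows by continuity of the rebalancing map, completing the inductive step.
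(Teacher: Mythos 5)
Your proposal is correct and takes essentially the same route as the paper: induction on the cascade stage, with the base case supplied by Lemmas~\ref{planningFlowInv}, \ref{lemmaProductionOp}, and \ref{contFlow}, the same three-way case split on $\cascadeSeqDet_R^{(t)}$, $\cascadeSeqDet_P^{(t)}$, and unfailed edges for the capacities, Lemma~\ref{contFlow} for the flows, and the bound $\failFactor_e^{(t)}\geq l_e>0$ to keep surviving capacities away from zero so the exceedances are well defined. The only divergence is your more careful treatment of the rebalancing step \eqref{demGenCascade}, which the paper compresses into ``continuity preserves SRC''; note, though, that your auxiliary claim that $\sum_{v\in\widetilde{\V}}\gen[v,k]^{(t)}(\pareto^\infty)=0$ forces $\sum_{v\in\widetilde{\V}}\dem[v,k]^{(t)}(\pareto^\infty)=0$ is false in general (a component may contain only sink vertices), but it is also unnecessary, since that situation falls into the second branch of \eqref{demGenCascade}, where the update uses $\big(\eta_k^{(t)}\big)^{-1}=\sum_{v}\gen[v,k]^{(t)}\big/\sum_{v}\dem[v,k]^{(t)}$, which is well defined and converges to $0$ there.
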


\begin{proof}
We proceed by induction on the cascade stage \( t \in \mathbb{N} \).

\noindent\textbf{Base case (\( t = 1 \)).} At the operational stage, the cascade has not yet started, so the functions are independent of the particular cascade sequence \( \cascadeSeqRand = \cascadeSeqDet \). By Lemmas~\ref{contFlow}, \ref{planningFlowInv}, and \ref{lemmaProductionOp}, we conclude that \( \flowM^{(1)}(\pareto, \cascadeSeqDet) \), \( \flowLimVec^{(1)}(\pareto, \cascadeSeqDet) \), \( \genM^{(1)}(\pareto, \cascadeSeqDet) \), and \( \demM^{(1)}(\pareto, \cascadeSeqDet) \) are SRC functions of \( \pareto \) on \( \mathbb{R}_+^{\nv} \setminus \{\bm{0}\} \). Moreover, since $\bm{\exc{}}^{(1)}(\pareto,\cascadeSeqDet)$, defined in \eqref{exceedance}, is a continuous function of $\flowM^{(1)}(\pareto,\cascadeSeqDet)$ and $\flowLimVec^{(1)}(\pareto, \cascadeSeqDet)$, it follows that $\bm{\phi{}}^{(1)}(\pareto,\cascadeSeqDet)$ is also SRC w.r.t.\ $\pareto$.

\vspace{0.3cm}
\noindent\textbf{Inductive step.} Suppose that for some \( t \in \mathbb{N} \), all five functions \( \flowM^{(t)}(\pareto, \cascadeSeqDet) \), \( \flowLimVec^{(t)}(\pareto, \cascadeSeqDet) \), \( \bm{\exc{}}^{(t)}(\pareto, \cascadeSeqDet) \), \( \genM^{(t)}(\pareto, \cascadeSeqDet) \), and \( \demM^{(t)}(\pareto, \cascadeSeqDet) \) are SRC. We prove that the same holds at stage \( t+1 \).

\emph{Flow capacity.}  Let \( (\pareto^l)_{l \in\N} \subset \mathbb{R}_+^{\nv} \) be a sequence such that \( \pareto^l \geq \pareto^\infty \) for all \( l \) and \( \pareto^l \to \pareto^\infty \neq \bm{0} \), and partition the set of failures into partial and complete failures $c^{(t)} = c_P^{(t)}\cup c_R^{(t)}$. If an edge $e$ does not fail in step $t+1$, i.e., \( e \notin \cascadeSeqDet^{(t)} \), then \( \flowLim{e}^{(t+1)}(\pareto^l, \cascadeSeqDet) = \flowLim{e}^{(t)}(\pareto^l, \cascadeSeqDet) \), and the SRC property follows by induction.  
If $e$ fails completely, i.e., $e\in \cascadeSeqDet_R^{(t)}$, then $\flowLim{e}^{(t+1)}(\pareto, c) =  0$ for any $\pareto$, hence it is also SRC w.r.t.\ $\pareto$.  
Finally, if $e$ fails partially, i.e.,  \( e \in \cascadeSeqDet_P^{(t)} \), then using the update rule \eqref{flowCapFormula}, we have that
\[
\flowLim{e}^{(t+1)}(\pareto^l, \cascadeSeqDet) = \failFactor_e^{(t)}\left( \exc{e}^{(t)}(\pareto^l, \cascadeSeqDet) \right) \cdot \flowLim{e}^{(t)}(\pareto^l, \cascadeSeqDet).
\]
 Applying the induction hypothesis twice and continuity of \( \failFactor_e^{(t)} \),
\begin{align*}
\lim_{l \to \infty} \flowLim{e}^{(t+1)}(\pareto^l, \cascadeSeqDet) &= \lim_{l \to \infty} \left(\failFactor_e^{(t)}\left( \exc{e}^{(t)}(\pareto^l, \cascadeSeqDet) \right) 
\cdot \flowLim{e}^{(t)}(\pareto^l, \cascadeSeqDet)\right) \\
&= \failFactor_e^{(t)}\left( \lim_{l \to \infty} \exc{e}^{(t)}(\pareto^l, \cascadeSeqDet) \right) 
\cdot \lim_{l \to \infty} \flowLim{e}^{(t)}(\pareto^l, \cascadeSeqDet) \\
&\overset{ind.~hyp.}= \failFactor_e^{(t)}\left( \frac{\lim_{l \to \infty} \flow[e]^{(t)}(\pareto^l, \cascadeSeqDet)}{\lim_{l\to \infty} \flowLim{e}^{(t)}(\pareto^l, \cascadeSeqDet)} \right) 
\cdot \flowLim{e}^{(t)}(\pareto^\infty, \cascadeSeqDet) \\
&\overset{ind.~hyp.}= \failFactor_e^{(t)}\left( \exc{e}^{(t)}(\pareto^\infty, \cascadeSeqDet) \right) \cdot \flowLim{e}^{(t)}(\pareto^\infty, \cascadeSeqDet) \\
&= \flowLim{e}^{(t+1)}(\pareto^\infty, \cascadeSeqDet).
\end{align*}
Thus, \( \flowLimVec^{(t+1)} \) is SRC.

\emph{Sink requirement and source production.}  
From \eqref{demGenCascade}, the updates to \( \demM^{(t+1)} \) and \( \genM^{(t+1)} \) are continuous functions of the previous-stage values $\demM^{(t)}$ and $\genM^{(t)}$, which are SRC w.r.t.\ $\pareto$. Since continuity preserves SRC, it follows that the functions \( \demM^{(t+1)} \) and \( \genM^{(t+1)} \) are SRC w.r.t.\ $\pareto$.

\emph{Flow matrix.}  
We have
\[
\flowM^{(t+1)}(\pareto, \cascadeSeqDet) = \flowM^*\left( \bm{U}^{(t+1)}(\pareto, \cascadeSeqDet), \flowLimVec^{(t+1)}(\pareto, \cascadeSeqDet) \right),
\]
where \( \bm{U}^{(t+1)} = \demM^{(t+1)} - \genM^{(t+1)} \). By Lemma~\ref{contFlow}, the flow mapping \( \flowM^* \) is continuous in both arguments. Since the inputs are SRC, the output is SRC as well. 

\emph{Exceedance.} Consider any edge $e\in \EE$, that has not failed completely at any prior or current stage. Then $e$ remains in the graph and its capacity is strictly positive at every stage. In particular, since each partial failure multiplies capacity by a factor in $[l_e,1]$, there exists $\varepsilon_e>0$ (e.g., $\varepsilon_e:=(l_e)^{t+1}\,\flowLim{e}^{(1)}$) such that $\flowLim{e}^{(t+1)}(\pareto^l,\cascadeSeqDet) \geq \varepsilon_e $. This implies that $\flowLim{e}^{(t+1)}(\pareto^\infty,\cascadeSeqDet) \geq \varepsilon >0$. Consequently, the exceedances $\exc{e}^{(t+1)}(\pareto^l, \cascadeSeqDet)$ and $\exc{e}^{(t+1)}(\pareto^\infty, \cascadeSeqDet)$ are well defined. Hence, $\bm{\phi{}}^{(t+1)}(\pareto,\cascadeSeqDet)$ is SRC, which again follows from the continuity of $\bm{\exc{}}^{(t+1)}$ in $\flowM^{(t+1)}$ and $\flowLimVec^{(t+1)}$ and the SRC properties of the flow matrix $\flowM^{(t+1)}(\pareto, \cascadeSeqDet)$ and the flow capacity vector $\flowLimVec^{(t+1)}(\pareto, \cascadeSeqDet)$.

This completes the induction, and the result follows.
\end{proof}
\subsection{Probabilistic properties of cascades} \label{subsec:probabilityCascadeProperties}
Up to this point, we have established that at every phase of the model, all functions of interest---namely, the flow matrix, flow capacity vector, source production matrix, and sink requirement matrix---are SRC and scale-invariant with respect to $\pareto$. We now turn to verifying that Statement~\hyperlink{assumptionPart2}{2} of Assumption~\ref{assumption} holds under the setting of Propositions~\ref{propClassFunc1} and \ref{propClassFunc2}. In particular, the next result shows that the probability of observing a particular cascade $\cascadeSeqRand = \cascadeSeqDet$, given the vertex weight vector $\pareto$, is SRC in~$\pareto$ and does not depend on its scale $\omega > 0$. The following lemma states this formally.
\begin{lemma}
    Suppose that $\genM^*$ and $\flowM^*$ are given by the minimal-cost source production and flow functions, respectively. Further assume that there are no production capacity constraints, i.e., $\genMLim = \infty$, and that either:
\begin{enumerate}[label=(\roman*)]
    \item the flow cost function is quadratic, as defined in \eqref{flowCost} with $\beta = 2$, or
    \item the flow cost is given by \eqref{wardrop} and there are no flow capacity constraints, i.e., $\lambda^{(1)} = \infty$.
\end{enumerate} Consider an arbitrary cascade sequence $\cascadeSeqDet\in \cascadeSet$. Then, scale $\omega>0$ does not influence the probability that cascade $\cascadeSeqDet$ occurs, given vertex weight vector $\omega\pareto$. In particular,
    \[\PR{\cascadeSeqRand = \cascadeSeqDet~|~ \omega \pareto} = \PR{\cascadeSeqRand = \cascadeSeqDet~|~\pareto} ~ \forall ~\omega>0.\]
    Moreover, the probability that cascade $\cascadeSeqRand = \cascadeSeqDet$ occurs, given $\pareto$ is an SRC function with respect to $\pareto$.
    \label{cascadeProb}
\end{lemma}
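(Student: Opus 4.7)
The plan is to write the conditional probability $\PR{\cascadeSeqRand=\cascadeSeqDet\mid \pareto}$ as a finite product of factors that are each either independent of $\pareto$ or a continuous function of an exceedance quantity $\exc{e}^{(t)}$, and then invoke Lemmas~\ref{scaleInvCascade} and \ref{SRCCascade}. Specifically, since the cascade sequence $\cascadeSeqDet = (\cascadeSeqDet_P^{(t)}\cup \cascadeSeqDet_R^{(t)})_{t\in \N}$ is a finite object (because $\cascadeSet$ is finite), and at each stage $t\geq 2$ the failure outcome of every edge is conditionally independent of the outcomes at all other edges given the history, one can factor
\begin{equation*}
\PR{\cascadeSeqRand=\cascadeSeqDet\mid \pareto}
= p^{(1)}\!\left(\bm{x}^{(1)}(\cascadeSeqDet)\right)\cdot \prod_{t\geq 2}\prod_{e\in \EE} \pi_{e}^{(t)}\!\left(\exc{e}^{(t)}(\pareto,\cascadeSeqDet)\right),
\end{equation*}
where $\bm{x}^{(1)}(\cascadeSeqDet)\in \{0,\mathrm{P},\mathrm{R}\}^{\nee}$ is the initial failure configuration prescribed by $\cascadeSeqDet^{(1)}$, and $\pi_e^{(t)}$ equals $p_{e,c}$, $p_{e,r}$, or $1-p_{e,c}-p_{e,r}$ according to the prescribed outcome for edge $e$ at stage $t$ (with the convention that $\pi_e^{(t)}\equiv 1$ for edges that have been removed at an earlier stage or have reached their maximum allowed failures $n_e$). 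The product is finite because only finitely many stages $t\leq |\cascadeSeqDet|$ carry non-trivial factors, given that $\cascadeSet$ is finite.

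For scale invariance, observe that $p^{(1)}$ is a fixed probability mass function on $\{0,\mathrm{P},\mathrm{R}\}^{\nee}$ and does not depend on $\pareto$. By Lemma~\ref{scaleInvCascade} (see in particular Equation~\eqref{exceedanceOmega} and its extension in the inductive step), for every $\omega>0$ and every $t$ and $e$,
\begin{equation*}
\exc{e}^{(t)}(\omega \pareto, \cascadeSeqDet) = \exc{e}^{(t)}(\pareto, \cascadeSeqDet).
\end{equation*}
Each factor $\pi_e^{(t)}(\exc{e}^{(t)}(\omega\pareto,\cascadeSeqDet))$ therefore coincides with $\pi_e^{(t)}(\exc{e}^{(t)}(\pareto,\cascadeSeqDet))$, and the product decomposition gives $\PR{\cascadeSeqRand=\cascadeSeqDet\mid \omega \pareto}=\PR{\cascadeSeqRand=\cascadeSeqDet\mid \pareto}$ for all $\omega>0$.

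For the SRC claim, fix $\pareto^\infty \in \R_+^{\nv}\setminus\{\bm 0\}$ and a sequence $\pareto^l\to \pareto^\infty$ with $\pareto^l\geq \pareto^\infty$ for all $l$. By Lemma~\ref{SRCCascade}, each exceedance $\exc{e}^{(t)}(\cdot,\cascadeSeqDet)$ is SRC in $\pareto$, and by assumption the functions $p_{e,c}$ and $p_{e,r}$ are continuous on $\R_+$, so $1-p_{e,c}-p_{e,r}$ is continuous as well. Composition of a continuous function with an SRC function is SRC, and finite products of SRC functions are SRC; hence every factor and therefore the entire expression above converges to its value at $\pareto^\infty$ as $l\to \infty$. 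The main technical care required is in the bookkeeping of the factorization, specifically in verifying that the conditional independence across edges at each stage is preserved when we condition on $\{\cascadeSeqRand = \cascadeSeqDet\}$ rather than on the full filtration, and in handling the edge cases where an edge has been removed or has reached $n_e$ failures (in which case the corresponding factor is trivially $1$ and the exceedance is immaterial). Once this bookkeeping is in place, the two claims follow immediately from the scale-invariance and SRC properties of the exceedances established in the preceding subsections.
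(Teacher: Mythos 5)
Your proposal is correct and follows essentially the same route as the paper: the paper likewise decomposes $\PR{\cascadeSeqRand=\cascadeSeqDet\mid\pareto}$ stage by stage via the chain rule (phrased as an induction on the failure history rather than an explicit product), reduces each factor to $p^{(1)}$ or to $p_{e,c}$, $p_{e,r}$ evaluated at exceedances, and then invokes the scale invariance of $\bm{\exc{}}^{(t)}$ from Lemma~\ref{scaleInvCascade} (Equation~\eqref{exceedanceOmega}) and the SRC property from Lemma~\ref{SRCCascade} together with the continuity of $p_{e,c}$ and $p_{e,r}$. The bookkeeping you flag (edges at their failure cap, conditional independence across edges) is handled in the paper by the indicator $\vmathbb{1}\{\nu_e^{(t)}\neq n_e\}$ and the assumed independence of per-edge failure outcomes, exactly as in your convention $\pi_e^{(t)}\equiv 1$.
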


\begin{proof}
    Let $\omega>0$. Consider a particular cascade $\cascadeSeqDet = (\cascadeSeqDet^{(1)},\dots, \cascadeSeqDet^{(t_\cascadeSeqDet)})$, with $t_\cascadeSeqDet := |\cascadeSeqDet|$. Recall that $\cascadeSeqDet^{(t)} = \cascadeSeqDet_P^{(t)}\cup\cascadeSeqDet_R^{(t)}$ denotes the set of edges that have failed at time step $t\in \{1,\dots,t_\cascadeSeqDet\}$, where $\cascadeSeqDet_P^{(t)}$ and $\cascadeSeqDet_R^{(t)}$ denote the sets of partial and complete failures, respectively. To establish the independence of the probability of a particular cascade on the scaling factor $\omega$, by induction argument, it suffices to prove the following claim: conditioned on a fixed history of edge failures, the probability that any edge \( e \in \EE \) fails partially or completely at stage \( t \) remains unchanged under any positive scaling of the vertex weight vector. More precisely, for all \( t \in \{2, \dots, t_\cascadeSeqDet\} \), any failure history \( \{\cascadeSeqRand^{(1)} = \cascadeSeqDet^{(1)}, \dots, \cascadeSeqRand^{(t-1)} = \cascadeSeqDet^{(t-1)}\} \), and any scaling factor \( \omega > 0 \), we have
\begin{multline}
\PR{e \text{ fails partially at stage } t \mid \omega \pareto, \cascadeSeqRand^{(1)} = \cascadeSeqDet^{(1)}, \dots, \cascadeSeqRand^{(t-1)} = \cascadeSeqDet^{(t-1)})}
\\=
\PR{e \text{ fails partially at stage } t \mid \pareto,\cascadeSeqRand^{(1)} = \cascadeSeqDet^{(1)}, \dots, \cascadeSeqRand^{(t-1)} = \cascadeSeqDet^{(t-1)})}
\label{eq:failPartial_prob_equal}
\end{multline}
\begin{multline}
\PR{e \text{ fails completely at stage } t \mid \omega \pareto, \cascadeSeqRand^{(1)} = \cascadeSeqDet^{(1)}, \dots, \cascadeSeqRand^{(t-1)} = \cascadeSeqDet^{(t-1)})}
\\=
\PR{e \text{ fails completely at stage } t \mid \pareto,\cascadeSeqRand^{(1)} = \cascadeSeqDet^{(1)}, \dots, \cascadeSeqRand^{(t-1)} = \cascadeSeqDet^{(t-1)})}
\label{eq:failComplete_prob_equal}
\end{multline}

In our framework, edge failures are governed by exceedance probabilities: an edge \( e \) that has not yet failed its maximum $n_e$ times, may fail independently with probability \( p_{e,c}(\exc{e}^{(t)}) + p_{e,r}(\exc{e}^{(t)}) \), where \( \exc{e}^{(t)} \) denotes the exceedance experienced by edge \( e \) at stage \( t \). Let $\nu^{(t)}_e(c)$ denote the number of times edge $e$ has failed (either partially or completely) before time $t$ under cascade $c$. Thus, the conditional partial failure probability can be expressed as
\begin{multline*}
\PR{e \text{ fails partially at stage } t \mid \omega \pareto, \cascadeSeqRand^{(1)} = \cascadeSeqDet^{(1)}, \dots, \cascadeSeqRand^{(t-1)} = \cascadeSeqDet^{(t-1)})}\\
=
p_{e,c}\left(\exc{e}^{(t)}(\omega \pareto, \cascadeSeqDet^{(1)}, \dots,  \cascadeSeqDet^{(t-1)})\right) \cdot \vmathbb{1}\{\nu_e^{(t)} \neq n_e\}.
\end{multline*}
Recall from the proof of Lemma~\ref{scaleInvCascade} that the exceedance vector \( \bm{\exc{}}^{(r)} \) is invariant under scaling $\omega$ of \( \pareto \), see~\eqref{exceedanceOmega}. That is, 
\[
\bm{\exc{}}^{(r)}(\omega \pareto, \cascadeSeqDet^{(1)}, \dots,  \cascadeSeqDet^{(r-1)})
=
\bm{\exc{}}^{(r)}(\pareto, \cascadeSeqDet^{(1)}, \dots,  \cascadeSeqDet^{(t-1)}).
\]
Thus, it follows immediately that Equation~\eqref{eq:failPartial_prob_equal} holds. Using an analogous approach, we can show that Equation~$\eqref{eq:failComplete_prob_equal}$ holds as well.

Finally, note that the initial edge failures are selected at random according to the probability law $p^{(1)}(\cdot)$ and independently of \( \omega \). Therefore, the entire failure sequence remains distributionally invariant under positive rescaling of \( \pareto \), and the result follows by inductively applying the equalities in \eqref{eq:failPartial_prob_equal} and \eqref{eq:failComplete_prob_equal} across all stages of the cascade.

    It remains to show that $\PR{\cascadeSeqRand=\cascadeSeqDet~|~\pareto}$ is SRC. Note that $p_{e,c}(\cdot)$ and $p_{e,r}(\cdot)$ are by assumption continuous functions (see Step 4 in Section~\ref{emergencyPhase}) and from Lemma \ref{SRCCascade} it follows that $\exc{e}^{(t)}(\pareto, \cascadeSeqDet)$ is SRC w.r.t.\ $\pareto$ for all $t\in \N$ and all $e\in \EE$. Hence, since $\PR{\cascadeSeqRand = \cascadeSeqDet~|~\pareto}$ is a product of SRC functions, it is itself an SRC function of~$\pareto$.
\end{proof}
\section{Proofs of Propositions \ref{propClassFunc1} and \ref{propClassFunc2}} \label{subsec:proofsPropositions}
With all auxiliary results established in Section~\ref{proofs}, the proofs of Propositions~\ref{propClassFunc1} and~\ref{propClassFunc2} follow by verifying the conditions of Assumption~\ref{assumption}. These conditions rely on the scale-invariance, SRC, and boundedness of the cascade cost function, as well as the SRC and invariant behavior of the cascade probability. 

First, we prove Proposition~\ref{propClassFunc1} by consecutively applying the results derived so far.
\begin{proof}[Proof of Proposition \ref{propClassFunc1}.]
Consider the functions $\flowM^*$, $\genM^*$, $\genMLim$, and $\cascadeCost$ as described in Statements 1--4 in the proposition. In order to show that Assumption \ref{assumption} is satisfied, we first need to show that $\cascadeCost$ is $\rho$-scale-invariant and SRC w.r.t.\ $\pareto$, given that a cascade $\cascadeSeqRand = \cascadeSeqDet$ occurs. Recall that $\cascadeCost$ is the generalized cascade size, defined in~\eqref{generalizedFailureCost}, i.e., 
\begin{equation*}
\cascadeCost = \sum_{v \in \V} \sum_{k \in \K} \constCostFunction{v,k} \left( \dem[v,k]^{(1)} - \dem[v,k]^{(end)} \right)^\rho.
\end{equation*}
Consider a fixed cascade $\cascadeSeqDet \in \cascadeSet$. From Lemma~\ref{scaleInvCascade}, we know that the sink requirements matrix is scale-invariant at every stage of the cascade. Therefore, we find that for any $\omega>0$
\[\cascadeCost(\omega\pareto, \cascadeSeqDet) = \sum_{v \in \V} \sum_{k \in \K} \constCostFunction{v,k} \left( \dem[v,k]^{(1)}(\omega\pareto, \cascadeSeqDet) - \dem[v,k]^{|\cascadeSeqDet|}(\omega \pareto, \cascadeSeqDet) \right)^\rho = \omega^\rho \cascadeCost(\pareto, \cascadeSeqDet).\]
Hence, indeed, $\cascadeCost$ is $\rho$-scale-invariant. 

To prove the SRC property of $\cascadeCost$, we first observe that $\cascadeCost$ is a continuous function of $\demM^{(1)}$ and $\demM^{(end)}$. Moreover, due to Lemma~\ref{SRCCascade}, we have that the sink requirement matrices $\demM^{(1)}(\pareto,\cascadeSeqDet)$ and $\demM^{(end)}(\pareto,\cascadeSeqDet)$ are SRC w.r.t.\ $\pareto$. Since continuity preserves the SRC property, we conclude that $\cascadeCost$ is also an SRC function of $\pareto$ for, conditioned on a fixed cascade $\cascadeSeqRand = \cascadeSeqDet$. 

It remains to show that $\cascadeCost$ is appropriately bounded. For the conditional cascade cost $\cascadeCost$ given $\pareto $, denoted by $\cascadeCost(\pareto)$, we find that
\[\cascadeCost(\pareto)\leq \sum_{v\in \V}\sum_{k\in \K} \constCostFunction{v,k}\left(\dem[v,k]^{(1)}(\pareto)\right)^\rho =  \sum_{v\in \V}\sum_{k\in \K} \constCostFunction{v,k} q_{v,k}^\rho \paretoI{v}^\rho \leq \nk \max_{v\in\V, k\in \K} \constCostFunction{v,k}q_{v,k}^\rho \sum_{v\in \V} \paretoI{v}^\rho. \]
Thus, setting $\boundZ = \max_{v\in \V,k\in\K}\constCostFunction{v,k}q_{v,k}^\rho$ we obtain that
\[\cascadeCost(\pareto)\leq \boundZ\cdot \sum_{v\in \V}\paretoI{v}^\rho.\] 
This proves that Statement~\hyperlink{assumptionPart1}{1} of Assumption~\ref{assumption} is satisfied. Moreover, Statement~\hyperlink{assumptionPart2}{2} of the assumption is satisfied thanks to Lemma~\ref{cascadeProb}, which completes the proof.
\end{proof}  
We use a similar proof approach to prove Proposition~\ref{propClassFunc2}. However, unlike the previous case, this result also allows $\cascadeCost$ to be given by the cascade flow cost function. Accordingly, we must also establish additional properties of this cost function.
\begin{proof}[Proof of Proposition \ref{propClassFunc2}.]
Consider the functions $\flowM^*$, $\genM^*$, $\genMLim$, and $\cascadeCost$ as described in Steps 1--4 in the proposition. If $\cascadeCost$ is the generalized cascade cost, as in \eqref{generalizedFailureCost}, then the proof is analogue to the proof of Proposition~\ref{propClassFunc1}, hence we omit this part of the proof here and only focus on the case of the cascade flow cost, defined in Equation~\eqref{deltaFlowCost}. First, we prove the 1-scale-invariance. Lemma \ref{scaleInvCascade} shows that $\flowM^{(t)}(\omega\pareto, \cascadeSeqDet) = \omega \flowM^{(t)}(\pareto, \cascadeSeqDet)$ and $\flowLimVec^{(t)}(\omega\pareto, \cascadeSeqDet) = \omega \flowLimVec^{(t)}(\pareto, \cascadeSeqDet)$ for every cascade stage $t$. Hence, 
\begin{align*}\cascadeCost(\omega\pareto, \cascadeSeqDet) &= \sum_{e\in \EE}\left(d_e |\flow[e]^{(end)}(\omega \pareto, \cascadeSeqDet)| +\frac{1}{\beta} b_e \flowLim{e}(\omega \pareto, \cascadeSeqDet)^{(end)}\left(\exc{e}^{(end)}(\omega \pareto, \cascadeSeqDet)\right)^\beta \right)\\
&\qquad -\sum_{e\in \EE}\left(d_e |\flow[e]^{(1)}(\omega \pareto, \cascadeSeqDet)| +\frac{1}{\beta} b_e \flowLim{e}(\omega \pareto, \cascadeSeqDet)^{(1)}\left(\exc{e}^{(1)}(\omega \pareto, \cascadeSeqDet)\right)^\beta \right)\\
&= \sum_{e\in \EE}\left(d_e \omega |\flow[e]^{(end)}(\pareto, \cascadeSeqDet)| +\frac{1}{\beta} b_e  \omega \flowLim{e}(\pareto, \cascadeSeqDet)^{(end)}\left(\exc{e}^{(end)}(\pareto, \cascadeSeqDet)\right)^\beta \right)\\
&\qquad -\sum_{e\in \EE}\left( d_e \omega |\flow[e]^{(1)}(\pareto, \cascadeSeqDet)| +\frac{1}{\beta} b_e \omega \flowLim{e}(\pareto, \cascadeSeqDet)^{(1)}\left(\exc{e}^{(1)}(\pareto, \cascadeSeqDet)\right)^\beta \right) = \omega \cascadeCost(\pareto, \cascadeSeqDet),\end{align*}
showing the $1$-scale-invariance of the cascade flow cost. 

The SRC property of $\cascadeCost$ follows from Lemma~\ref{SRCCascade}. Specifically, the lemma yields that, given cascade $\cascadeSeqRand = \cascadeSeqDet$ occurs, the functions $\flowM^{(t)}$, $\flowLimVec^{(t)}$, and $\bm{\exc{}}^{(t)}$ are SRC w.r.t.\ $\pareto$ for every cascade stage $t$. We observe that $\cascadeCost$ is continuous w.r.t.\ the aforementioned functions. Since continuity preserves the SRC property, we obtain that  $\cascadeCost(\pareto, \cascadeSeqDet)$ is SRC w.r.t.\ $\pareto$, given that cascade $\cascadeSeqRand = \cascadeSeqDet$ occurs. 

It remains to show that $\cascadeCost$ is bounded. We leverage the fact that for every commodity $k\in \K$, at any given time the flow of this commodity on edge $e\in \EE$ is bounded by the total requirement of this commodity. In particular,
\[|\flow[e,k]| \leq \sum_{v\in \V} q_{v,k}\paretoI{v}.\]
As a result, 
\[|\flow[e]|\leq \sum_{k\in \K}\sum_{v\in \V} q_{v,k}\paretoI{v}\leq \max_{v\in \V, k\in \K} \{q_{v,k}\} \sum_{v\in \V} \paretoI{v}.\]
Next, we bound the flow capacity at cascade stage $r$. Recall that edges that fully failed are removed from the graph. Therefore, the capacity of an edge $e$ that is still a part has been decreased at most by a factor $l_{\min}^r$. Moreover, the operational flow capacity is bounded from below by $\varepsilon_{\min}\sum_{v\in \V} \paretoI{v}$ and from above by $\flowLim{e}^{(0)}$. Therefore,
\[l_{\min}^{r} \varepsilon_{\min}\sum_{v\in \V}\paretoI{v} \leq \flowLim{e}^{(l)}\leq \max\left\{\tau |\flow[e]^{(0)}|, \varepsilon_{\min} \sum_{v\in \V} \paretoI{v}\right\}\leq \max\{\varepsilon_{\min},\tau \max_{v\in\V, k\in \K} \{q_{v,k}\}\}\sum_{v\in \V} \paretoI{v}.\]
Let $\kappa = \max\{\varepsilon_{\min},\tau \max_{v\in\V, k\in \K} \{q_{v,k}\}\}$. Note that each cascade has at most $r^* = \sum_{e\in \EE} n_e$ steps. Using this and applying the above bounds, we obtain that the flow cost at the end of the cascade can be bounded from above as follows
\begin{align*}c_f(\flowM^{(end)}, \flowLimVec^{(end)}) &\leq \sum_{e\in \EE}\left( d_e \kappa \sum_{v\in \V}\paretoI{v} + \frac{1}{\beta}b_e \kappa \sum_{v\in \V} \paretoI{v} \left(\max_{v\in \V, k\in \K} \{q_{v,k}\}/l_{\min}^{r^*}\right)^\beta\right)\leq \boundZ \sum_{v\in \V} \paretoI{v},\end{align*}
with $\boundZ = \nee \cdot \kappa\cdot \max_{e\in \EE} \left\{d_e + \frac{b_e}{\beta}\left(\max_{v\in \V, k\in \K} \{q_{v,k}\}/l_{\min}^{r^*}\right)^\beta\right\}$.
Finally, since flow cost is non-negative, we obtain that
\[\cascadeCost = c_f(\flowM^{(end)}, \flowLimVec^{(end)}) - c_f(\flowM^{(1)}, \flowLimVec^{(1)})\leq c_f(\flowM^{(end)}, \flowLimVec^{(end)})\leq \boundZ\sum_{v\in \V} \paretoI{v}.\]
Thus, we conclude that Statement~\hyperlink{assumptionPart1}{1} of Assumption~\ref{assumption}
is satisfied. Moreover, Statement~\hyperlink{assumptionPart2}{2} of the assumption is satisfied due to Lemma \ref{cascadeProb}, which concludes the proof.
\end{proof}  

\section{Non-quasi-convex graph flows}\label{appendixExample}
In this section, we discuss an example of a graph that yields non-quasi-convex flows. As noted in Section~\ref{subsec:SourceFunctions}, this poses a challenge because, without quasi-convexity, the uniqueness of optimal solutions to the upper-level optimization problem~\eqref{optProd} is no longer guaranteed. This issue motivated the restriction to flow cost functions with parameter $\beta = 2$ in Proposition~\ref{propClassFunc1}.
 \begin{example}[Non-quasi-convexity in the $\bm{K_4}$ graph.]\label{exampleNonQuasi}
Consider the complete graph on four vertices $K_4$\added{ and the cost function in \eqref{flowCost} with $\beta = 3$, $a_e = b_e = 1$ for all $e\in \EE$, i.e., $c_f =\sum_{e\in \EE} \frac{1}{3} |f_e|^3$. For this choice of parameters, the flow cost function does not depend on the flow capacity $\flowLimVec,$ hence we drop this dependency in our notation}. For each edge, we assign a direction towards the vertex with the smaller index, which yields the following incidence matrix:
\[
\incM = \bordermatrix{
      &\textit{(1,2)}   & \textit{(1,3)} & \textit{(1,4)} &  \textit{(2,3)} & \textit{(2,4)} &\textit{(3,4)}\cr
\text{1} & 1 & 1 & 1 & 0 & 0 & 0 \cr
\text{2} & -1 & 0 & 0 & 1 & 1 & 0\cr
\text{3} & 0 & -1 & 0 & -1 & 0 & 1 
\cr
\text{4} &0 & 0 & -1 & 0 & -1 & -1 }.
\]
\deleted{We consider the cost function in \eqref{flowCost} with $\beta = 3$, $a_e = b_e = 1$ for all $e\in \EE$, i.e., $c_f =\sum_{e\in \EE} \frac{1}{3} |f_e|^3$. For this choice of parameters, the flow cost function does not depend on the flow capacity $\flowLimVec,$ hence we drop this dependency in our notation. }

Suppose that $\nk = 1$, i.e., there is only one commodity, and that the sink requirement matrix is given by $\demM = (1,0,0,0)^T$. We aim to show that, in this setting, $|\flowTot^*| = |\flowM^*|$ is not quasi-convex, i.e., \eqref{eq:quasiFlowTot} does not hold. We choose $\genM_1 = (0,1,0,0)^T$ and $\genM_2 = (0,0,1,0)^T$, which yields $\bm{U}_1 := \demM - \genM_1 = (1,-1, 0,0)^T$ and $\bm{U}_2 := \demM - \genM_2 = (1,0, -1,0)^T$. We study the flow $\flowM^*(\bm{U}(\eta))$, where $\bm{U}(\eta):= (1-\eta)\bm{U}_1 + \eta \bm{U_2}$, with $\eta\in [0,1]$. This choice of $\bm{U}(\eta)$ means that one unit of the commodity has to be transported to vertex 1, where a fraction $1-\eta$ of this commodity originates at vertex $2$ and a fraction $\eta$ originates at vertex 3. 

The optimal flows for $\eta\in \{0,\frac{1}{2},1\}$ are given in Table~\ref{tab:optFlowsEx}. Using these results, we observe that Equation~\eqref{eq:quasiFlowTot} does not hold for $\eta = \frac{1}{2}$ and edge $e = \textit{(1,4)}$, since
\begin{equation*}
|F_e(\bm{U}(\eta))| = (\sqrt{5} - 1)/4 \approx 0.309 > 0.292 \approx 1 - \sqrt{2}/2 =  \max\left\{|F_e(\bm{U}(0))|, |F_e(\bm{U}(1))|\right\}.    
\end{equation*}
\renewcommand{\arraystretch}{1.2}
\begin{table}[ht]
    \centering
    \begin{tabular}{|c|c|c|c|c|c|c|}
    \hline
      \diagbox[height = 0.7cm]{ $\eta$}{\textit{edge}}&\textit{(1,2)}   & \textit{(1,3)} & \textit{(1,4)} &  \textit{(2,3)} & \textit{(2,4)} &\textit{(3,4)} \\
      \hline
      $0$ & $\sqrt{2} - 1$ & $1 -\sqrt{2}/2$ &  $1 -\sqrt{2}/2$ & $-1 +\sqrt{2}/2$ & $-1 +\sqrt{2}/2$ & $0$\\
      \hline
      $1/2$ & $(5 - \sqrt{5})/8$ & $(5 - \sqrt{5})/8$& $(\sqrt{5} - 1)/4$&$ 0$& $(1 - \sqrt{5})/8$& $(1-\sqrt{5})/8$\\
      \hline
    $1$ & $1 -\sqrt{2}/2$& $\sqrt{2} - 1$&   $1 -\sqrt{2}/2$&$1 -\sqrt{2}/2$&$0$&$-1 +\sqrt{2}/2$\\
    \hline
    \end{tabular}
    \caption{Values of the optimal flow $\flowM^*(\bm{U}(\eta))$ for each edge and $\eta\in \{0,\frac{1}{2},1\}$, see Figure~\ref{fig:example}.}
    \label{tab:optFlowsEx}
\end{table}

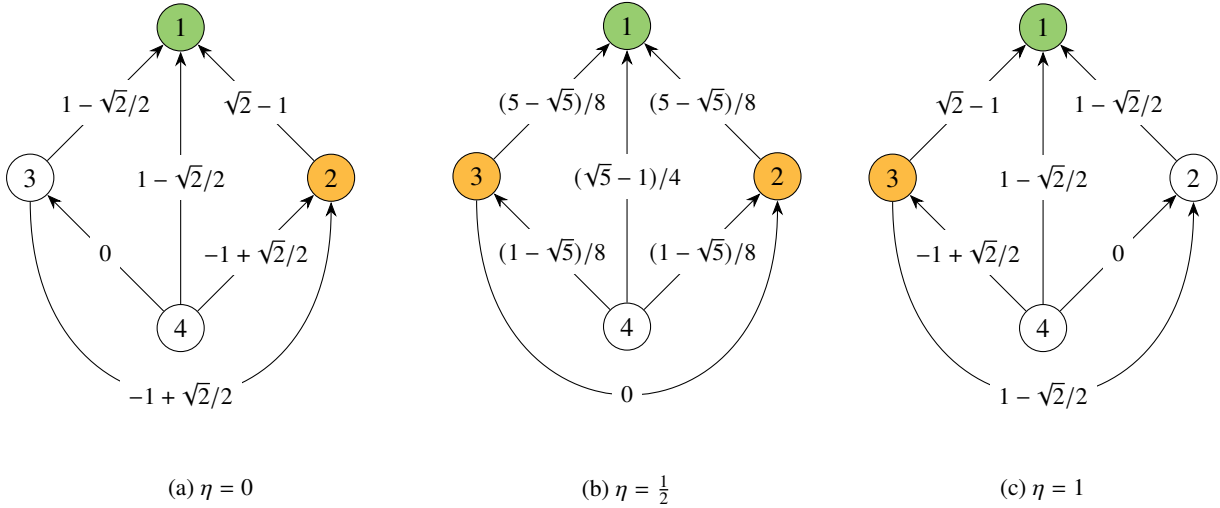
\begin{figure}[ht]
\begin{subfigure}[ht]{0.33\textwidth}
    \begin{tikzpicture}
  \tikzset{vertex/.style = {shape=circle,draw,minimum size=1.5em}}
  \tikzset{edge/.style = {->,> = latex'}}

  \node[vertex, fill = YellowGreen] (1) at (90:2) {1};
  \node[vertex, fill = Dandelion] (2) at (0:2) {2};
  \node[vertex] (3) at (180:2) {3};
  \node[vertex] (4) at (270:2) {4};

  \draw[{Stealth[length=2mm, width=1.5mm]}-] (1) -- node[fill =white]{\footnotesize$\sqrt{2} - 1$} (2); 
  \draw[{Stealth[length=2mm, width=1.5mm]}-] (1) -- node[fill =white]{\footnotesize$1 - \sqrt{2}/2$} (3); 
  \draw[{Stealth[length=2mm, width=1.5mm]}-] (1) -- node[fill = white] {\footnotesize $1 - \sqrt{2}/2$} (4); 
  \draw[{Stealth[length=2mm, width=1.5mm]}-] (2) to [out=-90,in=-90, looseness =2.2] node[fill =white]{\footnotesize$-1 + \sqrt{2}/2$} (3); 
  \draw[{Stealth[length=2mm, width=1.5mm]}-] (2) --  node[fill = white]{\footnotesize$-1 + \sqrt{2}/2$}(4); 
  \draw[{Stealth[length=2mm, width=1.5mm]}-] (3) -- node[fill =white]{\footnotesize 0} (4); 

\end{tikzpicture}
    \caption{$\eta = 0$}
    \label{fig:eta0}
\end{subfigure}
\begin{subfigure}[ht]{0.33\textwidth}
    \centering
    \begin{tikzpicture}
  \tikzset{vertex/.style = {shape=circle,draw,minimum size=1.5em}}
  \tikzset{edge/.style = {->,> = latex'}}

  \node[vertex, fill = YellowGreen] (1) at (90:2) {1};
  \node[vertex, fill = Dandelion] (2) at (0:2) {2};
  \node[vertex, fill = Dandelion] (3) at (180:2) {3};
  \node[vertex] (4) at (270:2) {4};

  \draw[{Stealth[length=2mm, width=1.5mm]}-] (1) -- node[fill =white]{\footnotesize$(5 - \sqrt{5})/8$} (2); 
  \draw[{Stealth[length=2mm, width=1.5mm]}-] (1) -- node[fill =white]{\footnotesize $ (5 - \sqrt{5})/8$} (3); 
  \draw[{Stealth[length=2mm, width=1.5mm]}-] (1) -- node[fill = white] {\footnotesize $(\sqrt{5}-1)/4$} (4); 
  \draw[{Stealth[length=2mm, width=1.5mm]}-] (2) to [out=-90,in=-90, looseness =2.2] node[fill =white]{\footnotesize$0$} (3); 
  \draw[{Stealth[length=2mm, width=1.5mm]}-] (2) --  node[fill = white]{\footnotesize$(1-\sqrt{5})/8$}(4); 
  \draw[{Stealth[length=2mm, width=1.5mm]}-] (3) -- node[fill =white]{\footnotesize $(1 -\sqrt{5})/8$} (4); 
\end{tikzpicture}
    \caption{$\eta = \frac{1}{2}$}
    \label{fig:eta12}
\end{subfigure}
\begin{subfigure}[ht]{0.33\textwidth}
    \centering
    \begin{tikzpicture}
  \tikzset{vertex/.style = {shape=circle,draw,minimum size=1.5em}}
  \tikzset{edge/.style = {->,> = latex'}}

  \node[vertex, fill = YellowGreen] (1) at (90:2) {1};
  \node[vertex] (2) at (0:2) {2};
  \node[vertex, fill = Dandelion] (3) at (180:2) {3};
  \node[vertex] (4) at (270:2) {4};

  \draw[{Stealth[length=2mm, width=1.5mm]}-] (1) -- node[fill =white]{\footnotesize$1 - \sqrt{2}/2$} (2); 
  \draw[{Stealth[length=2mm, width=1.5mm]}-] (1) -- node[fill =white]{\footnotesize $\sqrt{2} - 1$} (3); 
  \draw[{Stealth[length=2mm, width=1.5mm]}-] (1) -- node[fill = white] {\footnotesize$1 - \sqrt{2}/2$} (4); 
  \draw[{Stealth[length=2mm, width=1.5mm]}-] (2) to [out=-90,in=-90, looseness =2.2] node[fill =white]{\footnotesize $1 - \sqrt{2}/2$} (3); 
  \draw[{Stealth[length=2mm, width=1.5mm]}-] (2) --  node[fill = white]{\footnotesize0}(4); 
  \draw[{Stealth[length=2mm, width=1.5mm]}-] (3) -- node[fill =white]{\footnotesize $-1 + \sqrt{2}/2$} (4); 

\end{tikzpicture}
    \caption{$\eta = 1$}
    \label{fig:eta1}
\end{subfigure}
\caption{Optimal flow $\flowM^*(\bm{U}(\eta))$ on the $K_4$ graph for different values of parameter~$\eta$. In each case, one unit of flow needs to be transferred to vertex 1 (green). Depending on the value of $\eta$, the flow originates from vertex 2 and/or 3 (orange). Negative flows represent movement opposite to the direction of the edge.}
\label{fig:example}
\end{figure}
We note that our numerical analysis has also shown non-quasi-convex behavior in $K_4$ for other values of $\beta>2$. This shows that even though the minimum cost flow problem is convex, this property is typically not inherited by the optimizer, not even in a weaker form of quasi-convexity. 

In the remainder of this example, we show full calculations of the optimal flows presented in Table~\ref{tab:optFlowsEx}. First, we note that for every parameter $\eta\in [0,1]$, the corresponding optimal flow is unique, because it is the solution of a convex optimization problem with a strictly convex objective function. The optimal flows for $\eta \in \{0, \frac{1}{2}, 1\}$ are depicted in Figure~\ref{fig:example}. We observe that due to the symmetry of the graph, we can obtain the flows for $\eta = 1$ from the flows for $\eta = 0$, which we derive first.

\textit{Case 1: $\eta = 0$.} We observe that the flow from vertex 2 to vertex 1 can take one of the five paths: $p_1 = 2\rightarrow 1$, $p_2 = 2\rightarrow 3\rightarrow 1$, $p_3 = 2\rightarrow 4\rightarrow 1$, $p_4 = 2\rightarrow 3\rightarrow 4\rightarrow 1$, and $p_5 = 2\rightarrow 4\rightarrow 3\rightarrow 1$. The optimal flow is unique; therefore, because of the symmetry of the graph and the cost function, the amount of flow on the paths $p_2$ and $p_3$ and on the paths $p_4$ and $p_5$ is (pairwise) the same. Note that the only paths that utilize the edge $(3,4)$ are $p_4$ and $p_5$, and, since their flows are equal and counteracting, we know that $F_{(3,4)}^*(\bm{U}(0)) = 0$. To compute the remaining flows, let $x = F_{(1,2)}^*(\bm{U}(0))$. Edges $(1,3)$ and $(1,4)$ are utilized by paths $p_2 + p_4$ and $p_3 + p_5$, respectively. Hence, due to the symmetry of $p_2$ and $p_3$ as well as $p_4$ and $p_5$, we can conclude that the flow on (1,3) and (1,4) must have the same magnitude, i.e.,  $\flowM_{(1,3)}^*(\bm{U}(0)) = \flowM_{(1,4)}^*(\bm{U}(0)).$  In addition, since the net requirement of vertex 1 is equal to 1, $\flowM_{(1,2)}^*(\bm{U}(0))  + \flowM_{(1,3)}^*(\bm{U}(0)) +  \flowM_{(1,4)}^*(\bm{U}(0))  = 1$, it follows that $\flowM_{(1,3)}^*(\bm{U}(0)) = \flowM_{(1,4)}^*(\bm{U}(0)) = \frac{1-x}{2}$.  
Similarly, since vertices 3 and 4 have net requirement equal to 0, i.e., $U_3(0) = U_4(0) = 0$,  their flow balance must be 0, i.e., $$0 = \flowM^*_{(2,3)}(\bm{U}(0)) + \flowM^*_{(1,3)}(\bm{U}(0)) +\flowM^*_{(3,4)}(\bm{U}(0))  = \flowM^*_{(2,3)}(\bm{U}(0)) + \flowM^*_{(1,3)}(\bm{U}(0)),$$ $$0 = \flowM^*_{(2,4)}(\bm{U}(0)) + \flowM^*_{(1,4)}(\bm{U}(0)) +\flowM^*_{(3,4)}(\bm{U}(0))  = \flowM^*_{(2,4)}(\bm{U}(0)) + \flowM^*_{(1,4)}(\bm{U}(0)).$$ Then, altogether we find that \[F_{(1,3)}^*(\bm{U}(0)) = F_{(1,4)}^*(\bm{U}(0)) = -F_{(2,3)}^*(\bm{U}(0)) = -F_{(2,4)}^*(\bm{U}(0)) = \tfrac{1 - x}{2}.\]
Having this, we can determine $x$, by finding the minimum-flow cost. The flow cost is given by
\[c_f(\flowM^*(\bm{U}(0))) = x^3 + 4\cdot \left(\tfrac{1 - x}{2}\right)^3 \]
and its unique positive minimizer is given by $x = \sqrt{2} - 1$. This yields the optimal flow $$\flowM^*(\bm{U}(0)) = (\sqrt{2} - 1, 1 -\sqrt{2}/2,  1 -\sqrt{2}/2,-1 +\sqrt{2}/2,-1 +\sqrt{2}/2,0)^T,$$ as illustrated in Figure~\ref{fig:example}. By the symmetry of the graph, we can also immediately conclude that $$\flowM^*(\bm{U}(1)) = (1 -\sqrt{2}/2, \sqrt{2} - 1,   1 -\sqrt{2}/2,1 -\sqrt{2}/2,0,-1 +\sqrt{2}/2)^T.$$

\textit{Case 2: $\eta = \frac{1}{2}$.} We again leverage the symmetry of the problem to compute the minimal-cost flows. First, we notice that vertices $2$ and $3$ have the same net requirement equal to $-\frac{1}{2}$, making them structurally indistinguishable in the graph, because of the symmetries. This implies that $F_{(2,3)}^*(\bm{U}(\frac{1}{2})) = 0$, as the optimal flow on any path through edge (2,3) would be counteracted by the same magnitude flow in the opposite direction. Next, let $x = F_{(1,2)}^*(\bm{U}(\frac{1}{2}))$. Then, by symmetry, $$F_{(1,3)}^*\left(\bm{U}\left(\tfrac{1}{2}\right)\right) = x, \quad-F_{(2,4)}^*\left(\bm{U}\left(\tfrac{1}{2}\right)\right) = -F_{(3,4)}^*\left(\bm{U}\left(\tfrac{1}{2}\right)\right) = \frac{1}{2} - x,\quad \text{and} \quad F_{(1,4)}^*\left(\bm{U}\left(\tfrac{1}{2}\right)\right) = 1 - 2x,$$ where the last equality arises from the fact that $F_{(1,2)}^*(\bm{U}(\frac{1}{2})) + F_{(1,3)}^*(\bm{U}(\frac{1}{2})) + F_{(1,4)}^*(\bm{U}(\frac{1}{2})) = 1$. With this, we obtain the following flow cost
\[c_f\left(\flowM^*\left(\bm{U}\left(\tfrac{1}{2}\right)\right)\right) = 2\cdot x^3 + 2\cdot \left(\tfrac{1}{2} - x\right)^3 + \left(1 - 2x\right)^3.\] 
Since $x = (5 - \sqrt{5})/8$ is the unique positive minimizer of the above cost function, this yields $$\flowM^*(\bm{U}(\tfrac{1}{2})) = ((5 - \sqrt{5})/8, (5 - \sqrt{5})/8, (\sqrt{5} - 1)/4, 0, (1 - \sqrt{5})/8, (1-\sqrt{5})/8)^T.$$
\hfill$\lozenge$
\end{example}

\section{Proof of Lemma~\ref{DCPowerFlow}} \label{app:KKTproof}
\begin{proof}
    We prove the lemma by showing that the DC power flow $\flowM^* = \bm{V}\bm{U}$ satisfies the Karush–Kuhn–Tucker (KKT) conditions of the minimal-cost flow problem specified in the lemma. First, we observe that the cost function can be written as 
\[\frac{1}{2}\left(\bm{S}^{-1/2}\flowM\right)^T\left(\bm{S}^{-1/2}\flowM\right)\]
because $\bm{S}$ is a diagonal matrix with positive entries.
Hence, the corresponding KKT conditions of \eqref{optFlow} are given as follows: 
\begin{subequations}
    \begin{gather}
        \bm{S}^{-1}\flowM + \incM^T\mu = 0,\label{eq1}\\
        \incM \flowM = \bm{U},\label{eq2}
    \end{gather}
\end{subequations}
for some $\mu \in \R^{\nv}$. Recall that $\bm{V} = \bm{S}\incM^T\bm{L}^+$ and $\bm{L} = \incM \bm{S}\incM^T$. Choosing $\flowM = \bm{V}\bm{U}$ and $\mu = - \bm{L}^+\bm{U}$, we find that
\[\bm{S}^{-1}\bm{V}\bm{U} - \incM^T\bm{L}^+\bm{U}= \incM^T\bm{L}^+\bm{U} - \incM^T\bm{L}^+\bm{U}  = 0 \]
and \[\incM\bm{V}\bm{U} = \incM\bm{S}\incM^T\left(\incM\bm{S}\incM^T\right)^+\bm{U} = \bm{U}.\]
Note that the last equality is true because \begin{itemize}
    \item $\incM\bm{S}\incM^T\left(\incM\bm{S}\incM^T\right)^+$ is the orthogonal projection onto the image of $\incM\bm{S}\incM^T$,
    \item the image of $\incM\bm{S}\incM^T$ is given by $\{\bm{x}\in\R^{\nv} : \e{_{\nv}}^Tx = 0 \}$ since $\text{rank}(\incM\bm{S}\incM^T) = \text{rank}(\incM)$ = $\nv - 1$ and $\incM\bm{S}\incM^T\e{_{\nv}} = 0$,
    \item $\e{_{\nv}}^T\bm{U} = 0$ by construction, which implies that each column of $\bm{U}$ lies in the image of  $\incM\bm{S}\incM^T$ and therefore remains unchanged by the projection.
\end{itemize}
Hence, we conclude that the proposed choice of $\flowM$ and $\mu$ satisfies \eqref{eq1} and \eqref{eq2}.  Since \eqref{optFlow} is an equality-constrained convex problem, Slater's condition is satisfied, implying strong duality for \eqref{optFlow} \citep{boyd2004convex}. Last but not least, the cost function is strictly convex, which means that $\eqref{optFlow}$ has a single optimal solution. These three properties together imply that the DC power flow $\flowM^* = \bm{V}\bm{U}$ is the optimal solution of the problem in question~\citep{boyd2004convex}.
\end{proof}

\end{document}